\numberwithin{equation}{section}
\numberwithin{figure}{section}
\theoremstyle{plain}
\newtheorem{thm}{\protect\theoremname}[section]
  \theoremstyle{plain}
  \newtheorem{lem}[thm]{\protect\lemmaname}
\newcommand{\ds}{\displaystyle}
\def\R{\mathbb R}
\def\C{\mathbb C}
\numberwithin{equation}{section}
\theoremstyle{definition}
\newtheorem{rem}[thm]{Remark}
\newtheorem{prop}[thm]{Proposition}
  \providecommand{\lemmaname}{Lemma}
\providecommand{\theoremname}{Theorem}
\begin{document}

\title{\title[ linearly coupled Schr\"{o}dinger systems]
~Infinitely many solutions to linearly coupled Schr\"{o}dinger systems with non-symmetric potentials}

\author{Chun-Hua Wang, Jing Yang}

\date{\today}

\address{[Chunhua Wang] School of Mathematics and Statistics, Central China
Normal University, Wuhan 430079, P. R. China. }

\email{[Chunhua Wang] chunhuawang@mail.ccnu.edu.cn}

\address{[Jing Yang] School of Mathematics and Statistics, Central China
Normal University, Wuhan 430079, P. R. China.}

\email{[Jing Yang] yyangecho@163.com}
\begin{abstract}
We study a linearly coupled Schr\"{o}dinger system in $\R^N(N\leq3).$
Assume that the potentials in the system are continuous functions satisfying suitable decay assumptions,
but without any symmetry properties and the parameters in the system satisfy some suitable restrictions.
Using the Liapunov-Schmidt reduction methods two times and combing localized energy method, we prove that the  problem has infinitely many positive synchronized
solutions, which extends the result Theorem 1.2 about nonlinearly coupled Schr\"{o}dinger equations
in \cite{aw} to our linearly coupled problem.
\end{abstract}

\maketitle
{\small
\noindent {\bf Keywords:} linearly coupled; non-symmetric potentials; synchronized
solutions.

\smallskip

\tableofcontents{}

\section{ Introduction and main result }\label{s1}

In this paper, we consider the following
nonlinear Schr\"{o}dinger system in $\R^N(N\leq3),$
\begin{equation}\label{1.1}
\left\{%
\begin{array}{ll}
    -\Delta u+(1+\epsilon P(x))u=u^3+\beta v,   & \hbox{$x\in \R^N$}, \vspace{0.2cm}\\
    -\Delta v+(1+\epsilon Q(x))v=v^3+\beta u, & \hbox{$x\in\R^N$},
\end{array}%
\right.\end{equation}
where the potentials $P(x),Q(x)$ are continuous functions satisfying suitable decay assumptions,
but without any symmetry properties, $\epsilon$ is a positive constant, $\beta\in \R$ is a coupling constant. We are mainly interested in the existence of infinitely
many positive synchronized solutions of system \eqref{1.1}.

Systems of nonlinear Schr\"{o}dinger equations have been received a great deal of attention and significant progress has been made in recent years. The results one can achieve depend on
the way that the system is coupled. The case in which the coupling
is nonlinear has been studied extensively, which is motivated
by applications to nonlinear optic and Bose-Einstein condensation. See for example
\cite{ac,ac2,aw,bd,bw,msg,dw,lw,lw2,lw3,mmp,ntt,pw,s,wt,ww} and references therein.
Recently in \cite{pw}, Peng and Wang studied the following
nonlinearly coupled Schr\"{o}dinger equations
\begin{equation}\label{pq2}
\left\{%
\begin{array}{ll}
    -\Delta u+P(x)u=\mu_{1}|u|^2u+\beta v^{2}u,   & \hbox{$x\in \R^3$}, \vspace{0.2cm}\\
    -\Delta v+Q(x)v=\mu_{2}|v|^2v+\beta u^{2}v,& \hbox{$x\in\R^3$}.
\end{array}%
\right.
\end{equation}
Applying the finite reduction method, they obtained the existence of infinitely many solutions of segregated or synchronized type for radial symmetric potentials $P(|x|)$, $Q(|x|)$ satisfying some algebra decay assumptions.
For $N=2$, the first auhtor ect in \cite{wxzz}
constructed an unbounded sequence of non-radial positive vector solutions
 of segregated type when $\beta$ is in some suitable interval, which gives a positive answer
 to an interesting problem raised by Peng and Wang in Remark 4.1 in \cite{pw}.
In \cite{aw}, Ao and Wei obtained the existence of infinitely many solutions \eqref{pq2}
for nonsymmetric potentials $P(x)$, $Q(x)$ satisfying some exponential decay assumptions.

 In this paper, we are mainly interested in a class of nonlinear
 Schr\"{o}dinger equations which are linearly coupled. Systems of
 this type arise in nonlinear optics.
 For example, the propagation of optical pulses in nonlinear
 dual-core fiber can be described by two linearly
 coupled nonlinear Schr\"{o}dinger equations like
 \begin{equation}\label{1.2}
\left\{%
\begin{array}{ll}
-i\frac{\partial}{\partial t}\Phi_1=\Delta\Phi_1-V_1(x)\Phi_1+|\Phi_1|^2\Phi_1
+\beta\Phi_2,~~&\text{in}~\R^N, \vspace{0.2cm}\\
-i\frac{\partial}{\partial t}\Phi_2=\Delta\Phi_2-V_2(x)\Phi_1+|\Phi_1|^2\Phi_2
+\beta\Phi_1,~~&\text{in}~\R^N,\vspace{0.2cm}\\
    \Phi_j=\Phi_j(x,t)\in \C, t>0,&j=1,2,
\end{array}%
\right.
\end{equation}
where $N\leq 3,\Phi_1(t,x)$ and $\Phi_2(t,x)$ are the complex valued envelope functions, and $\beta$
is the coupling coefficient between the two cores, whose sign
determines whether the interactions of fiber couplers are repulsive or attractive. In the attractive case the components of a vector solution tend to go along with each other leading
to synchronization, and in the repulsive case the components tend to segregate with each
other leading to phase separations. These phenomena have been documented in numeric
simulations (e.g.,\cite{aa} and references therein).

We will look for standing waves of the form
 \begin{equation}\label{1.3}
\Phi_1(t,x)=u(x)e^{it},\,\,\,\,\Phi_2(t,x)=v(x)e^{it},
\end{equation}
where $u(x)$ and $v(x)$ are real valued functions.
Substituting \eqref{1.3}
into \eqref{1.2} and setting $V_{1}(x)=\epsilon P(x),V_{2}(x)=\epsilon Q(x),$ we are led to \eqref{1.1}.

For linearly coupled nonlinear Schr\"{o}dinger equations,
to our knowledge it seems that there are very few results. One can refer to \cite{aa,a,acr1,acr2,b,lp}.
When the dimension of the space $N=1$, for $\epsilon=0,\beta<0,$
 \eqref{1.1} has in addition to the semi-trivial solutions $(\pm U,0),(0,\pm U),$ two types of solitons like solutions given by
 \begin{align*}
&(U_{1+\beta},U_{1+\beta}),\,\,\,(-U_{1+\beta},-U_{1+\beta})\,\,\,\text{for}\,\,\,-1\leq \beta\leq 0,\,\,\,\text{symmetric~states},\\
 &(U_{1-\beta},U_{1-\beta}),\,\,\,(-U_{1-\beta},-U_{1-\beta})\,\,\,\text{for}\,\,\,\beta\leq 0,\,\,\,\text{anti-symmetric~states},
\end{align*}
 where for $\lambda>0,$ $U_{\lambda}$ is the unique solution of
 $$
 \left\{%
\begin{array}{ll}
- u''+\lambda u=u^{3}, \,\,\,\,u>0\,\,\,\text{in}\,\, \R,\vspace{0.2cm}\\
    u(0)=\max_{x\in \R}u(x),\,\,\,\,\,u(x)\in H^{1}(\R).
\end{array}%
\right.
$$
In \cite{aa},
 Akhmediev and Ankiewicz observed numerically that for
 $-1<\beta<0,$ there exists a family of new solutions for \eqref{1.1} with $\epsilon=0,$ bifurcating from the branch
 of the anti-symmetric state $\beta=-1.$ This kind of results were rigorously verified in \cite{ac}
 for small value of the parameter $\beta=0.$ More precisely, Ambrosetti and Colorado in \cite{ac}
 proved that a solution with one 2-bump component having bumps located near $-|\ln(-\beta) |$ and $|\ln(-\beta) |$ while
 the other component having one negative peak exists.
 In \cite{b}, Buffoni gave a rigorous proof of the existence of the bifurcation described by Ambrosetti,  Arcoya and G\'{a}mez in \cite{aag}

In \cite{acr2}, Ambrosetti, Cerami and Ruiz
 studied \eqref{1.1} in the case $\epsilon=0.$
They proved that if $\mathcal {P}$ denotes a regular
polytope centered at the origin of $\R^{N}$ such
that its side is greater than radius, then there exists
a solution with one multi-bump
component having bumps located near the vertices of $\xi \mathcal {P},$
where $\xi\sim \log(\frac{1}{\beta})$, while the other component has one negative peak.
Taking $\epsilon=0$ and substituting $u^{3},v^{3}$ by $(1+a(x))|u|^{p-1}u,(1+b(x))|v|^{p-1}v(N\geq 2,1<p<2^{*}-1)$ respectively, in \cite{acr1}, Ambrosetti, Colorado and Ruiz obtained some results about the
existence of positive ground and bound state of \eqref{1.1}.
Recently in \cite{cz}, Chen and Zou studied the following linearly coupled Schr\"{o}dinger  equations
\begin{equation}\label{cz}
\left\{%
\begin{array}{ll}
-\epsilon^{2}\Delta u+a(x)u=u^{p}+\lambda v, \,\,\,\,&x\in \R^{N},\vspace{0.2cm}\\
-\epsilon^{2}\Delta v+b(x)v=v^{2^{*}-1}+\lambda u, \,\,\,\,&x\in \R^{N},\vspace{0.2cm}\\
    u,v>0,\,\,\,u(x),v(x)\rightarrow 0,\,\,\,\,&\text{as}\,\,\,|x|\rightarrow\infty,
\end{array}%
\right.
\end{equation}
where $N\geq 3$ and $a(x),b(x)$ are positive potentials which are both bounded away from 0.
Under some conditions of $a(x),b(x)$ and $\lambda>0,$ they obtained positive solutions for \eqref{cz}
when $\epsilon>0$ small enough, which has concentration phenomenon as $\epsilon\rightarrow 0.$
Very recently, in \cite{lp} Lin and Peng studied linearly coupled nonlinear Schr\"{o}dinger systems similar to
\eqref{1.1} with $k>=2$
equations and $\epsilon=0$. They examined the effect of the linear coupling to the solution structure. When $N=2,3$, for any prescribed integer,
they constructed a non-radial vector solutions of segregated type, with each component having exactly $l$ positive bumps for $\beta>0$
sufficiently small. They also gave an explicit description on the characteristic features of the vector solutions.

Inspired by \cite{aw,pw}, we want to investigate the existence of infinitely
many positive synchronized solutions of system \eqref{1.1}. In order to state our main result, now
we give the conditions imposed on $P(x),Q(x)$ which are similar to those in \cite{aw},\\
{\bf ($K_1$)}\,\,  $\ds\lim_{|x|\rightarrow\infty}P(x)=\ds\lim_{|x|\rightarrow\infty}Q(x)=0;$ \\
{\bf ($K_2$)}\,\, $\exists~0<\alpha<1,$ $\ds\lim_{|x|\rightarrow\infty}(\gamma^2P(x)+\gamma^2Q(x))e^{\alpha\gamma|x|}=+\infty$,\\
where $\gamma$ is defined in \eqref{1.7} below.

The energy functional associated with problem  \eqref{1.1} is
\begin{equation}\label{1.5}
\arraycolsep=1.5pt
\begin{array}{rl}\displaystyle
J(u,v)&=\ds\frac{1}{2}\ds\int_{\R^N}\big[|\nabla u|^2+(1+\epsilon P(x))u^2+|\nabla v|^2+(1+\epsilon Q(x))v^2\big]-\frac{1}{4}\ds\int_{\R^N}(u^4+v^4)\\[5mm]
&\,\,\,\,\,\,\,\,-\beta\ds\int_{\R^N}uv,\,
u,v\in H^1(\R^N).
\end{array}
\end{equation}
We will study $J(u,v)$ in Section \ref{s6}. Let $w$ be the unique solution of
\begin{equation}\label{1.6}\left\{%
\begin{array}{ll}
\Delta w-w+w^3=0,\,\,w>0,\,&\text{in}~\R^N, \vspace{0.2cm}\\
w(0)=\ds\max_{x\in\R^N}(x),\,\,w\rightarrow0,\,&\text{as}\,|x|\rightarrow\infty.
\end{array}%
\right.
\end{equation}
By the well-known result of Gidas, Ni and Nirenberg in \cite{gs}, $w$ is radially symmetric and strictly decreasing,
$w'(r)<0$ for $r>0$. Moreover, from \cite{gs} we know the
following asymptotic behavior of $w$:
$$\left\{%
\begin{array}{ll}
w(r)=A_Nr^{-\frac{N-1}{2}}e^{-r}(1+O(\frac{1}{r})), \vspace{0.2cm}\\
w'(r)=-A_Nr^{-\frac{N-1}{2}}e^{-r}(1+O(\frac{1}{r})),
\end{array}%
\right.$$
for $r>0$, where $A_N$ is a positive constant.

Noting that if $\beta<1$ and
 \begin{equation}\label{1.7}
\gamma=\sqrt{1-\beta},
\end{equation}
then it follows from \cite{wy} that
\begin{equation}\label{1.8}
(U,V)=(\gamma w(\gamma x),\gamma w(\gamma x))
\end{equation}
solves the following problem
\begin{equation}\label{1.9}\left\{%
\begin{array}{ll}
-\Delta u+(1-\beta)u=u^3,\,\,\,\,\,\text{in}~\R^N, \vspace{0.2cm}\\
-\Delta v+(1-\beta)v=v^3,\,\,\,\,\,\,\text{in}~\R^N.
\end{array}%
\right.
\end{equation}

 We will use $(U,V)$ as the building blocks for the solutions of \eqref{1.1}.
 Let $\mu>0$ be a real number such that $w(x)\leq ce^{-|x|}$ for $|x|>\mu$ and
some constant $c$ independent of $\mu$ large. Denote
$\mathbf{P}_m=(P_1,\cdots,P_m).$ Now we define the
configuration space
$$
\Omega_1=\R^N,\,\,\,\Omega_m=\Big\{\mathbf{P}_m\in\R^{mN}\big|\,\,\ds\min_{j\neq k}|P_j-P_k|\geq\frac{\mu}{\gamma}\Big\},\,\forall m>1.
$$

For $\mathbf{P}_m\in\Omega_m$, we define
 \begin{equation}\label{1.10}
(U_{P_j},V_{P_j})=(U(x-P_j),V(x-P_j))
\end{equation}
and the approximate solutions to be
 \begin{equation}\label{1.11}
U_{\mathbf{P}_m}=\ds\sum_{j=1}^{m}U_{P_j},\,\,\,V_{\mathbf{P}_m}=\ds\sum_{j=1}^{m}V_{P_j}.
\end{equation}
Denote
\begin{equation}\label{1.12}
G\Big(
\begin{array}{ccccc}
u \\[1mm]
v\\
\end{array}
\Big):=\Big(
\begin{array}{ccccc}
\Delta u-(1+\epsilon P(x))u+u^3+\beta v \\[1mm]
\Delta v-(1+\epsilon Q(x))v+v^3+\beta u \\
\end{array}
\Big)
\end{equation}
and for $f=\Big(
\begin{array}{ccccc}
f_1 \\
f_2\\
\end{array}
\Big)$, $g=\Big(
\begin{array}{ccccc}
g_1 \\
g_2\\
\end{array}
\Big)$, we denote $
 \left \langle f,g\right\rangle=\ds\int_{\R^N}(f_1g_1+f_2g_2).
$

Now we state our main result as follows:
 \begin{thm}\label{thm1.1}
Let $(K_1)$ and $(K_2)$ hold. Then there exist $\epsilon_0$ and $\beta^*>0$ such that for
 $\beta\in(-\beta^*,0)\cup (0,1)$, and $0<\epsilon<\epsilon_0$,
problem~\eqref{1.1} has infinitely many positive
synchronized solutions.
\end{thm}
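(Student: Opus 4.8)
\textbf{Proof proposal for Theorem~\ref{thm1.1}.}

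The plan is to produce, for each integer $m$ (in some infinite set), a positive synchronized solution of \eqref{1.1} close to the $m$-bump configuration $(U_{\mathbf{P}_m},V_{\mathbf{P}_m})$ of \eqref{1.11}, with the locations $\mathbf{P}_m=(P_1,\dots,P_m)\in\Omega_m$ spread out; since the energy of such a solution is $m\,c_0+o(1)$ with $c_0=\tfrac12\int_{\R^N}U^4>0$, different $m$ give geometrically distinct solutions, hence infinitely many. To separate the two Lyapunov--Schmidt reductions I would first pass to $\sigma=\tfrac12(u+v)$, $\delta=\tfrac12(u-v)$, in which \eqref{1.1} reads
\begin{equation*}
\begin{cases}
-\Delta\sigma+(1-\beta)\sigma=\sigma^3+3\sigma\delta^2-\frac{\epsilon}{2}\big[(P+Q)\sigma+(P-Q)\delta\big],\\[1mm]
-\Delta\delta+(1+\beta)\delta=3\sigma^2\delta+\delta^3-\frac{\epsilon}{2}\big[(P-Q)\sigma+(P+Q)\delta\big].
\end{cases}
\end{equation*}
On the leading ansatz $\sigma\approx U_{\mathbf{P}_m}$, $\delta\approx0$ the linearization of the $\delta$-equation is $L_\delta=-\Delta+(1+\beta)-3U_{\mathbf{P}_m}^2$; for $\beta$ in the range of the statement this is uniformly invertible on the appropriate weighted space (this, with the nondegeneracy of $w$ from \cite{gs} and the analysis of the synchronized building block in \cite{wy}, is where the restriction $\beta\in(-\beta^*,0)\cup(0,1)$ enters). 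A first reduction then solves the $\delta$-equation for $\delta=\delta_{\mathbf{P}_m}(\sigma)$, smooth in $\sigma$ near $U_{\mathbf{P}_m}$ and of norm $O(\epsilon)$ plus terms exponentially small in $\min_{j\neq k}|P_j-P_k|$, uniformly for $\mathbf{P}_m\in\Omega_m$.

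Substituting $\delta=\delta_{\mathbf{P}_m}(\sigma)$ into the $\sigma$-equation leaves a single perturbed scalar equation $-\Delta\sigma+(1-\beta)\sigma-\sigma^3=-\tfrac{\epsilon}{2}(P+Q)\sigma+\mathcal{R}(\sigma,\mathbf{P}_m)$ with $\mathcal{R}$ of higher order. The second reduction writes $\sigma=U_{\mathbf{P}_m}+\phi$ with $\phi\perp\mathrm{span}\{\partial_{x_k}U(\cdot-P_j):1\le j\le m,\; 1\le k\le N\}$; by the nondegeneracy of $w$, $-\Delta+(1-\beta)-3U_{\mathbf{P}_m}^2$ is invertible modulo this kernel, so one solves for $\phi=\phi_{\mathbf{P}_m}$ of norm $O(\epsilon+\text{interactions})$, all decay estimates resting on the asymptotics of $w$ recalled after \eqref{1.6}. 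With $(u_{\mathbf{P}_m},v_{\mathbf{P}_m})=(\sigma+\delta_{\mathbf{P}_m}(\sigma),\sigma-\delta_{\mathbf{P}_m}(\sigma))$ and $\sigma=U_{\mathbf{P}_m}+\phi_{\mathbf{P}_m}$, the localized energy method reduces the problem to finding a critical point, in the interior of $\Omega_m$, of the reduced energy $\mathcal{F}(\mathbf{P}_m):=J(u_{\mathbf{P}_m},v_{\mathbf{P}_m})$, which expands as
\begin{equation*}
\mathcal{F}(\mathbf{P}_m)=m\,c_0+\frac{\epsilon}{2}\sum_{j=1}^{m}\int_{\R^N}(P+Q)(x)\,U(x-P_j)^2\,dx+\sum_{j\neq k}\Phi(P_j-P_k)+\text{h.o.t.},
\end{equation*}
where $|\Phi(z)|\sim c\,e^{-\gamma|z|}$ as $|z|\to\infty$ and the potential term is positive for $\min_j|P_j|$ large (since $(K_2)$ forces $P+Q>0$ near infinity) and, by $(K_2)$ with $0<\alpha<1$, of order $\epsilon\,e^{-\alpha\gamma|P_j|}$ there.

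For the finite-dimensional step I would argue variationally. Because $\alpha<1$, the bump interactions, of order $e^{-\gamma|P_j-P_k|}$, can be brought into balance with the (gradient of the) potential term, of order $\epsilon\,e^{-\alpha\gamma|P_j|}$, at configurations with consecutive separations comparable to $\alpha|P_j|+\gamma^{-1}\log(1/\epsilon)$, which lie in the interior of $\Omega_m$ (the constraint $\min_{j\ne k}|P_j-P_k|\ge\mu/\gamma$ being automatically met) and along which $|P_j|\to\infty$, roughly geometrically. Restricting $\mathcal{F}$ to a suitable bounded subdomain $D\subset\Omega_m$ — whose boundary consists of configurations where either two bumps reach the minimal distance $\mu/\gamma$ or some $|P_j|$ is extremal, and where $\mathcal{F}$ is controlled through $(K_1)$, $(K_2)$ — one obtains a critical point $\mathbf{P}_m^{*}$ in the interior of $D$ by a min--max argument adapted to the local structure of $\mathcal{F}$; this gives an exact solution $(u_{\mathbf{P}_m^{*}},v_{\mathbf{P}_m^{*}})$ of \eqref{1.1}. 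Since $U=V>0$ and the corrections $\phi,\delta$ are small in $L^\infty$ and exponentially decaying, the maximum principle applied to each equation in \eqref{1.1} makes $u_{\mathbf{P}_m^{*}},v_{\mathbf{P}_m^{*}}>0$; both components have bumps at the same points $P_j$, so the solution is synchronized, and letting $m\to\infty$ with $0<\epsilon<\epsilon_0$ yields infinitely many solutions with pairwise distinct energies.

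The hard part is this last step: for \emph{non-symmetric} $P,Q$ there is no symmetry producing critical points, so one must expand $\mathcal{F}$ sharply enough to capture and exploit — via $(K_2)$ and $\alpha<1$ — the competition between the exponentially small interactions $\Phi(P_j-P_k)$ and the potential term $\tfrac{\epsilon}{2}\int(P+Q)U_{P_j}^2$, and verify that all remainders from \emph{both} reductions — in particular those produced by the coupling, i.e. by $\delta_{\mathbf{P}_m}(\sigma)$ and by $P\neq Q$ — are genuinely of higher order uniformly on the configuration set used, including as $m\to\infty$. A secondary difficulty is the uniform (in $m$ and in $\mathbf{P}_m\in\Omega_m$) invertibility estimates for $L_\delta$ and for the kernel-projected linearization.
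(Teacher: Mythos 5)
Your preliminary change of variables $\sigma=\tfrac12(u+v)$, $\delta=\tfrac12(u-v)$ and the two-step linear solve is only a cosmetic variant of what the paper does in one stroke: the uniform invertibility you claim for $L_\delta=-\Delta+(1+\beta)-3U_{\mathbf{P}_m}^2$ is exactly the nondegeneracy of the synchronized state $(U,V)$ that the paper imports in Lemma \ref{lem4.1} from \cite{acr1}, and it is not free of charge uniformly in $\beta\in(0,1)$ (the difference mode can develop kernel at exceptional $\beta$; this is where $\beta^*$ and the cited lemma enter), so nothing is gained there. The genuine gap is the finite-dimensional step. Your argument hinges on ``bringing into balance'' the interaction $e^{-\gamma|P_j-P_k|}$ with a potential term that you assert is \emph{of order} $\epsilon e^{-\alpha\gamma|P_j|}$, and then invoking an unspecified ``min--max adapted to the local structure'' on a bounded subdomain. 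But $(K_2)$ is only a one-sided (lower) bound --- it says $\gamma^2(P+Q)e^{\alpha\gamma|x|}\to+\infty$ --- and $(K_1)$ gives no decay rate from above, so there is no two-sided estimate with which to locate balancing separations like $\alpha|P_j|+\gamma^{-1}\log(1/\epsilon)$; and for non-symmetric $P,Q$ there is no symmetry or topological structure supplied on which such a min--max could rest. You yourself flag this as ``the hard part'', but it is precisely the content of Sections \ref{s5}--\ref{s6} of the paper and is not replaced by anything in the proposal.

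The paper's actual mechanism, of which your proposal has no analogue, is an induction on the number of bumps via maximization: setting $\mathcal{R}_m=\sup_{\mathbf{P}_m\in\Omega_m}J(u_{\mathbf{P}_m},v_{\mathbf{P}_m})$, one proves that the supremum is attained and that $\mathcal{R}_{m+1}>\mathcal{R}_m+I(U,V)$ (Lemma \ref{lem5.2}), by placing an extra bump far away where, thanks to $(K_2)$, the gain $\tfrac{\epsilon}{2}\int_{\R^N}\big(P(x)U_{P_{m+1}}^2+Q(x)V_{P_{m+1}}^2\big)$ dominates the interaction loss $\sum_j w(\gamma|P_{m+1}-P_j|)$; a maximizer cannot sit on $\partial\Omega_m$ because at minimal separation the penalty $-2\gamma^4\gamma_1 w(\mu)$ would force $\mathcal{R}_m<\mathcal{R}_{m-1}+I(U,V)$ (Proposition \ref{prop5.3}); interior maximality then kills the multipliers $c_{jk}$ through a diagonally dominant system. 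Crucially, this induction is only viable because the reduction error does not accumulate in $m$: that is the role of the \emph{secondary} Liapunov--Schmidt reduction (Lemma \ref{lem4.2}), which estimates the difference $\varphi_{m+1}$ between the $(m+1)$-bump solution and the $m$-bump solution plus one added bump, with bounds depending only on $\sum_j w(\gamma|P_{m+1}-P_j|)$ and on $\epsilon\int|P|U_{P_{m+1}}$, $\epsilon\int|Q|V_{P_{m+1}}$, uniformly in $m$ and $\mathbf{P}_{m+1}$. (Note that the ``two reductions'' in the paper are the basic one plus this step-$m$-to-step-$(m+1)$ comparison, not the $\sigma$/$\delta$ splitting you propose.) Your closing remark that all remainders must be ``higher order uniformly, including as $m\to\infty$'' names this difficulty but provides no mechanism for it; without an analogue of Lemma \ref{lem4.2} and the inductive comparison, the proposal does not produce an interior critical point for every $m$, so the proof is incomplete at its decisive step.
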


\begin{rem}
We only consider synchronized solutions for \eqref{1.1} in the nonsymmetric case.
However, we do not know whether we can also obtain infinitely many segregated solutions for \eqref{1.1}
as \cite{pw} by our method. It would be very interesting to consider this problem.
\end{rem}

Our result provides a new phenomenon for \eqref{1.1} in the nonsymmetric case with linearly coupled terms.
 Our method is different from \cite{pw}.
To the best knowledge of us, our result is new.

In order to prove Theorem \ref{thm1.1}, we mainly use the Liapunov-Schmidt reduction method as in \cite{aw,awz,lnw}. There are two main difficulties.
Firstly, we need to show that the maximum points will not go to infinity (see Section \ref{s6}). This is guaranteed by the slow decay assumption $(K_{2}).$ Secondly, we have to detect the difference in the energy when the spikes move to the boundary of the configuration space.
A crucial estimate is Lemma \ref{lem4.2}, in which we prove that the accumulated error can be controlled from step $m$ to step $(m+1)$.
Compared with \cite{aw}, due to the linear coupling terms, there are new difficulties in estimates.

Our paper is organized as follows.
In section~\ref{s3}, we perform the first finite reduction. In Section \ref{s5}, we show a key estimate which majors the differences between the $m$-th step and the $(m+1)$-th step which involves a secondary Liapunov-Schmidt reduction. We prove Theorem \ref{thm1.1} in Section \ref{s6}.
Throughout this paper, denote $\mathbf{H}=H^1(\R^N)\times H^1(\R^N)$ and $\|(u,v)\|_{H^1(\R^N)}=\|u\|_{H^1(\R^N)}+\|v\|_{H^1(\R^N)}.$
$c$, $C$ will always denote various generic constants that are independent of $\mu$ for $\mu$ large.

\section{The first Liapunov-Schmidt reduction}\label{s3}

In this section, we perform a finite-dimensional reduction.

For $\mathbf{P}_m\in\Omega_m$, we define the following functions:

\begin{equation}\label{2.1}
D_{jk}=\Big(
\begin{array}{ccccc}
D_{jk,1} \\[1mm]
D_{jk,2}\\
\end{array}
\Big)=\left(
\begin{array}{ccccc}
\frac{\partial U_{P_j}}{\partial x_k}\zeta_j(x) \\[1mm]
\frac{\partial V_{P_j}}{\partial x_k}\zeta_j(x) \\
\end{array}
\right),\,\,\,\text{for}\,\,j=1,\cdots,m,\,k=1,\cdots,N,
\end{equation}
where $\zeta_j(x)=\zeta\big(\frac{2|x-P_j|}{\mu-1}\big)$ and $\zeta(t)$
is a cut-off function such that $\zeta(t)=1$ for $|t|\leq\frac{1}{\gamma}$ and
$\zeta(t)=0$ for $|t|\geq\frac{\mu^2}{\gamma(\mu^2-1)}$. So we know that the support of $D_{jk}$
belongs to $B_{\frac{\mu^2}{2\gamma(\mu+1)}}(P_j)$.

Consider the following linear problem: given  $h=\Big(
\begin{array}{ccccc}
h_1 \\
h_2\\
\end{array}
\Big)$, we find a function  $\Big(
\begin{array}{ccccc}
\phi \\
\psi\\
\end{array}
\Big)$ satisfying
\begin{equation}\label{2.2e}
\left\{%
\begin{array}{ll}
L\Big(
\begin{array}{ccccc}
\phi \\[1mm]
\psi\\
\end{array}
\Big):=\Big(
\begin{array}{ccccc}
\Delta \phi-(1+\epsilon P(x))\phi+3U_{\mathbf{P}_m}^2\phi+\beta \psi \\[1mm]
\Delta \psi-(1+\epsilon Q(x))\psi+3V_{\mathbf{P}_m}^2\psi+\beta \phi \\
\end{array}
\Big)=h+\ds\sum_{j=1}^{m}\sum_{k=1}^{N}c_{jk}D_{jk},\vspace{0.2cm}\\
\Big \langle  \Big(
\begin{array}{ccccc}
\phi \\[1mm]
\psi\\
\end{array}
\Big),\Big(
\begin{array}{ccccc}
D_{jk,1} \\[1mm]
D_{jk,2}\\
\end{array}
\Big)\Big\rangle=0 \,\,\,\text{for}\,\,j=1,\cdots,m,\,k=1,\cdots,N.
\end{array} %
\right.
\end{equation}
Letting $0<\nu<1$ and
$$
F:=\ds\sum_{\mathbf{P}_{m}\in \Omega_m}e^{-\nu\gamma|\cdot-P_j|},
$$
we define the norm
\begin{equation}\label{2.3}
\|h\|_*=\sup_{x\in\R^N}\big|F(x)^{-1}h_1(x)\big|+\sup_{x\in\R^N}\big|F(x)^{-1}h_2(x)\big|.
\end{equation}

From \cite{dwy}, there holds the following relation
  \begin{equation}\label{linfinity}
\|u\|_{L^{\infty}(\R^{N})}\leq C\|u\|_{*},
\end{equation}
where $C>0$ independent of $\mu,m$ and $\textbf{P}_{m}.$

Firstly applying (iii) of Lemma 3.4 in \cite{acr1}, we give the following non-degeneracy result
which will be used later.
\begin{lem} \label{lem4.1}
There exists $\beta^*>0$ such that for $\beta\in(-\beta^*,0)\cup(0,1)$, $(U,V)$ is non-degenerate for the
system \eqref{1.9} in $\mathbf{H}$ in the sense that the kernel is given by
$$\text{Span}\Big\{\Big(\frac{\partial U}{\partial x_j},\frac{\partial V}{\partial x_j}\Big)\,\Big|\,j=1,\cdots,N\Big\}.$$
\end{lem}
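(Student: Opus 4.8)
The plan is to reduce the non-degeneracy of $(U,V)=(\gamma w(\gamma x),\gamma w(\gamma x))$ for the system \eqref{1.9} to the well-known non-degeneracy of the scalar ground state $w$ of \eqref{1.6}, exploiting the diagonal structure of the linearization and a change of variables that uncouples the system. First I would write down the linearized operator at $(U,V)$: if $(\phi,\psi)\in\mathbf H$ lies in the kernel, then
\begin{equation*}
\left\{
\begin{array}{ll}
-\Delta\phi+(1-\beta)\phi-3U^2\phi=\beta\psi,\\[1mm]
-\Delta\psi+(1-\beta)\psi-3V^2\psi=\beta\phi.
\end{array}
\right.
\end{equation*}
Since $U=V$, adding and subtracting these equations and setting $\sigma=\phi+\psi$, $\tau=\phi-\psi$ diagonalizes the system into two scalar equations
\begin{equation*}
-\Delta\sigma+(1-2\beta)\sigma-3U^2\sigma=0,\qquad -\Delta\tau+\tau-3U^2\tau=0 .
\end{equation*}
After rescaling $x\mapsto x/\gamma$ with $\gamma=\sqrt{1-\beta}$, so that $U(x/\gamma)=\gamma w(x)$ and $3U^2$ becomes $3\gamma^2 w^2$, these become $-\Delta\tilde\tau+\tilde\tau-3w^2\tilde\tau=0$ (the linearization $L_0$ of the scalar problem \eqref{1.6}) and $-\Delta\tilde\sigma+(1+\tfrac{\beta}{\gamma^2})\tilde\sigma-3w^2\tilde\sigma=0$, i.e. $L_0\tilde\sigma=-\tfrac{\beta}{\gamma^2}\tilde\sigma$.

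The $\tau$-equation is handled by the classical Gidas--Ni--Nirenberg / Kwong non-degeneracy of $w$: the only $H^1$ solutions of $L_0\tilde\tau=0$ are linear combinations of $\partial_{x_j}w$, $j=1,\dots,N$. For the $\sigma$-equation, the point is that $-\beta/\gamma^2$ must be an eigenvalue of $L_0$ in $H^1(\R^N)$ with eigenfunction $\tilde\sigma$; since $L_0$ has finitely many eigenvalues below the essential spectrum $[1,\infty)$, the number $\beta^*>0$ is chosen small enough that $-\beta/\gamma^2=-\beta/(1-\beta)$ stays strictly outside the finite set of those eigenvalues for all $\beta\in(-\beta^*,0)\cup(0,1)$ — note for $\beta\in(0,1)$ we have $-\beta/(1-\beta)<0$, which lies below the lowest eigenvalue $\lambda_1(L_0)$ (whose sign one checks: $L_0 w=-2w^3<0$ tested against $w$ shows $\lambda_1<0$, but a quantitative lower bound $\lambda_1(L_0)\ge -c_0$ suffices), and for $\beta<0$ small $-\beta/(1-\beta)$ is a small positive number staying away from any eigenvalue in $(0,1)$. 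Hence for such $\beta$ the $\sigma$-equation forces $\tilde\sigma=0$, i.e. $\phi=\psi$, and then $\phi=\psi=\tilde\tau(\gamma\cdot)$ ranges over $\operatorname{Span}\{\partial_{x_j}w(\gamma\cdot)\}=\operatorname{Span}\{\partial_{x_j}U\}$, and correspondingly $\psi$ over $\operatorname{Span}\{\partial_{x_j}V\}$. This gives exactly the claimed kernel. Alternatively — and this is presumably what the authors do by citing Lemma 3.4(iii) of \cite{acr1} — one invokes a general non-degeneracy statement for linearly coupled systems proved there, and only needs to verify its hypothesis on the range of $\beta$; I would phrase the proof so that the explicit diagonalization above serves as the verification of that hypothesis.

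The main obstacle is the bookkeeping around the eigenvalues of $L_0$: one must argue that $-\beta/(1-\beta)$ avoids \emph{all} of the (finitely many) eigenvalues of $L_0$ in $(-\infty,1)$, not just $0$. For $\beta\in(0,1)$ this is the subtle side, because $-\beta/(1-\beta)$ sweeps through all of $(-\infty,0)$ as $\beta\to 1^-$, so one genuinely needs $\beta$ bounded away from $1$ as well; thus $\beta^*$ (and the upper restriction implicit in "$\beta\in(0,1)$") must be coupled to a lower bound on the bottom eigenvalue $\lambda_1(L_0)$ and to the gaps between consecutive eigenvalues. Concretely I would: (i) recall that $\sigma(L_0)\cap(-\infty,1)$ is a finite set $\{\lambda_1<\lambda_2\le\cdots\le\lambda_k\}$ with $\lambda_1<0$ simple and $0$ of multiplicity $N$ with eigenspace $\operatorname{Span}\{\partial_{x_j}w\}$; (ii) since the map $\beta\mapsto-\beta/(1-\beta)$ is continuous and equals $0$ at $\beta=0$ with nonzero derivative, choose $\beta^*>0$ so small that on $(-\beta^*,0)\cup(0,\beta^*)$ the value $-\beta/(1-\beta)$ lies in a punctured neighbourhood of $0$ disjoint from $\sigma(L_0)\setminus\{0\}$ — and then note that the theorem's hypothesis already restricts $\beta$ to $(0,1)$, within which one further shrinks to stay below $\lambda_1(L_0)$ for the positive branch. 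Everything else — the rescaling identities, the $\pm$ decomposition, and the translation between $\partial_{x_j}w$, $\partial_{x_j}U$, $\partial_{x_j}V$ — is routine.
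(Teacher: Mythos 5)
The paper itself offers no proof of this lemma---it simply invokes part (iii) of Lemma 3.4 in \cite{acr1}---so a self-contained argument like yours is in principle a legitimate (and different) route, and the sum/difference diagonalization is indeed the natural way to attempt it. But your computation starts from the wrong operator. The kernel at stake is that of the linearization of the coupled system $-\Delta u+u=u^3+\beta v$, $-\Delta v+v=v^3+\beta u$ at $(U,V)$ (this is the $\epsilon=0$, one-bump limit of the operator $L$ in \eqref{2.2e}, cf.\ \eqref{2.9}), i.e.\ $-\Delta\phi+\phi-3U^2\phi-\beta\psi=0$ together with its mirror equation; writing $(1-\beta)\phi$ on the left while keeping $\beta\psi$ on the right double-counts the coupling, since the $(1-\beta)$ in \eqref{1.9} already absorbs $\beta v$ when $v=u$. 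With the correct operator the diagonalization comes out the other way around: $\sigma=\phi+\psi$ satisfies $-\Delta\sigma+(1-\beta)\sigma-3U^2\sigma=0$, which after rescaling is exactly $L_0\tilde\sigma=0$ with $L_0=-\Delta+1-3w^2$ and produces the translation modes, while $\tau=\phi-\psi$ satisfies $-\Delta\tau+(1+\beta)\tau-3U^2\tau=0$, i.e.\ $L_0\tilde\tau=-\tfrac{2\beta}{1-\beta}\tilde\tau$. In your version the two modes are interchanged and the resonance parameter is wrong (it is $-2\beta/(1-\beta)$, not $-\beta/(1-\beta)$); moreover the deduction that $\tilde\sigma=0$ means $\phi=\psi$ is inconsistent ($\sigma=0$ forces $\phi=-\psi$), and if one follows your equations literally the kernel comes out as $\mathrm{Span}\{(\partial_{x_j}U,\,-\partial_{x_j}V)\}$ rather than the asserted one, so these are not harmless slips.

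The more serious point is the spectral step, which you flagged but did not resolve---and which cannot be resolved on the whole interval. Even after the repairs above, your scheme yields the claimed kernel only for those $\beta$ with $-2\beta/(1-\beta)\notin\sigma(L_0)$. For $\beta\in(-\beta^*,0)$ this works (the parameter is a small positive number, away from the finitely many eigenvalues of $L_0$ below the essential spectrum), and this is where $\beta^*$ comes from. But for $\beta\in(0,1)$ the parameter sweeps all of $(-\infty,0)$ and meets the unique negative eigenvalue $\lambda_1(L_0)$ at exactly one interior value $\beta_0=|\lambda_1|/(2+|\lambda_1|)\in(0,1)$, where an antisymmetric element $(\eta,-\eta)$ enters the kernel: for $N=1$ one has $\lambda_1=-3$, $\beta_0=3/5$, and $\eta(x)=\mathrm{sech}^2(\gamma x)$ solves the linearized system explicitly. (So the delicate region is not $\beta$ near $1$, as you suggest---there the parameter lies below $\lambda_1$ and is harmless---but this interior resonance.) Consequently no argument of this type can cover all of $(0,1)$: one must either exclude the resonance value(s) of $\beta$ or rely verbatim on the precise hypotheses of the cited Lemma 3.4(iii) of \cite{acr1}. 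Your proposal implicitly concedes this by offering to shrink the positive range of $\beta$, but that proves a strictly weaker statement than the lemma as formulated; the step ``choose $\beta^*$ so that the parameter avoids the spectrum'' is simply not available on all of $(0,1)$, and this is a genuine gap, not bookkeeping.
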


In the following, $\sigma$ will denote a positive constant depending on $\epsilon$ and $\nu$
but independent of $\mu,m,\textbf{P}_{m}$ and may vary from line to line.

\begin{prop}\label{prop2.1}
Let $h$ with $\|h\|_*$  norm bounded and assume that $\Big(\Big(
\begin{array}{ccccc}
\phi \\
\psi\\
\end{array}
\Big),\{c_{jk}\}\Big)$ is a solution to problem \eqref{2.2e}. Then there exist positive numbers $\epsilon_0$, $\mu_0$,
$c\geq0$ such that for all $0<\epsilon<\epsilon_0$, $\mu\geq \mu_0$ and $\mathbf{P}_m\in \Omega_m$, we have
$$
\|(\phi,\psi)\|_*\leq c\|h\|_*,
$$
where $C$ is a positive constant independent of $\mu,m$ and $\textbf{P}_{m}\in \Omega_{m}.$
\end{prop}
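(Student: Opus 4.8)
The plan is to argue by contradiction via a blow-up / concentration-compactness argument of the standard Liapunov–Schmidt type. Suppose the estimate fails: then there exist sequences $\epsilon_n\to 0$ (or $\epsilon_n<\epsilon_0$ fixed), $\mu_n\to\infty$, $m_n$, configurations $\mathbf{P}_{m_n}\in\Omega_{m_n}$, right-hand sides $h^{(n)}$ with $\|h^{(n)}\|_*\to 0$, and solutions $(\phi_n,\psi_n)$ with $\|(\phi_n,\psi_n)\|_*=1$. First I would show the multipliers are controlled: testing \eqref{2.2e} against $D_{jk}$ and using the near-orthogonality relations together with the exponential localization of the $D_{jk}$ and the decay of $U_{\mathbf P_m},V_{\mathbf P_m}$, one gets $|c_{jk}|\le C\big(\|h^{(n)}\|_*+o(1)\|(\phi_n,\psi_n)\|_*\big)$, so the multiplier term is negligible in the $\ast$-norm. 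This reduces matters to estimating $L(\phi_n,\psi_n)$ essentially by $h^{(n)}+o(1)$.

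Next I would localize. Since $\|(\phi_n,\psi_n)\|_*=1$, for each $n$ there is a point $x_n$ where roughly half the $\ast$-norm is attained, i.e. $|F(x_n)^{-1}\phi_n(x_n)|+|F(x_n)^{-1}\psi_n(x_n)|\ge \tfrac14$ (say). The key dichotomy is whether $x_n$ stays within bounded distance $\le R$ (for some large fixed $R$) of one of the centers $P_j^{(n)}$, or whether $\dist(x_n,\{P_j^{(n)}\})\to\infty$.

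In the first case, translate by that center, $\tilde\phi_n(x)=\phi_n(x+P_{j_n}^{(n)})$, etc. Using \eqref{linfinity} we have a uniform $L^\infty$ bound, and elliptic estimates applied to \eqref{2.2e} give local $C^{1,\alpha}$ bounds, so along a subsequence $(\tilde\phi_n,\tilde\psi_n)\to(\phi_\infty,\psi_\infty)$ in $C^1_{loc}$. Because $\mu_n\to\infty$ the other centers drift to infinity, so $U_{\mathbf P_m}^2\to U^2$, $V_{\mathbf P_m}^2\to V^2$ locally, and $\epsilon_n P(x+P_{j_n})\to 0$ (here one needs $(K_1)$ and that the translated potential vanishes — this is where the decay of $P,Q$ enters); hence the limit solves the homogeneous linearized system $L_0(\phi_\infty,\psi_\infty)=0$ with $L_0$ the constant-coefficient operator for \eqref{1.9}. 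The limit also satisfies the orthogonality conditions $\langle(\phi_\infty,\psi_\infty),(\partial_k U,\partial_k V)\rangle=0$ in the limit, and lies in $\mathbf H$ by the exponential weight. By the non-degeneracy Lemma \ref{lem4.1}, $(\phi_\infty,\psi_\infty)=0$, contradicting $|\phi_\infty(0)|+|\psi_\infty(0)|\ge \tfrac14$. In the second case (spreading), one uses a maximum-principle / barrier argument: away from all centers the potential terms $3U_{\mathbf P_m}^2$, $3V_{\mathbf P_m}^2$ are small, so $L$ is a small perturbation of the diagonal operator $\Delta-1$ coupled by $\beta$, which (for $|\beta|$ small and in particular $\beta\in(-\beta^*,0)\cup(0,1)$) is invertible with a comparison function; comparing $(\phi_n,\psi_n)$ with the barrier $C(\|h^{(n)}\|_*+\sigma)F$ forces $|F(x_n)^{-1}\phi_n(x_n)|+|F(x_n)^{-1}\psi_n(x_n)|\to 0$, again a contradiction. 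Choosing $0<\nu<1$ strictly less than the decay rate is exactly what makes $(\Delta-1)e^{-\nu\gamma|x-P_j|}$ have a favorable sign, so the barrier works.

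The main obstacle I expect is the coupling: the system is genuinely vectorial, so the barrier argument in the spreading regime must handle the off-diagonal $\beta\phi$, $\beta\psi$ terms — one wants a single scalar super-solution dominating both components, which needs $|\beta|<1$ and a slight enlargement of the effective decay rate, and this is presumably why the hypothesis restricts $\beta$ away from $1$ and to a small negative interval. A secondary subtlety is making the multiplier estimate uniform in $m$: the sum $F=\sum_j e^{-\nu\gamma|\cdot-P_j|}$ must be shown to be comparable, near each $P_j$, to the single exponential (using $\min_{j\ne k}|P_j-P_k|\ge\mu/\gamma$ with $\mu$ large and $\nu<1$), so that cross terms between different bubbles contribute only an $o(1)$, $\mu$-independent error; this is the role of the configuration-space separation in $\Omega_m$.
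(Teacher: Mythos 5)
Your proposal follows essentially the same route as the paper's proof: a contradiction/blow-up argument in which one first controls the multipliers $c_{jk}$ by testing against $D_{jk}$, then uses the weight $F$ as a barrier (the paper's estimate \eqref{2.7}) to force concentration of the $\ast$-norm near some center $P_j^n$, and finally translates, applies elliptic estimates and Ascoli--Arzel\`a, passes the orthogonality conditions to the limit, and invokes the non-degeneracy of $(U,V)$ (Lemma \ref{lem4.1}) to reach a contradiction. Your two-case dichotomy (concentration near a center versus spreading) is just a repackaging of the paper's barrier-plus-compactness scheme, so the argument is correct and not genuinely different.
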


\begin{proof}
Using the same argument as \cite{awz}, we prove this proposition by contradiction. Assume that there exists a solution $\Big(
\begin{array}{ccccc}
\phi \\[1mm]
\psi\\
\end{array}
\Big)$,  such that
$\|h\|_*\rightarrow0,\,\,\,\text{and}\,\,\|(\phi,\psi)\|_*=1.$

Multiplying the frist system \eqref{2.2e} by $D_{jk}=\Big(
\begin{array}{ccccc}
D_{jk,1} \\
D_{jk,2}\\
\end{array}
\Big)$ and integrating in $\R^N$, we get
$$
\ds\int_{\R^N}L\Big(
\begin{array}{ccccc}
\phi \\[1mm]
\psi\\
\end{array}
\Big) \Big(
\begin{array}{ccccc}
D_{jk,1} \\[1mm]
D_{jk,2}\\
\end{array}
\Big)=\ds\int_{\R^N}\Big(
\begin{array}{ccccc}
h_1 \\[1mm]
h_2\\
\end{array}
\Big) \Big(
\begin{array}{ccccc}
D_{jk,1} \\[1mm]
D_{jk,2}\\
\end{array}
\Big)+c_{jk}\ds\int_{\R^N}D_{jk}^2.
$$

By the definition of $D_{jk}$, letting $y=\gamma(x-P_j)$, we deduce
\begin{align*}
&\ds\int_{\R^N}D_{jk}^2=\ds\int_{\R^N}\left(D_{jk,1}^2+D_{jk,2}^2\right)
=2\gamma^{4-N}\ds\int_{B_{\frac{\mu^2}{2(\mu+1)}}(0)}\Big|\frac{\partial w}{\partial y_k}\Big|^2\zeta_j^2\Big(\frac{y}{\gamma}+P_j\Big)
\\& =2\gamma^{4-N}\ds\int_{\R^N}\Big|\frac{\partial w}{\partial y_k}\Big|^2
 +2\gamma^{4-N}\ds\int_{\R^N}\Big|\frac{\partial w}{\partial y_k}\Big|^2\Big[\zeta_j^2\Big(\frac{y}{\gamma}+P_j\Big)-1\Big]\\
 &\,\,\,\,\,-2\gamma^{4-N}\ds\int_{\R^N\setminus B_{\frac{\mu^2}{2(\mu+1)}}(0)}\Big|\frac{\partial w}{\partial y_k}\Big|^2\zeta_j^2\Big(\frac{y}{\gamma}+P_j\Big)\\
 &=2\gamma^{4-N}\ds\int_{\R^N}\Big|\frac{\partial w}{\partial y_k}\Big|^2+O(e^{-\sigma\mu}),
\end{align*}
  since
\begin{align*}
&\ds\int_{\R^N}\Big|\frac{\partial w}{\partial y_k}\Big|^2\Big[\zeta_j^2\Big(\frac{y}{\gamma}+P_j\Big)-1\Big]
=\ds\int_{\R^N\setminus B_{\frac{\mu-1}{2}}(0)}\Big|\frac{\partial w}{\partial y_k}\Big|^2\Big[\zeta^2\Big(\frac{2|y|}{\gamma(\mu-1)}\Big)-1\Big]\\
&\leq C\ds\int_{\R^N\setminus B_{\frac{\mu-1}{2}}(0)}\Big|\frac{\partial w}{\partial y_k}\Big|^2
\leq C\ds\int_{\R^N\setminus B_{\frac{\mu-1}{2}}(0)}e^{-2|y|}
\leq Ce^{-\sigma \mu}
\end{align*}
and similarly,
$$\ds\int_{\R^N\setminus B_{\frac{\mu^2}{2(\mu+1)}}(0)}\Big|\frac{\partial w}{\partial y_k}\Big|^2\zeta_j^2\Big(\frac{y}{\gamma}+P_j\Big)
\leq Ce^{-\sigma \mu}$$
for some $\sigma>0$.

 On the other hand, from \eqref{linfinity}, we have
\begin{equation}\label{2.4}
\Big|\ds\int_{\R^N}\Big(
\begin{array}{ccccc}
h_1 \\[1mm]
h_2\\
\end{array}
\Big) \Big(
\begin{array}{ccccc}
D_{jk,1} \\[1mm]
D_{jk,2}\\
\end{array}
\Big) \Big|
\leq \Big|\ds\int_{\R^N}h_1D_{jk,1}\Big|+\Big|\ds\int_{\R^N}h_2D_{jk,2}\Big|\
\leq C\|h\|_*.
\end{equation}

Setting $\tilde{D}_{jk}=\Big(
\begin{array}{ccccc}
\frac{\partial U_{P_j}}{\partial x_k} \\[1mm]
\frac{\partial V_{P_j}}{\partial x_k}\\
\end{array}
\Big)$, and noting that $\tilde{D}_{jk,1}=\tilde{D}_{jk,2}$, we have
\begin{equation}\label{2.5}
\arraycolsep=1.5pt
\begin{array}{rl}\displaystyle
&\,\,\,\,\,\,\,\,\ds\int_{\R^N}L\Big(
\begin{array}{ccccc}
\phi \\[1mm]
\psi\\
\end{array}
\Big) \Big(
\begin{array}{ccccc}
D_{jk,1} \\[1mm]
D_{jk,2}\\
\end{array}
\Big)=\ds\int_{\R^N}L\Big(
\begin{array}{ccccc}
D_{jk,1}  \\[1mm]
D_{jk,2} \\
\end{array}
\Big) \Big(
\begin{array}{ccccc}
\phi \\[1mm]
\psi\\
\end{array}
\Big)\\[5mm]
&=\ds\int_{\R^N}\Big(
\begin{array}{ccccc}
(\Delta\tilde{D}_{jk,1}+(\beta-1)\tilde{D}_{jk,1}
+3U_{P_j}^2\tilde{D }_{jk,1})\zeta_j \\[1mm]
(\Delta\tilde{D }_{jk,2}+(\beta-1)\tilde{D }_{jk,2}
+3V_{P_j}^2\tilde{D }_{jk,2})\zeta_j\\
\end{array}
\Big) \Big(
\begin{array}{ccccc}
\phi \\[1mm]
\psi\\
\end{array}
\Big)\\[5mm]
&\,\,\,\,\,\,~~~+\ds\int_{\R^N}\Big(
\begin{array}{ccccc}
\Delta\zeta_j\tilde{D }_{jk,1}+2\nabla\tilde{D }_{jk,1}\cdot\nabla\zeta_j\\[1mm]
\Delta\zeta_j\tilde{D }_{jk,2}+2\nabla\tilde{D }_{jk,2}\cdot\nabla\zeta_j\\
\end{array}
\Big) \Big(
\begin{array}{ccccc}
\phi \\[1mm]
\psi\\
\end{array}
\Big)+\ds\int_{\R^N}\Big(
\begin{array}{ccccc}
3U_{\mathbf{P}_m}^2\tilde{D }_{jk,1}\zeta_j-3U_{P_j}^2\tilde{D }_{jk,1}\zeta_j \\[1mm]
3V_{\mathbf{P}_m}^2\tilde{D }_{jk,2}\zeta_j-3V_{P_j}^2\tilde{D }_{jk,2}\zeta_j\\
\end{array}
\Big) \Big(
\begin{array}{ccccc}
\phi \\[1mm]
\psi\\
\end{array}
\Big)\\[5mm]
&\,\,\,\,\,\,~~~-\epsilon\ds\int_{\R^N}\Big(
\begin{array}{ccccc}
P(x)\tilde{D }_{jk,1}\zeta_j\\[1mm]
Q(x)\tilde{D }_{jk,1}\zeta_j\\
\end{array}
\Big) \Big(
\begin{array}{ccccc}
\phi \\[1mm]
\psi\\
\end{array}
\Big).
\end{array}
\end{equation}
Now we estimate all the terms in the right side of \eqref{2.5}. Firstly, since $(U_{P_j},V_{P_j})$ satisfies \eqref{1.9}, we find that the first term is equal to $0$. The second term can be
estimated as follows:
\begin{align*}
~~~&\Big|\ds\int_{\R^N}\Big(
\begin{array}{ccccc}
\Delta\zeta_j\tilde{D }_{jk,1}+2\nabla\tilde{D }_{jk,1}\cdot\nabla\zeta_j\\[1mm]
\Delta\zeta_j\tilde{D }_{jk,2}+2\nabla\tilde{D }_{jk,2}\cdot\nabla\zeta_j\\
\end{array}
\Big) \Big(
\begin{array}{ccccc}
\phi \\[1mm]
\psi\\
\end{array}
\Big)\Big|\\[5mm]
&=\Big|\ds\int_{B_{\frac{\mu^2}{2(\mu+1)}}(0)\setminus B_{\frac{\mu-1}{2}}(0)}\Big(
\begin{array}{ccccc}
\Delta\zeta_j\big(\frac{y}{\gamma}+P_j\big)\frac{\partial w}{\partial y_k}
+2\nabla \frac{\partial w}{\partial y_k}\nabla \zeta_i\big(\frac{y}{\gamma}+P_j\big)\\[1mm]
\Delta\zeta_j\big(\frac{y}{\gamma}+P_j\big) \frac{\partial w}{\partial y_k}
+2\nabla \frac{\partial w}{\partial y_k}\nabla \zeta_i\big(\frac{y}{\gamma}+P_j\big)\\
\end{array}
\Big) \Big(
\begin{array}{ccccc}
\phi \\[1mm]
\psi\\
\end{array}
\Big)\gamma^{4-N}\Big|\\&\leq Ce^{-\frac{1}{2}\sigma\mu}\|(\phi,\psi)\|_*,
\end{align*}
since
\begin{align*}
~~~&\Big|\ds\int_{B_{\frac{\mu^2}{2(\mu+1)}}(0)\setminus B_{\frac{\mu-1}{2}}(0)}
\Delta\zeta_j\Big(\frac{y}{\gamma}+P_j\Big) \frac{\partial w}{\partial y_k}\phi
+2\nabla\frac{\partial w}{\partial y_k}\cdot\nabla \zeta_j\Big(\frac{y}{\gamma}+P_j\Big)\phi
\Big|\\
&\leq C \ds\sup_{x\in\R^N}|\phi F^{-1}|
\Big|\ds\int_{B_{\frac{\mu^2}{2(\mu+1)}}(0)\setminus B_{\frac{\mu-1}{2}}(0)}
\ds\sum_{l=1}^{m}e^{-\nu\gamma|\frac{y}{\gamma}+P_j-P_l|}e^{-|y|}\Big|\\
&\leq C\ds\sup_{x\in\R^N}|\phi F^{-1}|e^{-\frac{1}{2}\eta\mu}\ds\int_{B_{\frac{\mu^2}{2(\mu+1)}}(0)\setminus B_{\frac{\mu-1}{2}}(0)}
e^{-|y|}\leq Ce^{-\frac{1}{2}\sigma\mu}\ds\sup_{x\in\R^N}|\phi F^{-1}|\\
\end{align*}
and
$$\Big|\ds\int_{B_{\frac{\mu^2}{2(\mu+1)}}(0)\setminus B_{\frac{\mu-1}{2}}(0)}
\Delta\zeta_j\Big(\frac{y}{\gamma}+P_j\Big) w_{y_k}\psi
+2\nabla w_{y_k}\cdot\nabla \zeta_j\Big(\frac{y}{\gamma}+P_j\Big)\psi
\Big|
\leq Ce^{-\frac{1}{2}\sigma\mu}\ds\sup_{x\in\R^N}|\psi F^{-1}|,$$
for some $\sigma>0$.
Similarly, we can deduce
\begin{align*}
\Big|\epsilon\ds\int_{\R^N}\Big(
\begin{array}{ccccc}
P(x)\tilde{D }_{jk,1}\zeta_j\\[1mm]
Q(x)\tilde{D }_{jk,1}\zeta_j\\
\end{array}
\Big) \Big(
\begin{array}{ccccc}
\phi \\[1mm]
\psi\\
\end{array}
\Big)\Big|\leq Ce^{-\frac{1}{2}\sigma\mu}\|(\phi,\psi)\|_*
\end{align*}
and
\begin{align*}
~~~&\Big|\ds\int_{\R^N}\Big(
\begin{array}{ccccc}
3U_{\mathbf{P}_m}^2\tilde{D }_{jk,1}\zeta_j-3U_{P_j}^2\tilde{D }_{jk,1}\zeta_j \\[1mm]
3V_{\mathbf{P}_m}^2\tilde{D }_{jk,2}\zeta_j-3V_{P_j}^2\tilde{D }_{jk,2}\zeta_j\\
\end{array}
\Big) \Big(
\begin{array}{ccccc}
\phi \\[1mm]
\psi\\
\end{array}
\Big)\Big|\\
&\leq C\ds\int_{B_{\frac{\mu}{2}}(P_j)}\Big(
\begin{array}{ccccc}
U_{P_j}\sum_{k\neq j}U_{P_k} \\[1mm]
V_{P_j}\sum_{k\neq j}V_{P_k}\\
\end{array}
\Big) \Big(
\begin{array}{ccccc}
\phi \\[1mm]
\psi\\
\end{array}
\Big)\leq Ce^{-\frac{1}{2}\sigma\mu}\|(\phi,\psi)\|_*
\end{align*}
for some $\sigma>0$. So we can conclude that
\begin{equation}\label{2.6}
|c_{jk}|\leq C(e^{-\frac{1}{2}\sigma\mu}\|(\phi,\psi)\|_*+\|h\|_*).
\end{equation}
Let now $\vartheta\in(0,1)$. It is easy to check that the function $F$ satisfies
$$
L\Big(
\begin{array}{ccccc}
F \\[1mm]
F\\
\end{array}
\Big)\leq \ds\frac{1}{2}(\vartheta^{2}-1)\Big(
\begin{array}{ccccc}
F \\[1mm]
F\\
\end{array}
\Big),
$$
in $\R^N\setminus\ds\cup_{j}^mB(P_j,\mu_1/\gamma)$ if $\mu_1$ is large enough but independent of $\mu$.
Hence the function $F$ can be used as a barrier to prove the pointwise estimate (similar to (3.11) in \cite{awz})
\begin{equation}\label{2.7}
|(\phi,\psi)|\leq C\Big(\Big\|L\Big(
\begin{array}{ccccc}
\phi \\[1mm]
\psi\\
\end{array}
\Big)\Big\|_*+\ds\sup_j\|(\phi,\psi)\|_{L^{\infty}(\partial B(P_j,\mu_1/\gamma))}\Big)F(x),
\end{equation}
for all $x\in\R^N\setminus\ds\cup_{j}^mB(P_j,\mu_1/\gamma)$.

Now we assume that there exist a sequence $\{\mu^n\}$ tending to $\infty$ and sequences $\{h^n\}$,
$\Big(
\begin{array}{ccccc}
\phi^n \\[1mm]
\psi^n\\
\end{array}
\Big)$, $\{c_{jk}^{n}\}$ such that
$$\|h^n\|_*\rightarrow0,\,\,\,\text{and}\,\,\|(\phi^n,\psi^n)\|_*=1.$$
By \eqref{2.6}, we can get
$$\Big\|\ds\sum_{j,k}c_{jk}^{n}D_{jk}\Big\|_*\rightarrow0.$$
Then \eqref{2.7} implies that there exists $P_j^n\in \Omega_m$ such that
\begin{equation}\label{2.8}
\|(\phi^n,\psi^n)\|_{L^{\infty}( B(P_j^n,\mu^n/2\gamma))}\geq C,
\end{equation}
for some constant $C>0$. Using elliptic estimates with Ascoli-Arzela's theorem, we can find a subsequence of $\{P_j^n\}$
and we can extract, from the sequence $(\phi^n(\cdot-P_j^n),\psi^n(\cdot-P_j^n))$ a subsequence which will
converge (on compact sets) to $(\phi_\infty,\psi_\infty)$ a solution of
\begin{equation}\label{2.9}\left\{%
\begin{array}{ll}
    \Delta \phi_\infty+3U^2\phi_\infty+(\beta-1)\psi_\infty=0, & \hbox{$\text{in}\,\,\R^N$}, \vspace{0.2cm}\\
    \Delta \psi_\infty+3V^2\psi_\infty+(\beta-1)\phi_\infty=0, & \hbox{$\text{in}\,\,\R^N$}.
\end{array}%
\right.\end{equation}
Moreover, recall that $(\phi^n,\psi^n)$ satisfies the orthogonal condition in \eqref{2.2e}. So,
\begin{equation}\label{2.10}
\ds\int_{\R^N}\left(\phi_\infty\nabla U+\psi_\infty\nabla V\right)=0.
\end{equation}
By the non-degeneracy of $(U,V)$, we have $(\phi_\infty,\psi_\infty)\equiv(0,0)$, which contradicts to \eqref{2.8}.
The proof is complete.

\end{proof}

Applying Proposition \ref{prop2.1}, we get the following result at once.

\begin{prop}\label{prop2.2}
Given $0<\nu<1$, there exist positive numbers $\epsilon_0$, $\mu_0$, $C\geq0$ such that for all $0<\epsilon<\epsilon_0$, $\mu\geq \mu_0$ and for
any given $h$ with $\|h\|_*$  norm bounded, there is a unique solution $\Big(\Big(
\begin{array}{ccccc}
\phi \\
\psi\\
\end{array}
\Big),\{c_{jk}\}\Big)$ to problem \eqref{2.2e}. Furthermore,
\begin{equation}\label{2.11}
\|(\phi,\psi)\|_*\leq C\|h\|_*.
\end{equation}

\end{prop}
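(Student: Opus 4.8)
The plan is to deduce Proposition~\ref{prop2.2} from the a priori bound of Proposition~\ref{prop2.1} by a standard Fredholm-alternative argument in the Hilbert space $\mathbf H$, exactly as in \cite{awz,dwy}; once existence and uniqueness are in hand, estimate \eqref{2.11} is nothing but Proposition~\ref{prop2.1} itself.

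First I would recast \eqref{2.2e} as an operator equation. Let $E_{\mathbf P_m}$ be the $\langle\cdot,\cdot\rangle$-orthogonal complement in $\mathbf H$ of $\mathrm{Span}\{D_{jk}\}$ and let $\Pi$ be the associated projection. Inverting $-\Delta+1$ componentwise and projecting with $\Pi$, the problem \eqref{2.2e} restricted to $E_{\mathbf P_m}$ takes the form $(\phi,\psi)+\mathcal K(\phi,\psi)=\bar h$, where $\bar h=\Pi\big((-\Delta+1)^{-1}h\big)$ and $\mathcal K$ is built from the zeroth-order terms $3U_{\mathbf P_m}^2\phi$, $3V_{\mathbf P_m}^2\psi$, $\beta\psi$, $\beta\phi$, $-\epsilon P\phi$, $-\epsilon Q\psi$ composed with $(-\Delta+1)^{-1}$ and $\Pi$. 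Since $U_{\mathbf P_m}^2,V_{\mathbf P_m}^2$ decay at infinity and $(-\Delta+1)^{-1}$ is locally compact (Rellich), $\mathcal K$ is a compact operator on $E_{\mathbf P_m}$, so $\mathrm{Id}+\mathcal K$ is Fredholm of index $0$. Moreover the multipliers $\{c_{jk}\}$ are uniquely determined by $(\phi,\psi)$ and $h$: pairing \eqref{2.2e} with $D_{jk}$ and using the orthogonality, one gets, as in the computation leading to \eqref{2.6}, a linear system for the $c_{jk}$ whose matrix is a small (order $e^{-\sigma\mu}$) perturbation of $\mathrm{diag}\big(2\gamma^{4-N}\int_{\R^N}|\partial_{y_k}w|^2\big)$, hence invertible.

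By the Fredholm alternative it suffices to check $\ker(\mathrm{Id}+\mathcal K)=\{0\}$, i.e. that the only $(\phi,\psi)\in E_{\mathbf P_m}$ solving \eqref{2.2e} with $h=0$ is $(0,0)$. Given such a solution, $(\phi,\psi)\in H^1(\R^N)\times H^1(\R^N)$ decays at infinity, and $L(\phi,\psi)=\sum c_{jk}D_{jk}$ is compactly supported; comparing with the barrier $F$ (which satisfies $L(F,F)\le\frac12(\vartheta^2-1)(F,F)$ outside $\cup_j B(P_j,\mu_1/\gamma)$) together with elliptic estimates on the balls $B(P_j,\mu_1/\gamma)$ yields the pointwise bound $|(\phi,\psi)|\le CF$ as in \eqref{2.7}, so $\|(\phi,\psi)\|_*<\infty$. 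Now Proposition~\ref{prop2.1} applies with $h=0$ and gives $\|(\phi,\psi)\|_*\le c\|0\|_*=0$, hence $(\phi,\psi)\equiv(0,0)$. Therefore $\mathrm{Id}+\mathcal K$ is invertible, producing for each $h$ of finite $\|\cdot\|_*$ norm a unique $(\phi,\psi)\in\mathbf H$ together with unique $\{c_{jk}\}$ solving \eqref{2.2e}.

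Finally, for this solution the estimate $|c_{jk}|\le C(e^{-\frac12\sigma\mu}\|(\phi,\psi)\|_*+\|h\|_*)$ from \eqref{2.6} shows that $\sum c_{jk}D_{jk}$ has controlled $\|\cdot\|_*$ norm, so the barrier argument again gives $\|(\phi,\psi)\|_*<\infty$, and a last appeal to Proposition~\ref{prop2.1} yields \eqref{2.11}. The one genuinely delicate point is this upgrade from an $H^1$ solution to one with finite weighted norm, which is what makes Proposition~\ref{prop2.1} applicable; it relies on the barrier $F$ and the exponential decay of $w$ (and on \eqref{linfinity}), and it is the step I would write out with care, while the compactness and Fredholm bookkeeping are routine. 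I would also stress that all constants in \eqref{2.11} are uniform in $\mu,m,\mathbf P_m$ solely because of Proposition~\ref{prop2.1}: the Fredholm step is used only to produce the solution for each fixed configuration and carries no uniformity by itself.
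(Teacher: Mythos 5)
Your proposal follows essentially the same route as the paper: rewrite \eqref{2.2e} on the subspace orthogonal to the $D_{jk}$ as a compact perturbation of the identity, invoke the Fredholm alternative, and reduce unique solvability to the homogeneous case, which is killed by the a priori bound of Proposition~\ref{prop2.1}, the latter also yielding \eqref{2.11}. Your write-up is in fact more careful than the paper's at the one delicate point you flag --- upgrading the $H^1$ solution to one of finite $\|\cdot\|_*$ norm via the barrier $F$ and elliptic estimates before applying Proposition~\ref{prop2.1} --- a step the paper leaves implicit, so the proposal is correct.
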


\begin{proof}
Consider the space
$$\mathcal{H}=\Big\{(u,v)\in \mathbf{H}\,\Big|\,\,\Big \langle  \Big(
\begin{array}{ccccc}
u \\[1mm]
v\\
\end{array}
\Big),\Big(
\begin{array}{ccccc}
D_{jk,1} \\[1mm]
D_{jk,2}\\
\end{array}
\Big)\Big\rangle=0 ,\,\,\mathbf{P}_m\in \Omega_m\Big\}.$$
Since the problem \eqref{2.2e} can be rewritten as
\begin{equation}\label{2.12}
\Big(
\begin{array}{ccccc}
\phi \\[1mm]
\psi\\
\end{array}
\Big)+\Big(
\begin{array}{ccccc}
\mathcal{K}_1\ & \mathcal{K}\\ [2mm]
\mathcal{K} \ & \mathcal{K}_2 \\
\end{array}
\Big)\Big(
\begin{array}{ccccc}
\phi \\[1mm]
\psi\\
\end{array}
\Big)=\bar{h}\,\,\,\text{in}\,\,\mathcal{H},
\end{equation}
where $\overline{h}$ is defined by duality and $\mathcal{K},\mathcal{K}_1,\mathcal{K}_2:\mathcal{H}\rightarrow\mathcal{H}$ are linear
compact operators. By Fredholm's alternative theorem, we know that \eqref{2.12} has a unique solution for each
$\bar{h}$ is equivalent to showing that the system has a unique solution for $\bar{h}=0$,
which in turn  follows from Proposition \ref{prop2.1}. This concludes the proof of Proposition \ref{prop2.2}.

\end{proof}
In the following, if $(\phi,\psi)$ is the unique solution given by Proposition \ref{prop2.2}, we denote
\begin{equation}\label{2.13}
(\phi,\psi)=\mathcal{A}(h)
\end{equation}
and \eqref{2.11} yields
\begin{equation}\label{2.14}
\|\mathcal{A}(h)\|_*\leq C\|h\|_*.
\end{equation}

Now we reduce \eqref{1.1} to a finite-dimensional one.
For large $\mu$ and fixed $\textbf{P}_m\in \Omega_m$,
we are going to find a function $\{(\phi_{\textbf{P}_m},\psi_{\textbf{P}_m})\}$ such that for some $\{c_{jk}\},j=1,\cdots,m,k=1,\cdots,N$,
the following nonlinear projected problem holds true
\begin{equation}\label{3.1}\left\{%
\begin{array}{ll}
\Big(
\begin{array}{ccccc}
\Delta (U_{\textbf{P}_m}+\phi_{\textbf{P}_m})-(1+\epsilon P(x))(U_{\textbf{P}_m}+\phi_{\textbf{P}_m})+
 (U_{\textbf{P}_m}+\phi_{\textbf{P}_m})^3
 +\beta(V_{\textbf{P}_m}+\psi_{\textbf{P}_m}) \\[1mm]
\Delta (V_{\textbf{P}_m}+\psi_{\textbf{P}_m})-(1+\epsilon Q(x))(V_{\textbf{P}_m}+\psi_{\textbf{P}_m})+
 (V_{\textbf{P}_m}+\psi_{\textbf{P}_m})^3+\beta(U_{\textbf{P}_m}+\phi_{\textbf{P}_m}) \\
\end{array}
\Big)\\[4mm]
=\Big(
\begin{array}{ccccc}
\sum_{j=1}^{m}\sum_{k=1}^{N}c_{jk}D_{jk,1},\vspace{0.2cm}\\
\sum_{i=1}^{m}\sum_{k=1}^{N}c_{jk}D_{jk,2}\\
\end{array}
\Big),\vspace{0.2cm}\\
\Big \langle  \Big(
\begin{array}{ccccc}
\phi_{\textbf{P}_m} \\[1mm]
\psi_{\textbf{P}_m}\\
\end{array}
\Big),\Big(
\begin{array}{ccccc}
D_{jk,1} \\[1mm]
D_{jk,2}\\
\end{array}
\Big)\Big\rangle=0 \,\,\,\text{for}\,\,j=1,\cdots,m,\,k=1,\cdots,N.
\end{array} %
\right.
\end{equation}

It is obvious that the first system in \eqref{3.1} can be rewritten as
\begin{equation}\label{3.2}
L\Big(
\begin{array}{ccccc}
\phi_{\textbf{P}_m} \\[1mm]
\psi_{\textbf{P}_m}\\
\end{array}
\Big)=-G\Big(
\begin{array}{ccccc}
U_{\textbf{P}_m} \\[1mm]
V_{\textbf{P}_m}\\
\end{array}
\Big)+M\Big(
\begin{array}{ccccc}
\phi_{\textbf{P}_m} \\[1mm]
\psi_{\textbf{P}_m}\\
\end{array}
\Big)+\ds\sum_{j=1}^{m}\sum_{k=1}^{N}c_{jk}\Big(
\begin{array}{ccccc}
D_{jk,1} \\[1mm]
D_{jk,2}\\
\end{array}
\Big),
\end{equation}
where\begin{equation}\label{3.3}
M\Big(
\begin{array}{ccccc}
\phi_{\textbf{P}_m} \\[1mm]
\psi_{\textbf{P}_m}\\
\end{array}
\Big)=\Big(
\begin{array}{ccccc}
(U_{\textbf{P}_m}+\phi_{\textbf{P}_m})^3-U_{\textbf{P}_m}^3-3U_{\textbf{P}_m}^2\phi_{\textbf{P}_m}\\[1mm]
(V_{\textbf{P}_m}+\psi_{\textbf{P}_m})^3-V_{\textbf{P}_m}^3-3V_{\textbf{P}_m}^2\psi_{\textbf{P}_m}\\
\end{array}
\Big).
\end{equation}

Now we come to the main result in this section.
 \begin{prop}\label{prop3.3}
There exist positive numbers $\mu_0$, $C$ and $\sigma>0$ such that for all $\mu>\mu_0$, and
for any $\textbf{P}_m\in \Omega_m$, $\epsilon<e^{-2\mu}$, there is a unique solution $\Big(\Big(
\begin{array}{ccccc}
\phi_{\textbf{P}_m} \\[1mm]
\psi_{\textbf{P}_m}\\
\end{array}
\Big),\{c_{jk}\}\Big)$ to problem \eqref{3.1}. Furthermore, $(\phi_{\textbf{P}_m},\psi_{\textbf{P}_m})$
is $C^1\times C^1$ in $\Omega_m$ and
\begin{equation}\label{3.6}
\|(\phi_{\textbf{P}_m},\psi_{\textbf{P}_m})\|_*\leq Ce^{-\sigma\mu},\,\,\,
|c_{j,k}|\leq Ce^{-\sigma\mu}.
\end{equation}
\end{prop}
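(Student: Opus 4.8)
The plan is to solve the nonlinear projected problem \eqref{3.1} via a contraction mapping argument in the ball $B:=\{(\phi,\psi)\in\mathcal H:\|(\phi,\psi)\|_*\le Ce^{-\sigma\mu}\}$ for a suitable $\sigma>0$. Using the operator $\mathcal A$ from \eqref{2.13}, equation \eqref{3.2} is equivalent to the fixed point problem
\begin{equation*}
\Big(\begin{array}{c}\phi_{\textbf{P}_m}\\ \psi_{\textbf{P}_m}\end{array}\Big)
=\mathcal A\Big(-G\Big(\begin{array}{c}U_{\textbf{P}_m}\\ V_{\textbf{P}_m}\end{array}\Big)
+M\Big(\begin{array}{c}\phi_{\textbf{P}_m}\\ \psi_{\textbf{P}_m}\end{array}\Big)\Big)=:\mathcal T\Big(\begin{array}{c}\phi_{\textbf{P}_m}\\ \psi_{\textbf{P}_m}\end{array}\Big),
\end{equation*}
and by \eqref{2.14} it suffices to estimate the $\|\cdot\|_*$-norm of the two right-hand terms. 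First I would estimate the error term $G(U_{\textbf{P}_m},V_{\textbf{P}_m})$. Since each $(U_{P_j},V_{P_j})$ solves \eqref{1.9}, a direct computation shows that the first component of $G$ equals
$$
\Big(\sum_j U_{P_j}\Big)^3-\sum_j U_{P_j}^3-\epsilon P(x)U_{\textbf{P}_m}+\beta\sum_j(V_{P_j}-U_{P_j}),
$$
and since $U_{P_j}=V_{P_j}$ the last sum vanishes; so $G$ is controlled by the interaction term $(\sum_j U_{P_j})^3-\sum_j U_{P_j}^3$ plus the potential term $\epsilon P U_{\textbf{P}_m}$. The interaction term is bounded in $\|\cdot\|_*$ by the standard estimate $\sum_{j\ne k}e^{-|x-P_j|}e^{-|x-P_k|}\lesssim e^{-\frac{\mu}{2\gamma}\cdot\gamma}F(x)$ (using $\min_{j\ne k}|P_j-P_k|\ge\mu/\gamma$ and $\nu<1$), giving a bound $Ce^{-\sigma\mu}$; the potential term is bounded by $C\epsilon\|U_{\textbf{P}_m}F^{-1}\|_\infty\le C\epsilon\le Ce^{-2\mu}$ because of the hypothesis $\epsilon<e^{-2\mu}$ and because $w(r)\le ce^{-r}$ while $\nu<1$ makes $e^{-\gamma|x-P_j|}\le F(x)$ up to constants. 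Hence $\|G(U_{\textbf{P}_m},V_{\textbf{P}_m})\|_*\le Ce^{-\sigma\mu}$.

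Next I would handle the quadratic term $M$. For scalars, $|(a+t)^3-a^3-3a^2t|\le C(|a|\,t^2+|t|^3)$, so using $\|\cdot\|_*\le C\|\cdot\|_\infty$-type bounds together with $\|U_{\textbf{P}_m}\|_\infty\le C$ one gets, for $(\phi,\psi),(\phi',\psi')\in B$,
$$
\|M(\phi,\psi)\|_*\le C\big(\|(\phi,\psi)\|_*^2+\|(\phi,\psi)\|_*^3\big),\qquad
\|M(\phi,\psi)-M(\phi',\psi')\|_*\le C\big(\|(\phi,\psi)\|_*+\|(\phi',\psi')\|_*\big)\|(\phi-\phi',\psi-\psi')\|_*;
$$
here one must be a little careful that the product of two terms decaying like $F$ is again bounded by $F$, which holds because $F^2\lesssim F$ pointwise when $\nu<1$ (each exponential is $\le1$), or more precisely one absorbs one copy of $F$ into the $L^\infty$ bound. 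Combining with \eqref{2.14}, on $B$ we get $\|\mathcal T(\phi,\psi)\|_*\le C\big(e^{-\sigma\mu}+e^{-2\sigma\mu}\big)\le \tfrac12 Ce^{-\sigma\mu}$ (adjusting constants, enlarging $\mu_0$), so $\mathcal T$ maps $B$ into itself, and the Lipschitz estimate gives contraction constant $\le C\cdot Ce^{-\sigma\mu}<1$ for $\mu$ large. The Banach fixed point theorem then yields a unique solution $(\phi_{\textbf{P}_m},\psi_{\textbf{P}_m})\in B$, hence $\|(\phi_{\textbf{P}_m},\psi_{\textbf{P}_m})\|_*\le Ce^{-\sigma\mu}$; feeding this bound back into \eqref{2.6} gives $|c_{jk}|\le Ce^{-\sigma\mu}$.

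For the $C^1$-dependence on $\textbf{P}_m$, I would differentiate the fixed point equation formally in $P_j$: the derivative $\partial_{P_j}(\phi_{\textbf{P}_m},\psi_{\textbf{P}_m})$ solves a linear problem of the same type \eqref{2.2e} (with an additional forcing coming from $\partial_{P_j}D_{jk}$, $\partial_{P_j}U_{\textbf{P}_m}$ and the adjustment of the orthogonality conditions), whose right-hand side is $O(e^{-\sigma\mu})$ in $\|\cdot\|_*$; applying Proposition \ref{prop2.1}/\ref{prop2.2} and the implicit function theorem in the Banach space $\mathcal H$ (with parameter $\textbf{P}_m\in\Omega_m$) then shows the solution map is $C^1$. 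The main obstacle I anticipate is the bookkeeping in the $\|\cdot\|_*$-norm estimates — in particular, verifying that the interaction term genuinely produces the gain $e^{-\sigma\mu}$ with the weight $F$ built from the \emph{shifted} exponentials, and that the cut-offs $\zeta_j$ entering $D_{jk}$ do not spoil the orthogonality differentiation; the linear coupling terms $\beta v$, $\beta u$ are in fact harmless here precisely because the ansatz has equal components $U_{P_j}=V_{P_j}$, so they cancel in the error $G$ and only contribute lower-order, already-controlled terms in $L$ and in the derivatives.
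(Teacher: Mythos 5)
Your proposal is correct and follows essentially the same route as the paper: the fixed point reformulation via the linear operator $\mathcal{A}$ from Proposition \ref{prop2.2}, the error estimate $\|G(U_{\textbf{P}_m},V_{\textbf{P}_m})\|_*\leq Ce^{-\sigma\mu}$ (Lemma \ref{lem3.1}, using the separation $\min_{j\neq k}|P_j-P_k|\geq\mu/\gamma$, the cancellation of the linear coupling from $U_{P_j}=V_{P_j}$, and $\epsilon<e^{-2\mu}$), the quadratic estimate on $M$ (Lemma \ref{lem3.2}), a contraction argument in the ball of radius $Ce^{-\sigma\mu}$, and then \eqref{2.6} for the bound on $|c_{jk}|$. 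The only cosmetic remark is that your pointwise claim $F^{2}\lesssim F$ should be justified by the uniform boundedness of $F$ coming from the separation of the points (which is exactly what underlies \eqref{linfinity}), as in your alternative phrasing of absorbing one factor of $F$ into the $L^{\infty}$ bound.
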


In order to apply the contraction theorem to prove Proposition \ref{prop3.3}, firstly we have to obtain the following two lemmas.

 \begin{lem}\label{lem3.1}
Assume that $0<\nu<1$. For $\mu$ large enough, and any $\textbf{P}_m\in \Omega_m$, $\epsilon<e^{-2\mu}$, we have
\begin{equation}\label{3.4}
\Big\|G\Big(
\begin{array}{ccccc}
U_{\textbf{P}_m} \\[1mm]
V_{\textbf{P}_m}\\
\end{array}
\Big)\Big\|_*\leq Ce^{-\sigma\mu}
\end{equation}
for some constants $\sigma$ and $C$ independent of $\mu$, $m$ and $\textbf{P}_m$.
\end{lem}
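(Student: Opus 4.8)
The plan is to estimate $\|G(U_{\mathbf{P}_m},V_{\mathbf{P}_m})\|_*$ directly by writing out $G$ applied to the approximate solution and exploiting the fact that each individual block $(U_{P_j},V_{P_j})$ solves \eqref{1.9}. First I would compute the first component of $G(U_{\mathbf{P}_m},V_{\mathbf{P}_m})$. Using that $\Delta U_{P_j}-(1-\beta)U_{P_j}+U_{P_j}^3=0$ (from \eqref{1.9}), summing over $j$ gives $\Delta U_{\mathbf{P}_m}-(1-\beta)U_{\mathbf{P}_m}+\sum_j U_{P_j}^3=0$, so that
\begin{align*}
G\Big(\begin{array}{c} U_{\mathbf{P}_m}\\ V_{\mathbf{P}_m}\end{array}\Big)_1
&=\Big(U_{\mathbf{P}_m}^3-\sum_{j=1}^m U_{P_j}^3\Big)
-\epsilon P(x)U_{\mathbf{P}_m}
+\beta\big(V_{\mathbf{P}_m}-U_{\mathbf{P}_m}\big)+\big(1-\beta\big)U_{\mathbf{P}_m}-\big(1+\epsilon P(x)\big)U_{\mathbf{P}_m}+\cdots
\end{align*}
(I would organize this carefully: the $(1-\beta)$ term from \eqref{1.9} cancels against $-(1+\epsilon P(x))U_{\mathbf{P}_m}+\beta V_{\mathbf{P}_m}$ up to the terms $-\epsilon P(x)U_{\mathbf{P}_m}$ and $\beta(V_{\mathbf{P}_m}-U_{\mathbf{P}_m})$; but since $U=V$ and $U_{P_j}=V_{P_j}$ for every $j$, we have $U_{\mathbf{P}_m}=V_{\mathbf{P}_m}$, so the linear coupling term $\beta(V_{\mathbf{P}_m}-U_{\mathbf{P}_m})$ vanishes identically). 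Thus the error reduces to exactly two contributions: the ``cross term'' $U_{\mathbf{P}_m}^3-\sum_j U_{P_j}^3$ coming from the nonlinearity, and the potential term $-\epsilon P(x)U_{\mathbf{P}_m}$ (and symmetrically $-\epsilon Q(x)V_{\mathbf{P}_m}$ in the second component). This is the key simplification afforded by the synchronized ansatz: the linear coupling produces no new error.

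Next I would estimate the two surviving terms in the $\|\cdot\|_*$ norm. For the cubic cross term, one writes $U_{\mathbf{P}_m}^3-\sum_j U_{P_j}^3=\sum_j\big[(\sum_k U_{P_k})^3-U_{P_j}^3\big]$ localized appropriately, and on a neighborhood of $P_j$ the dominant part is of order $U_{P_j}^2\sum_{k\neq j}U_{P_k}$. Using the exponential decay $w(r)\le c e^{-r}$ together with $|P_j-P_k|\ge\mu/\gamma$ for $\mathbf P_m\in\Omega_m$, and the interaction estimate that $\sum_{k\neq j}e^{-\gamma|x-P_k|}$ is bounded by $C e^{-(1-\nu)\gamma\mu}F(x)$ near $P_j$, one gets a bound $Ce^{-\sigma\mu}F(x)$ pointwise with $\sigma=\sigma(\nu)>0$; dividing by $F$ and taking the supremum gives the $\|\cdot\|_*$ bound. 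This is the standard ``separation of bumps'' computation and I would cite the analogous estimate from \cite{awz} or \cite{aw} rather than redo it in detail. For the potential term, $(K_1)$ gives $P(x)\to0$ as $|x|\to\infty$, so $|\epsilon P(x)U_{\mathbf{P}_m}|\le \epsilon\|P\|_\infty U_{\mathbf{P}_m}$; since $U_{\mathbf{P}_m}(x)\le C\sum_j e^{-\gamma|x-P_j|}\le C F(x)$ (as $\nu<1$), dividing by $F$ yields $\le C\epsilon$. Using the hypothesis $\epsilon<e^{-2\mu}$ of the proposition, this is $\le Ce^{-2\mu}\le Ce^{-\sigma\mu}$ after possibly shrinking $\sigma$. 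Combining both contributions and doing the same for the second component gives \eqref{3.4}.

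The main obstacle, and the only place requiring genuine care, is getting the right exponent $\sigma$ in the cubic-interaction estimate: one must choose $\nu$ in the definition of $F$ correctly so that $e^{-\gamma|x-P_j|}e^{-\gamma|x-P_k|}\le e^{-(2-\nu)\gamma\mu-\nu\gamma|x-P_j|}\cdot(\text{something}\le F)$ actually holds with a positive power of $\mu$ to spare, uniformly in $m$ and $\mathbf P_m\in\Omega_m$ — this uses the triangle-inequality bookkeeping $|x-P_j|+|x-P_k|\ge|P_j-P_k|\ge\mu/\gamma$ and the fact that the number of ``close'' centers is controlled. Because $w$ also carries a polynomial prefactor $r^{-(N-1)/2}$, one has slightly \emph{better} than pure exponential decay, which only helps. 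Everything else — the absence of a linear-coupling error, the potential term's smallness via $\epsilon<e^{-2\mu}$ — is immediate. So the proof is: (i) expand $G$ and observe the cancellations leaving only the cubic cross term and the $\epsilon$-potential term; (ii) bound the cubic cross term by $Ce^{-\sigma\mu}F$ using bump separation; (iii) bound the potential term by $C\epsilon\le Ce^{-\sigma\mu}$; (iv) conclude \eqref{3.4}.
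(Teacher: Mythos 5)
Your proposal is correct and follows essentially the same route as the paper's proof: expand $G(U_{\mathbf{P}_m},V_{\mathbf{P}_m})$ using the system \eqref{1.9} satisfied by each bump (the linear coupling contributes no error since $U_{\mathbf{P}_m}=V_{\mathbf{P}_m}$), leaving only the cubic interaction term $U_{\mathbf{P}_m}^3-\sum_j U_{P_j}^3$ and the potential term $\epsilon P(x)U_{\mathbf{P}_m}$, which are then bounded in the $\|\cdot\|_*$ norm by $Ce^{-\sigma\mu}$ via the bump separation $|P_j-P_k|\geq\mu/\gamma$ (the paper carries out the near/far region split you defer to the references) and by $C\epsilon\leq Ce^{-2\mu}$, respectively.
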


\begin{proof}
Using the system \eqref{1.9} satisfied by $(U_{P_j},V_{P_j})$, $j=1,\cdots,m$ , we have
\begin{align*}
G\Big(
\begin{array}{ccccc}
U_{\textbf{P}_m} \\[1mm]
V_{\textbf{P}_m}\\
\end{array}
\Big)&=\Big(
\begin{array}{ccccc}
\Delta U_{\textbf{P}_m}-(1+\epsilon P(x))U_{\textbf{P}_m}+U_{\textbf{P}_m}^3+\beta V_{\textbf{P}_m} \\[1mm]
\Delta V_{\textbf{P}_m}-(1+\epsilon Q(x))V_{\textbf{P}_m}+V_{\textbf{P}_m}^3+\beta U_{\textbf{P}_m}
\end{array}
\Big)\\
&=\Big(
\begin{array}{ccccc}
U_{\textbf{P}_m}^3-\sum_{j=1}^{m}U_{P_j}^3 \\[1mm]
V_{\textbf{P}_m}^3-\sum_{j=1}^{m}V_{P_j}^3
\end{array}
\Big)-\Big(
\begin{array}{ccccc}
\epsilon P(x)U_{\textbf{P}_m} \\[1mm]
\epsilon Q(x)V_{\textbf{P}_m}
\end{array}
\Big).
\end{align*}
Fix $k\in\{1,\cdots,m\}$ and consider the region $|x-P_k|\leq\mu/2\gamma$. In this region, we have
\begin{align*}
\Big|U_{\textbf{P}_m}^3-\ds\sum_{j=1}^mU_{P_j}^3\Big|
&=\Big|U_{\textbf{P}_m}^3-U_{P_k}^3-\ds\sum_{j\neq k}U_{P_j}^3\Big|\leq C\Big|U_{P_k}^2\ds\sum_{j\neq k}U_{P_j}+\ds\sum_{j\neq k}U_{P_j}^3\Big|\\
&\leq C\big(e^{-\frac{1}{2}\mu}e^{-2\gamma|x-P_k|}+e^{-\frac{3}{2}\mu}\big)
\leq Ce^{-\sigma\mu}e^{-\nu\gamma|x-P_k|}
\end{align*}
and similarly,$$
\Big|V_{\textbf{P}_m}^3-\ds\sum_{j=1}^{m}V_{P_j}^3\Big|
\leq Ce^{-\sigma\mu}e^{-\nu\gamma|x-P_k|}
$$
for a proper choice of $\sigma>0$.
Consider the region $|x-P_k|>\mu/2\gamma$ for all $k\in\{1,\cdots,m\}$. We have
\begin{align*}
&\Big|U_{\textbf{P}_m}^3-\ds\sum_{j=1}^mU_{P_j}^3\Big|
\leq C\ds\sum_{k=1}^{m}U_{P_k}^3\leq C\ds\sum_{k=1}^{m}e^{-3\gamma|x-P_k|}\\
&\leq C\ds\sum_{k=1}^{m}e^{-\nu\gamma|x-P_k|}e^{-(3-\nu)\gamma|x-P_k|}
\leq Ce^{-\sigma\mu}\ds\sum_{k=1}^{m}e^{-\nu\gamma|x-P_k|}
\end{align*}
 and
 $$
\Big|V_{\textbf{P}_m}^3-\ds\sum_{j=1}^{m}V_{P_j}^3\Big|
\leq Ce^{-\sigma\mu}\ds\sum_{k=1}^{m}e^{-\nu\gamma|x-P_k|}
$$
for a proper choice of $\sigma>0$.
Now, under the assumption on $\epsilon$, it is easy to see that
$$
\left|\epsilon P(x)U_{\textbf{P}_m}\right|
\leq Ce^{-\sigma\mu}\ds\sum_{k=1}^{m}e^{-\nu\gamma|x-P_k|}
$$
and$$
\left|\epsilon Q(x)V_{\textbf{P}_m}\right|
\leq Ce^{-\sigma\mu}\ds\sum_{k=1}^{m}e^{-\nu\gamma|x-P_k|}.
$$

Thus using the above estimates, we have
$$
\Big\|G\Big(
\begin{array}{ccccc}
U_{\textbf{P}_m} \\[1mm]
V_{\textbf{P}_m}\\
\end{array}
\Big)\Big\|_*\leq Ce^{-\sigma\mu}
$$
for some $\sigma>0$.
\end{proof}

Letting $r>0$, define $$
\mathcal{B}=\Big\{(\phi,\psi)\in \textbf{H}\,\,:\,\|(\phi,\psi)\|_*\leq re^{-\sigma\mu},\,\,\Big \langle  \Big(
\begin{array}{ccccc}
\phi \\[1mm]
\psi\\
\end{array}
\Big),\Big(
\begin{array}{ccccc}
D_{jk,1} \\[1mm]
D_{jk,2}\\
\end{array}
\Big)\Big\rangle=0\Big\}.
$$

\begin{lem}\label{lem3.2}
 For any $\textbf{P}_m\in \Omega_m$, if $(\phi,\psi)\in \mathcal{B}$, we have
 \begin{equation}\label{3.5}
\Big\|M\Big(
\begin{array}{ccccc}
\phi \\[1mm]
\psi\\
\end{array}
\Big)\Big\|_*\leq Ce^{-\sigma\mu}
\end{equation}
for some constants $\sigma$ and $C$ independent of $\mu$, $m$ and $\textbf{P}_m$.
  \end{lem}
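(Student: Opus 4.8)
The plan is to estimate $M(\phi,\psi)$ pointwise, region by region, exploiting the cubic nonlinearity together with the smallness of $(\phi,\psi)$ in $\mathcal{B}$ and the $L^\infty$--$\|\cdot\|_*$ comparison \eqref{linfinity}. Writing out the algebraic identity
\[
(U_{\textbf{P}_m}+\phi)^3-U_{\textbf{P}_m}^3-3U_{\textbf{P}_m}^2\phi=3U_{\textbf{P}_m}\phi^2+\phi^3,
\]
so that the first component of $M(\phi,\psi)$ is $3U_{\textbf{P}_m}\phi^2+\phi^3$ and, analogously, the second is $3V_{\textbf{P}_m}\psi^2+\psi^3$. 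Hence
\[
\Big\|M\Big(\begin{array}{c}\phi\\ \psi\end{array}\Big)\Big\|_*
\le 3\big\|U_{\textbf{P}_m}\phi^2\big\|_*+\big\|\phi^3\big\|_*
+3\big\|V_{\textbf{P}_m}\psi^2\big\|_*+\big\|\psi^3\big\|_*,
\]
and it suffices to bound each of the four terms separately.

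For the cubic terms, I would use $\|\phi^3\|_*=\sup_x|F(x)^{-1}\phi^3(x)|\le \|\phi\|_\infty^2\,\sup_x|F(x)^{-1}\phi(x)|\le \|\phi\|_\infty^2\,\|(\phi,\psi)\|_*$, and then apply \eqref{linfinity} to get $\|\phi\|_\infty\le C\|(\phi,\psi)\|_*\le Cre^{-\sigma\mu}$; thus $\|\phi^3\|_*\le C r^3 e^{-3\sigma\mu}\le Ce^{-\sigma\mu}$ for $\mu$ large. For the term $\|U_{\textbf{P}_m}\phi^2\|_*$ I would similarly factor one power of $\phi$ in the $\|\cdot\|_*$-norm and absorb $\phi U_{\textbf{P}_m}$ in $L^\infty$: since $U_{\textbf{P}_m}=\sum_j U_{P_j}$ with $U_{P_j}=\gamma w(\gamma(x-P_j))$ and $w$ bounded, $\|U_{\textbf{P}_m}\|_\infty\le C$ uniformly in $m$ and $\textbf{P}_m$ (here one uses that for $\mathbf P_m\in\Omega_m$ the bumps are $\mu/\gamma$-separated, so at any point only $O(1)$ of the exponentially decaying summands are not negligible — this is exactly the estimate already used repeatedly in Lemma \ref{lem3.1}). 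Then $\|U_{\textbf{P}_m}\phi^2\|_*\le \|U_{\textbf{P}_m}\|_\infty\,\|\phi\|_\infty\,\|(\phi,\psi)\|_*\le C\cdot Cre^{-\sigma\mu}\cdot re^{-\sigma\mu}\le Ce^{-\sigma\mu}$, and symmetrically for the $V$-component. Collecting the four bounds gives \eqref{3.5}, with $C$ depending on $r$ but independent of $\mu$, $m$ and $\textbf{P}_m$, and (by shrinking if necessary) a fixed $\sigma>0$ inherited from the definition of $\mathcal{B}$.

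The only genuinely delicate point is the uniform-in-$m$ bound $\|U_{\textbf{P}_m}\|_{L^\infty}\le C$: a priori the sum over $m$ bumps could grow, but the configuration constraint $\min_{j\ne k}|P_j-P_k|\ge \mu/\gamma$ together with the exponential decay $w(r)\le ce^{-r}$ means that, at a point near $P_k$, the contributions of all $P_j$ with $j\ne k$ form a geometrically convergent series bounded by $Ce^{-\mu}$, so $\|U_{\textbf{P}_m}\|_\infty\le \gamma\|w\|_\infty+Ce^{-\mu}\le C$. This is the same packing argument underlying the estimates in the proof of Lemma \ref{lem3.1}, so I would simply invoke it. Everything else is routine: the cubic structure of the nonlinearity converts the $O(e^{-\sigma\mu})$ smallness of $(\phi,\psi)$ into an $O(e^{-\sigma\mu})$ (indeed $O(e^{-2\sigma\mu})$) bound for $M(\phi,\psi)$, which is what is needed for the contraction-mapping argument in Proposition \ref{prop3.3}.
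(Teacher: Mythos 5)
Your proof is correct and follows essentially the same route as the paper: the paper bounds the remainder by $C(|\phi|^2+|\phi|^3)$ via the mean-value theorem (your explicit identity $3U_{\textbf{P}_m}\phi^2+\phi^3$ is the same content), then converts to the $\|\cdot\|_*$-norm using \eqref{linfinity} and $\|(\phi,\psi)\|_*\le re^{-\sigma\mu}$, exactly as you do. Your explicit packing argument for $\|U_{\textbf{P}_m}\|_{L^\infty}\le C$ merely spells out what is implicit in the paper's constant $C$, so there is no substantive difference.
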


\begin{proof}
 By direct computation and the mean-value theorem, we have
$$|(U_{\textbf{P}_m}+\phi)^3-U_{\textbf{P}_m}^3-3U_{\textbf{P}_m}^2\phi|
\leq C(|\phi|^2+|\phi|^3)$$
and$$
|(V_{\textbf{P}_m}+\psi)^3-V_{\textbf{P}_m}^3-3V_{\textbf{P}_m}^2\psi|
\leq C(|\psi|^2+|\psi|^3).$$
As a result,
\begin{align*}
\Big\|M\Big(
\begin{array}{ccccc}
\phi \\[1mm]
\psi\\
\end{array}
\Big)\Big\|_* &\leq C\|\phi\|_*(|\phi|+\phi^2)+C\|\psi\|_*(|\psi|+\psi^2)\\
&\leq C\big(\|\phi\|_*^2+\|\phi\|_*^3\big)+C(\|\psi\|_*^2+\|\psi\|_*^3)\\
&\leq C(r^2e^{-2\sigma\mu}+r^3e^{-3\sigma\mu})
\leq Ce^{-\sigma\mu}
\end{align*}
for a proper $\sigma$ independent of $\mu,m$ and $\textbf{P}_m$.

 \end{proof}

Now we are in position to prove Proposition \ref{prop3.3}.

\begin{proof}[ \textbf{Proof of Proposition \ref{prop3.3}}]
We will use the contraction mapping theorem to prove it.
Notice that $(\phi_{\textbf{P}_m},\psi_{\textbf{P}_m})$ solves \eqref{3.1} if and only if
$$
\Big(
\begin{array}{ccccc}
\phi_{\textbf{P}_m} \\[1mm]
\psi_{\textbf{P}_m}\\
\end{array}
\Big)=\mathcal{A}\Big(G\Big(
\begin{array}{ccccc}
U_{\textbf{P}_m} \\[1mm]
V_{\textbf{P}_m}\\
\end{array}
\Big)+M\Big(
\begin{array}{ccccc}
\phi_{\textbf{P}_m} \\[1mm]
\psi_{\textbf{P}_m}\\
\end{array}
\Big)\Big),
$$
where $\mathcal{A}$ is the operator given by \eqref{2.13}. In other words,
$(\phi_{\textbf{P}_m},\psi_{\textbf{P}_m})$ solves \eqref{3.1} if and only if
$(\phi_{\textbf{P}_m},\psi_{\textbf{P}_m})$ is a fixed point for the operator
$$
\mathcal{T}\Big(
\begin{array}{ccccc}
\phi \\[1mm]
\psi\\
\end{array}
\Big)=:\mathcal{A}\Big(G\Big(
\begin{array}{ccccc}
U_{\textbf{P}_m} \\[1mm]
V_{\textbf{P}_m}\\
\end{array}
\Big)+M\Big(
\begin{array}{ccccc}
\phi \\[1mm]
\psi\\
\end{array}
\Big)\Big).
$$

We will prove that $\mathcal{T}$ is a contraction mapping from $\mathcal{B}$ to itself.
On one hand, by \eqref{2.14}, Lemmas \ref{lem3.1} and \ref{lem3.2}, we have for any $(\phi,\psi)\in\mathcal{B}$,
$$
\Big\|\mathcal{T}\Big(
\begin{array}{ccccc}
\phi \\[1mm]
\psi\\
\end{array}
\Big)\Big\|_*
\leq
C\Big\|G\Big(
\begin{array}{ccccc}
U_{\textbf{P}_m} \\[1mm]
V_{\textbf{P}_m}\\
\end{array}
\Big)+M\Big(
\begin{array}{ccccc}
\phi \\[1mm]
\psi\\
\end{array}
\Big)\Big\|_*\leq Ce^{-\sigma\mu}.
$$

On the other hand, taking $\Big(
\begin{array}{ccccc}
\phi_1 \\[1mm]
\psi_1\\
\end{array}
\Big)$ and $\Big(
\begin{array}{ccccc}
\phi_2 \\[1mm]
\psi_2\\
\end{array}
\Big)$ in $\mathcal{B}$, we have
$$M\Big(
\begin{array}{ccccc}
\phi_1 \\[1mm]
\psi_1\\
\end{array}
\Big)-M\Big(
\begin{array}{ccccc}
\phi_2 \\[1mm]
\psi_2\\
\end{array}
\Big)=\Big(
\begin{array}{ccccc}
(U_{\textbf{P}_m}+\phi_1)^3-(U_{\textbf{P}_m}+\phi_2)^3-3U_{\textbf{P}_m}^2(\phi_{1}-\phi_2) \\[1mm]
(V_{\textbf{P}_m}+\psi_1)^3-(V_{\textbf{P}_m}+\psi_2)^3-3V_{\textbf{P}_m}^2(\psi_{1}-\psi_2)\\
\end{array}
\Big),$$
\begin{align*}
~~~&\big|(U_{\textbf{P}_m}+\phi_1)^3-(U_{\textbf{P}_m}+\phi_2)^3-3U_{\textbf{P}_m}^2(\phi_{1}-\phi_2)\big|\\
&=\big|(\phi_{1}-\phi_2)(3U_{\textbf{P}_m}(\phi_1+\phi_2)+\phi_1^2+\phi_1\phi_2+\phi_2^2)\big|\\
&\leq C|\phi_{1}-\phi_2|\left(|\phi_1|+|\phi_2|\right)
\end{align*}
and
$$
|(V_{\textbf{P}_m}+\psi_1)^3-(V_{\textbf{P}_m}+\psi_2)^3-3V_{\textbf{P}_m}^2(\psi_{1}-\psi_2)|
\leq C|\psi_{1}-\psi_2|(|\psi_1|+|\psi_2|).
$$
So we can infer that
\begin{align*}
\Big\|\mathcal{T}\Big(
\begin{array}{ccccc}
\phi_1 \\[1mm]
\psi_1\\
\end{array}
\Big)-\mathcal{T}\Big(
\begin{array}{ccccc}
\phi_2 \\[1mm]
\psi_2\\
\end{array}
\Big)\Big\|_* &\leq C\Big\|M\Big(
\begin{array}{ccccc}
\phi_1 \\[1mm]
\psi_1\\
\end{array}
\Big)-M\Big(
\begin{array}{ccccc}
\phi_2 \\[1mm]
\psi_2\\
\end{array}
\Big)\Big\|_*\\
&\leq C\|\phi_{1}-\phi_2\|_*(\|\phi_{1}\|_*+\|\phi_2\|_*)
+C\|\psi_{1}-\psi_2\|_*(\|\psi_{1}\|_*+\|\psi_2\|_*)\\
&\leq \frac{1}{2}(\|\phi_{1}-\phi_2\|_*+\|\psi_{1}-\psi_2\|_*)\\
&=\frac{1}{2}\Big\|\Big(
\begin{array}{ccccc}
\phi_1 \\[1mm]
\psi_1\\
\end{array}
\Big)-\Big(
\begin{array}{ccccc}
\phi_2 \\[1mm]
\psi_2\\
\end{array}
\Big)\Big\|_*.
\end{align*}
This means that $\mathcal{T}$ is a contraction mapping from $\mathcal{B}$ to itself.
By the contraction mapping theorem, there exists a unique $\Big(
\begin{array}{ccccc}
\phi_{\textbf{P}_m} \\[1mm]
\psi_{\textbf{P}_m}\\
\end{array}
\Big)\in\mathcal{B}$ such that \eqref{3.1} holds. So,
$$
\Big\|\Big(\begin{array}{ccccc}
\phi_{\textbf{P}_m} \\[1mm]
\psi_{\textbf{P}_m}\\
\end{array}
\Big)\Big\|_*=\Big\|\mathcal{T}\Big(
\begin{array}{ccccc}
\phi_{\textbf{P}_m} \\[1mm]
\psi_{\textbf{P}_m}\\
\end{array}
\Big)\Big\|_*\leq Ce^{-\sigma\mu}.
$$

Furthermore, combing \eqref{2.6}, \eqref{3.4} and \eqref{3.5}, we find
$$|c_{j,k}|\leq Ce^{-\sigma\mu}.$$

\end{proof}

\section{{A secondary Liapunov-Schmidt reduction}}\label{s5}

In this section, we prove a key estimate on the difference between the solutions in the $m$-th step and
the $(m+1)$-th step. This second Liapunov-Schmidt reduction has been used in \cite{aw,awz}.
For $\textbf{P}_{m}\in \Omega_m$, we denote
$$\Big(
\begin{array}{ccccc}
u_{\mathbf{P}_m} \\[1mm]
v_{\mathbf{P}_m}\\
\end{array}
\Big):=\Big(
\begin{array}{ccccc}
U_{\textbf{P}_m}+\phi_{\textbf{P}_m} \\[1mm]
V_{\textbf{P}_m}+\psi_{\textbf{P}_m}\\
\end{array}
\Big),$$
where $\Big(
\begin{array}{ccccc}
\phi_{\textbf{P}_m} \\[1mm]
\psi_{\textbf{P}_m}\\
\end{array}
\Big)$ is the unique solution given by Proposition \ref{prop3.3}.

We now write
\begin{equation}\label{4.1}
\arraycolsep=1.5pt
\begin{array}{rl}\displaystyle
\Big(
\begin{array}{ccccc}
u_{\mathbf{P}_{m+1}} \\[1mm]
v_{\mathbf{P}_{m+1}}\\
\end{array}
\Big)&=\Big(
\begin{array}{ccccc}
u_{\textbf{P}_m} \\[1mm]
v_{\textbf{P}_m}\\
\end{array}
\Big)+\Big(
\begin{array}{ccccc}
U_{P_{m+1}} \\[1mm]
V_{P_{m+1}}\\
\end{array}
\Big)+\varphi_{m+1}=\Big(
\begin{array}{ccccc}
\bar{U} \\[1mm]
\bar{V}\\
\end{array}
\Big)+\Big(
\begin{array}{ccccc}
\varphi_{m+1,1} \\[1mm]
\varphi_{m+1,2}\\
\end{array}
\Big),
\end{array}
\end{equation}
where $\Big(
\begin{array}{ccccc}
\bar{U} \\[1mm]
\bar{V}\\
\end{array}
\Big)=\Big(
\begin{array}{ccccc}
u_{\textbf{P}_m} \\[1mm]
v_{\textbf{P}_m}\\
\end{array}
\Big)+\Big(
\begin{array}{ccccc}
U_{P_{m+1}} \\[1mm]
V_{P_{m+1}}\\
\end{array}
\Big)$.

By Proposition \ref{prop3.3}, we can easily derive that
\begin{equation}\label{4.2}
\|(\varphi_{m+1,1},\varphi_{m+1,2})\|_*\leq Ce^{-\sigma\mu}.
\end{equation}
But the estimate is not sufficient, we need a key estimate for $\Big(
\begin{array}{ccccc}
\varphi_{m+1,1} \\[1mm]
\varphi_{m+1,2}\\
\end{array}
\Big)$ which will be given later. In the following we will always assume that $\nu>\frac{1}{2}$.

\begin{lem} \label{lem4.2}

Letting $\mu$ and $\epsilon$ be as in Proposition \ref{prop3.3}, then it holds
\begin{equation}\label{4.3}
\arraycolsep=1.5pt
\begin{array}{rl}\displaystyle
&\,\,\,\,\,\,\ds\int_{\R^N}\left(|\nabla \varphi_{m+1,1}|^2+|\varphi_{m+1,1}|^2+|\nabla \varphi_{m+1,2}|^2+|\varphi_{m+1,2}|^2\right)\\[2mm]
&\leq Ce^{-\sigma\mu}\ds\sum_{j=1}^mw(\gamma|P_{m+1}-P_j|)
+C\epsilon^2\Big[\Big(\ds\int_{\R^N}\left(|P(x)|U_{P_{m+1}}+|Q(x)|V_{P_{m+1}}\right)\Big)^2\\[2mm]
&\,\,\,\,\,\,+\ds\int_{\R^N}\left(|P(x)|^2U_{P_{m+1}}^2+|Q(x)|^2V_{P_{m+1}}^2\right)\Big]
\end{array}
\end{equation}
for some $C>0,\sigma>0$ independent of $\mu,m$ and $\mathbf{P}_{m+1}\in\Omega_{m+1}$.
\end{lem}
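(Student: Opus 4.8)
The plan is to derive an equation for $\varphi_{m+1}$ by subtracting the equations satisfied by $(u_{\mathbf{P}_{m+1}},v_{\mathbf{P}_{m+1}})$ and by $(\bar U,\bar V)$, test it against $\varphi_{m+1}$ itself, and then control the resulting error terms. First I would write down that both $(u_{\mathbf{P}_m},v_{\mathbf{P}_m})$ and $(u_{\mathbf{P}_{m+1}},v_{\mathbf{P}_{m+1}})$ solve projected problems of the form \eqref{3.1} (with $m$ and $m+1$ respectively), each with its own Lagrange multipliers $\{c_{jk}^{(m)}\}$, $\{c_{jk}^{(m+1)}\}$. Plugging the decomposition \eqref{4.1} into the $(m+1)$-equation and using linearity of $\Delta$ and of the coupling, I obtain a relation of the schematic form
\[
\bar L\,\varphi_{m+1} \;=\; -\,G\!\begin{pmatrix}\bar U\\ \bar V\end{pmatrix} \;+\; N(\varphi_{m+1}) \;+\; \sum_{j,k}\bigl(c_{jk}^{(m+1)} - c_{jk}^{(m)}\bigr)D_{jk}^{(m+1)} \;+\; (\text{correction from }D^{(m)}\text{ vs }D^{(m+1)}),
\]
where $\bar L$ is the linearized operator around $(\bar U,\bar V)$ and $N$ is a quadratic-and-higher nonlinearity. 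Testing against $\varphi_{m+1}=(\varphi_{m+1,1},\varphi_{m+1,2})$, the quadratic form $\int \varphi_{m+1}\,\bar L\,\varphi_{m+1}$ dominates $\int(|\nabla\varphi_{m+1,1}|^2+\varphi_{m+1,1}^2+|\nabla\varphi_{m+1,2}|^2+\varphi_{m+1,2}^2)$ up to a fixed constant — this coercivity is essentially the non-degeneracy/invertibility content of Lemma \ref{lem4.1} and Proposition \ref{prop2.1}, transplanted from the single-bump setting to the multi-bump configuration, using that the cross terms between distinct bumps are exponentially small because $\mathbf{P}_{m+1}\in\Omega_{m+1}$ forces $|P_j-P_k|\geq \mu/\gamma$.

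The core of the estimate is then bounding the right-hand side. For the first term, $G\!\begin{pmatrix}\bar U\\ \bar V\end{pmatrix}$ splits into a purely coupling/interaction part and an $\epsilon$-potential part. The interaction part, coming from $(\sum U_{P_j})^3 - \sum U_{P_j}^3$ type expressions where the sum now includes the new center $P_{m+1}$, produces the genuinely new contribution: the dominant piece is the overlap between $U_{P_{m+1}}$ and each $U_{P_j}$, $j\le m$, which by the asymptotics of $w$ is of order $\sum_{j=1}^m w(\gamma|P_{m+1}-P_j|)$; carrying this through Cauchy–Schwarz against $\varphi_{m+1}$ and absorbing one factor of $\|\varphi_{m+1}\|$ into the left side gives the $Ce^{-\sigma\mu}\sum_{j=1}^m w(\gamma|P_{m+1}-P_j|)$ term (the $e^{-\sigma\mu}$ gain coming from $\nu>\tfrac12$ and the slack in the exponents, exactly as in Lemma \ref{lem3.1}). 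The $\epsilon$-potential part contributes $\epsilon P(x)(U_{\mathbf{P}_m}+\phi_{\mathbf{P}_m}) + \epsilon P(x)U_{P_{m+1}} - (\text{already present at step }m)$; after cancellation the new piece is $\epsilon P(x)U_{P_{m+1}}$ (and similarly with $Q$, $V$), and pairing against $\varphi_{m+1}$ and using $\|\varphi_{m+1}\|$-smallness yields the $\epsilon^2\bigl[(\int |P|U_{P_{m+1}}+|Q|V_{P_{m+1}})^2 + \int(|P|^2U_{P_{m+1}}^2+|Q|^2V_{P_{m+1}}^2)\bigr]$ terms — the squared-integral form arising because one also has to project out the kernel directions, which introduces a term controlled by $\langle \epsilon P U_{P_{m+1}}, D_{jk}\rangle$ type inner products. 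The nonlinear term $N(\varphi_{m+1})$ is handled by the same mean-value-theorem estimates as in Lemma \ref{lem3.2} together with \eqref{linfinity} and \eqref{4.2}: it is of size $\|\varphi_{m+1}\|_*\cdot\|\varphi_{m+1}\|^2 \lesssim e^{-\sigma\mu}\|\varphi_{m+1}\|^2$, hence absorbable into the left side for $\mu$ large. The Lagrange-multiplier term is orthogonal in the relevant sense: $\langle D_{jk}^{(m+1)},\varphi_{m+1}\rangle$ is either zero or exponentially small from the cut-off mismatch, so $\sum(c_{jk}^{(m+1)}-c_{jk}^{(m)})\langle D_{jk},\varphi_{m+1}\rangle$ contributes at most a lower-order term, using \eqref{3.6} to bound the multipliers.

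I expect the main obstacle to be the bookkeeping of the $\epsilon$-potential contribution and, relatedly, isolating precisely the "new" error created at step $m+1$ rather than re-estimating the whole thing. Concretely: showing that the difference $G(\bar U,\bar V) - [\text{the analogous quantity that was already small at step }m]$ is what controls $\varphi_{m+1}$, so that the step-$m$ errors do not accumulate — this is the assertion flagged in the introduction that "the accumulated error can be controlled from step $m$ to step $(m+1)$." This requires carefully exploiting that $(u_{\mathbf{P}_m},v_{\mathbf{P}_m})$ already satisfies its own projected equation, so that in the subtraction the terms $\Delta u_{\mathbf{P}_m} - (1+\epsilon P)u_{\mathbf{P}_m} + u_{\mathbf{P}_m}^3 + \beta v_{\mathbf{P}_m}$ collapse to $\sum_{j,k}c_{jk}^{(m)}D_{jk,1}^{(m)}$, leaving only genuinely new cross-terms involving $U_{P_{m+1}}$. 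A secondary technical point is verifying the coercivity of $\bar L$ uniformly in $\mathbf{P}_{m+1}\in\Omega_{m+1}$ and in $m$; this is morally identical to Proposition \ref{prop2.1} but must be stated for the $(m+1)$-bump linearization, and one uses the same contradiction-plus-compactness argument, the limit profile being killed by Lemma \ref{lem4.1}. Everything else — the asymptotic bound on the overlap by $w(\gamma|P_{m+1}-P_j|)$, the quadratic estimates on $N$, the Cauchy–Schwarz absorptions — is routine given Lemmas \ref{lem3.1}, \ref{lem3.2} and Proposition \ref{prop3.3}.
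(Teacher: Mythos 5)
Your overall set-up (subtracting the step-$m$ projected equation so that only the genuinely new error $\bar G$ involving $U_{P_{m+1}}$ survives, estimating $\|\bar G\|_{L^2}$ by the bump interactions plus the $\epsilon$-potential terms, and then absorbing via Cauchy--Schwarz) is the same as the paper's, but there is a genuine gap at the central analytic step: you test the equation against $\varphi_{m+1}$ and assert that the quadratic form $\langle \bar{L}\varphi_{m+1},\varphi_{m+1}\rangle$ controls $\|\varphi_{m+1}\|_{H^1}^2$, claiming this coercivity is "the non-degeneracy/invertibility content of Lemma \ref{lem4.1} and Proposition \ref{prop2.1} transplanted" to the multi-bump setting. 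That is not true. Non-degeneracy only identifies the kernel of the linearization as the translation modes; it says nothing about sign-definiteness of the quadratic form. The linearized operator around the ground state $(U,V)$ has finitely many \emph{positive} eigenvalues $\lambda_1,\dots,\lambda_K>0$ (this is exactly what the paper records in \eqref{4.4}), so on the space orthogonal merely to the $D_{jk}$ the form $\langle \bar{L}\cdot,\cdot\rangle$ is indefinite: for instance $\langle \bar{L}(U_{P_j},V_{P_j}),(U_{P_j},V_{P_j})\rangle>0$ up to exponentially small errors, while $(U_{P_j},V_{P_j})$ is (approximately) orthogonal to the odd translation modes. Consequently your compactness-plus-contradiction argument for uniform coercivity breaks down at the last step: the limiting profile is only orthogonal to $\partial_{x_k}(U,V)$, and that does not force it to vanish or the limiting form to be coercive, precisely because of these unstable directions. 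Moreover, $\varphi_{m+1}=\phi_{\mathbf{P}_{m+1}}-\phi_{\mathbf{P}_m}$ is not even orthogonal to all the $D_{jk}$, so the multiplier terms cannot simply be declared negligible by orthogonality; their coefficients have to be estimated.

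This is exactly why the paper performs a \emph{secondary} Liapunov--Schmidt reduction rather than a direct energy estimate: it decomposes $\varphi_{m+1}=\Phi+\sum_{j,l}r_{jl}\omega_{jl}+\sum_{j,k}d_{jk}D_{jk}$ as in \eqref{4.8}, where the $\omega_{jl}$ are cut-off copies of the positive-eigenvalue eigenfunctions of \eqref{4.4}, estimates the coefficients $d_{jk}$ and $r_{jl}$ separately by pairing the equation with $\omega_{jl}$ (this is where the terms $\epsilon\langle(P\,U_{P_{m+1}},Q\,V_{P_{m+1}}),\omega_{jl}\rangle$ enter, producing the squared integral $\big(\epsilon\int(|P|U_{P_{m+1}}+|Q|V_{P_{m+1}})\big)^2$ in the final bound, see \eqref{4.14}--\eqref{4.15}), and only for the remainder $\Phi$, which is orthogonal to both families, does it prove the coercivity $-\langle\bar{L}\Phi,\Phi\rangle\geq c_0\|\Phi\|_{H^1}^2$ in \eqref{4.17}; combining these gives \eqref{4.21}--\eqref{4.24}. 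Your proposal lacks this decomposition along the unstable eigendirections, and without it the absorption step you rely on cannot be carried out; attributing the squared-integral term to projections onto the kernel directions $D_{jk}$ is also off the mark, since in the paper those projections contribute only exponentially small quantities, whereas the $\epsilon$-terms come from the $r_{jl}$. (Folding the nonlinearity into $\bar{L}$ via the quotient $3\widetilde{U}^2$ versus keeping a separate quadratic remainder is a cosmetic difference and not an issue.)
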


\begin{proof}
To prove \eqref{4.3}, we need to perform a further decomposition.
From the non-degeneracy result of $(U,V)$, we know that there are finitely many positive eigenvalues to the following
linearized operators
\begin{equation}\label{4.4}
\Big(
\begin{array}{ccccc}
\Delta \eta_{l,1}+(\beta-1) \eta_{l,1}+3U^2\eta_{l,1} \\[1mm]
\Delta \eta_{l,2}+(\beta-1) \eta_{l,2}+3V^2\eta_{l,2}
\end{array}
\Big)
=\lambda_l\Big(
\begin{array}{ccccc}
\eta_{l,1} \\[1mm]
\eta_{l,2},
\end{array}
\Big)
\end{equation}
and the eigenfunctions $\phi_l$ are exponential decay. Assume that $\lambda_l>0$ for $l=1,\cdots,K$, and it is easy to see that
$\eta_{l,1} =\eta_{l,2} $.
Let $\omega_{jl}=\zeta_j\eta_l(\gamma(x-P_j))$, where $\zeta_j$ is given in Section \ref{s3} and $\eta_l=\Big(
\begin{array}{ccccc}
\eta_{l,1} \\[1mm]
\eta_{l,2}
\end{array}
\Big)$.

It follows from \eqref{4.1} that
 $$
 \varphi_{m+1}=\Big(
\begin{array}{ccccc}
u_{\mathbf{P}_{m+1}} \\[1mm]
v_{\mathbf{P}_{m+1}}\\
\end{array}
\Big)-\Big(
\begin{array}{ccccc}
u_{\textbf{P}_m} \\[1mm]
v_{\textbf{P}_m}\\
\end{array}
\Big)-\Big(
\begin{array}{ccccc}
U_{P_{m+1}} \\[1mm]
V_{P_{m+1}}\\
\end{array}
\Big)
$$
and then
\begin{equation}\label{4.5}
\bar{L}\varphi_{m+1}
=-\bar{G}+\ds\sum_{j=1}^{m+1}\ds\sum_{k=1}^{N}c_{jk}D_{jk}\,\,\text{for some} \,c_{jk},
\end{equation}
where
$$
\bar{L}\Big(
\begin{array}{ccccc}
\varphi_{m+1,1} \\[1mm]
\varphi_{m+1,2}
\end{array}
\Big)=\Big(
\begin{array}{ccccc}
\Delta \varphi_{m+1,1}-(1+\epsilon P(x))\varphi_{m+1,1}+3\widetilde{U}^2\varphi_{m+1,1}+\beta \varphi_{m+1,2} \\[1mm]
\Delta \varphi_{m+1,2}-(1+\epsilon Q(x))\varphi_{m+1,2}+3\widetilde{V}^2\varphi_{m+1,2}+\beta \varphi_{m+1,1}
\end{array}
\Big),
$$

$$
3\widetilde{U}^2=\left\{%
\begin{array}{ll}
   \ds\frac{(\bar{U}+\varphi_{m+1,1})^3-\bar{U}^3}{\varphi_{m+1,1}},\,\,&\text{if}\,\,\varphi_{m+1,1}\neq0, \\[5mm]
    3\bar{U}^2,\,\,&\text{if}\,\,\varphi_{m+1,1}=0,
\end{array}%
\right.$$

$$3\widetilde{V}^2=\left\{%
\begin{array}{ll}
   \ds\frac{(\bar{V}+\varphi_{m+1,2})^3-\bar{V}^3}{\varphi_{m+1,2}},\,\,&\text{if}\,\,\varphi_{m+1,2}\neq0, \\[5mm]
    3\bar{V}^2,\,\,&\text{if}\,\,\varphi_{m+1,2}=0
\end{array}%
\right.
$$
and
$$
\bar{G}=\Big(
\begin{array}{ccccc}
\bar{U}^3-u_{\textbf{P}_m}^3 -U_{P_{m+1}}^3\\[1mm]
\bar{V}^3-v_{\textbf{P}_m}^3-V_{P_{m+1}}^3\\
\end{array}
\Big)-\epsilon\Big(
\begin{array}{ccccc}
P(x)U_{P_{m+1}} \\[1mm]
Q(x)V_{P_{m+1}}\\
\end{array}
\Big).
$$
We proceed the proof into a few steps.
First we estimate the $L^2$ norm of $\bar{G}$. Notice that
\begin{equation}\label{4.6}
\arraycolsep=1.5pt
\begin{array}{rl}\displaystyle
&\,\,\,\,\ds\int_{\R^N}\big|\bar{U}^3-u_{\textbf{P}_m}^3 -U_{P_{m+1}}^3\big|^2
\leq C\ds\int_{\R^N}\left(u_{\textbf{P}_m}^4 U_{P_{m+1}}^2+u_{\textbf{P}_m}^2 U_{P_{m+1}}^4\right)\\[5mm]
&\leq C\ds\int_{\R^N}\left(U_{\textbf{P}_m}^4 U_{P_{m+1}}^2+\phi_{\textbf{P}_m}^4 U_{P_{m+1}}^2+
U_{\textbf{P}_m}^2 U_{P_{m+1}}^4+\phi_{\textbf{P}_m}^2 U_{P_{m+1}}^4\right).
\end{array}
\end{equation}
Then similar to \cite{aw}, we have
\begin{equation}\label{4.7}
\arraycolsep=1.5pt
\begin{array}{rl}\displaystyle
\|\bar{G}\|_{L^2(\R^N)}^2&\leq Ce^{-\sigma\mu}\ds\sum_{j=1}^mw(\gamma|P_{m+1}-P_j|)+
\epsilon^2\ds\int_{\R^N}(|P(x)|^2U_{P_{m+1}}^2+|Q(x)|^2V_{P_{m+1}}^2).
\end{array}
\end{equation}

 Now we decompose $\varphi_{m+1}$ as
 \begin{equation}\label{4.8}
\arraycolsep=1.5pt
\begin{array}{rl}\displaystyle
\varphi_{m+1}=\Phi+\ds\sum_{j=1}^{m+1}\ds\sum_{l=1}^{K}r_{jl}\omega_{jl}
+\ds\sum_{j=1}^{m+1}\ds\sum_{k=1}^{N}d_{jk}D_{jk}
\end{array}
\end{equation}
for some $r_{jl},d_{jk}$ such that
 \begin{equation}\label{4.9}
\left \langle  \Phi,\omega_{jl}\right\rangle=\left \langle  \Phi,D_{jk}\right\rangle=0,
\,\,j=1,\cdots,m+1,k=1,\cdots,N,l=1,\cdots,K.
\end{equation}
Since
 $$
\varphi_{m+1}=\Big(
\begin{array}{ccccc}
\phi_{\textbf{P}_{m+1}}\\[1mm]
\psi_{\textbf{P}_{m+1}}\\
\end{array}
\Big)-\Big(
\begin{array}{ccccc}
\phi_{\textbf{P}_{m}}\\[1mm]
\psi_{\textbf{P}_{m}},\\
\end{array}
\Big),
$$
we have for $j=1,\cdots,m$,
\begin{align*}
d_{jk}&=\left \langle  \varphi_{m+1},D_{jk}\right\rangle+
\ds\sum_{l=1}^{K}r_{jl}\left \langle  \omega_{jl},D_{jk}\right\rangle\\
&=\Big \langle  \Big(
\begin{array}{ccccc}
\phi_{\textbf{P}_{m+1}}\\[1mm]
\psi_{\textbf{P}_{m+1}}\\
\end{array}
\Big)-\Big(
\begin{array}{ccccc}
\phi_{\textbf{P}_{m}}\\[1mm]
\psi_{\textbf{P}_{m}}\\
\end{array}
\Big),D_{jk}\Big\rangle+
\ds\sum_{l=1}^{K}r_{jl}\left \langle  \omega_{jl},D_{jk}\right\rangle\\
&=\ds\sum_{l=1}^{K}r_{jl}\left \langle  \omega_{jl},D_{jk}\right\rangle
=Ce^{-\sigma\mu}\ds\sum_{l=1}^{K}r_{jl},
\end{align*}
where we use the fact that
\begin{align*}
\left\langle  \omega_{jl},D_{jk}\right\rangle&=\ds\int_{\R^N}\Big(
\begin{array}{ccccc}
\omega_{jl,1}\\[1mm]
\omega_{jl,2}\\
\end{array}
\Big) \Big(
\begin{array}{ccccc}
D_{jk,1}\\[1mm]
D_{jk,2}\\
\end{array}
\Big)\leq C\ds\int_{B_{\frac{\mu}{2\gamma}}(P_j)}\zeta_j^2\eta_{l,1}(\gamma(x-P_j))\frac{\partial U_{P_j}}{\partial x_k}\\
&\leq C\ds\int_{B_{\frac{\mu}{2\gamma}}(P_j)}e^{-2\gamma|x-P_j|}\leq Ce^{-\sigma\mu}.
\end{align*}

For $j=m+1$, there holds
\begin{align*}
d_{m+1,k}=&\left \langle  \varphi_{m+1},D_{m+1,k}\right\rangle+
\ds\sum_{l=1}^{K}r_{m+1,l}\left \langle  \phi_{m+1,l},D_{m+1,k}\right\rangle\\
&=\Big \langle \Big(
\begin{array}{ccccc}
\phi_{\textbf{P}_{m+1}}\\[1mm]
\psi_{\textbf{P}_{m+1}}\\
\end{array}
\Big)-\Big(
\begin{array}{ccccc}
\phi_{\textbf{P}_{m}}\\[1mm]
\psi_{\textbf{P}_{m}}\\
\end{array}
\Big),D_{m+1,k}\Big\rangle+
\ds\sum_{l=1}^{K}r_{m+1,l}\left \langle  \omega_{m+1,l},D_{m+1,k}\right\rangle\\
&=-\Big \langle  \Big(
\begin{array}{ccccc}
\phi_{\textbf{P}_{m}}\\[1mm]
\psi_{\textbf{P}_{m}}\\
\end{array}
\Big),D_{m+1,k}\Big\rangle+\ds\sum_{l=1}^{K}r_{m+1,l}\left \langle  \omega_{m+1,l},Z_{m+1,k}\right\rangle,
\end{align*}
where we used the orthogonality conditions satisfied by $\Big(
\begin{array}{ccccc}
\phi_{\textbf{P}_{m}}\\[1mm]
\psi_{\textbf{P}_{m}}\\
\end{array}
\Big)$ and $ \Big(
\begin{array}{ccccc}
\phi_{\textbf{P}_{m+1}}\\[1mm]
\psi_{\textbf{P}_{m+1}}\\
\end{array}
\Big)$.

By the definition of $D_{jk}$, we have
 \begin{align*}
\Big \langle  \Big(
\begin{array}{ccccc}
\phi_{\textbf{P}_{m}}\\[1mm]
\psi_{\textbf{P}_{m}}\\
\end{array}
\Big),D_{m+1,k}\Big\rangle
&=\ds\int_{\R^N}\Big(\phi_{\textbf{P}_{m}}\zeta_{m+1}\frac{\partial U_{P_{m+1}}}{\partial x_k}
+\psi_{\textbf{P}_{m}}\zeta_{m+1}\frac{\partial V_{P_{m+1}}}{\partial x_k}\Big)\\
&\leq C e^{-\sigma\mu}\ds\int_{\R^N}\ds\sum_{j=1}^{m}e^{-\nu\gamma|x-P_j|}e^{-\nu\gamma|x-P_{m+1}|}e^{-(1-\nu)\gamma|x-P_{m+1}|}\\
&\leq C e^{-\sigma\mu}\ds\sum_{j=1}^me^{-\nu\gamma|P_{m+1}-P_j|}.
\end{align*}
So we can deduce that
\begin{equation}\label{4.10}\left\{%
\begin{array}{ll}
    |d_{m+1,k}|\leq Ce^{-\sigma\mu}\ds\sum_{j=1}^me^{-\nu\gamma|P_{m+1}-P_j|}+Ce^{-\sigma\mu}\ds\sum_{l=1}^{K}r_{m+1,l}, \\[3mm]
   |d_{jk}|\leq Ce^{-\sigma\mu}\ds\sum_{l=1}^{K}r_{jl}\,\,\text{for}\,j=1,\cdots,m.
\end{array}%
\right.\end{equation}

It follows from \eqref{4.8} that \eqref{4.5} can be rewritten as
\begin{equation}\label{4.11}
\bar{L}\Phi+\ds\sum_{j=1}^{m+1}\ds\sum_{l=1}^{K}r_{jl}\bar{L}\omega_{jl}
+\ds\sum_{j=1}^{m+1}\ds\sum_{k=1}^{N}d_{jk}\bar{L}D_{jk}=-\bar{G}+\ds\sum_{j=1}^{m+1}\ds\sum_{k=1}^{N}c_{jk}D_{jk}.
\end{equation}
To estimate the coefficients $r_{jl}$, $l\in\{1,\cdots K\}$, multiplying \eqref{4.11} by $\omega_{jl}$ and integrating over $\R^N$,
we have
\begin{equation}\label{4.12}
\arraycolsep=1.5pt
\begin{array}{rl}\displaystyle
r_{jl}\left \langle  \bar{L}\omega_{jl},\omega_{jl}\right\rangle
&=-\ds\sum_{k=1}^Nd_{jk}\left \langle \bar{L}D_{jk},\omega_{jl}\right\rangle-\left \langle \bar{G},\omega_{jl}\right\rangle
-\ds\sum_{s\neq l}r_{js}\left \langle  \bar{L}\omega_{js},\omega_{jl}\right\rangle\\
&\,\,\,\,\,\,\,\,+\ds\sum_{k=1}^Nc_{jk}\left \langle D_{jk},\omega_{jl}\right\rangle-\left \langle \bar{L}\Phi,\omega_{jl}\right\rangle.
\end{array}
\end{equation}

By the definition of $\bar{G}$, it is easy to verify that for $j=1,\cdots,m$,
$$
|\left \langle \bar{G},\omega_{jl}\right\rangle|
\leq C e^{-\sigma\mu}e^{-\nu\gamma|P_{m+1}-P_j|}+
\Big|\Big \langle \Big(
\begin{array}{ccccc}
\epsilon P(x)U_{P_{m+1}} \\[1mm]
\epsilon Q(x)V_{P_{m+1}}\\
\end{array}
\Big),\Big(
\begin{array}{ccccc}
\omega_{jl,1} \\[1mm]
\omega_{jl,2}\\
\end{array}
\Big)\Big\rangle\Big|,
$$
since
\begin{align*}
\ds\int_{\R^N}\big(\bar{U}^3-u_{\textbf{P}_m}^3 -U_{P_{m+1}}^3\big)\omega_{jl,1}
&\leq \ds\int_{B_{\frac{\mu}{2\gamma}}(P_j)}\left(u_{\textbf{P}_m}^2 U_{P_{m+1}}+u_{\textbf{P}_m} U_{P_{m+1}}^2\right)\zeta_j\eta_{l,1}(\gamma(x-P_j))\\
&\leq C e^{-\sigma\mu}e^{-\nu\gamma|P_{m+1}-P_j|}.
\end{align*}
Similarly, one has
$$
\big|\left \langle \bar{G},\omega_{m+1,l}\right\rangle\big|
\leq C e^{-\sigma\mu}\ds\sum_{j=1}^me^{-\nu\gamma|P_{m+1}-P_j|}+
\Big|\Big \langle \Big(
\begin{array}{ccccc}
\epsilon P(x)U_{P_{m+1}} \\[1mm]
\epsilon Q(x)V_{P_{m+1}}\\
\end{array}
\Big),\Big(
\begin{array}{ccccc}
\omega_{m+1,l,1} \\[1mm]
\omega_{m+1,l,2}\\
\end{array}
\Big)\Big\rangle\Big|.
$$

Moreover, from  \eqref{2.6} ,we have
\begin{equation}\label{4.13}
\Big|\ds\sum_{k=1}^Nc_{jk}\left \langle D_{jk},\omega_{jl}\right\rangle\Big|\leq C|c_{jk}|
\leq C e^{-\sigma\mu}+C\|\bar{G}\|_*
\leq C e^{-\sigma\mu}
\end{equation}
and
$$
|\left \langle \bar{L}\Phi,\omega_{jl}\right\rangle|
=|\left \langle \bar{L}\omega_{jl},\Phi\right\rangle|
\leq C e^{-\sigma\mu}\|\Phi\|_{H^1(B_{\frac{\mu}{2\gamma}}(P_j))}.
$$

 Using system \eqref{4.4}, we can deduce
 $$
 \left \langle  \bar{L}\omega_{jl},\omega_{js}\right\rangle=
    \delta_{ls}\lambda_s\left \langle  \eta_{l},\eta_{s}\right\rangle
    +O(e^{-\sigma\mu}).
  $$

So, from the above estimates, we can infer that
\begin{equation}\label{4.14}
\left\{%
\begin{array}{ll}
    |r_{m+1,l}|
\leq C e^{-\sigma\mu}\ds\sum_{j=1}^me^{-\nu\gamma|P_{m+1}-P_j|}+C e^{-\sigma\mu}\|\Phi\|_{H^1(B_{\frac{\mu}{2\gamma}}(P_{m+1}))}\\
\,\,\,\,\,\,\,\,\,\,\,\,\,\,\,\,\,\,\,\,\,\,
+ C\Big|\Big \langle \Big(
\begin{array}{ccccc}
\epsilon P(x)U_{P_{m+1}} \\[1mm]
\epsilon Q(x)V_{P_{m+1}}\\
\end{array}
\Big),\Big(
\begin{array}{ccccc}
\omega_{m+1,l,1} \\[1mm]
\omega_{m+1,l,2}\\
\end{array}
\Big)\Big\rangle\Big|,\\[5mm]
     |r_{i,l}|
\leq C e^{-\sigma\mu}e^{-\nu\gamma|P_{m+1}-P_j|}+C e^{-\sigma\mu}\|\Phi\|_{H^1(B_{\frac{\mu}{2\gamma}}(P_j))}\\
\,\,\,\,\,\,\,\,\,\,\,\,\,\,\,\,\,
+ C\Big|\Big \langle\Big(
\begin{array}{ccccc}
\epsilon P(x)U_{P_{m+1}} \\[1mm]
\epsilon Q(x)V_{P_{m+1}}\\
\end{array}
\Big),\Big(
\begin{array}{ccccc}
\omega_{jl,1} \\[1mm]
\omega_{jl,2}\\
\end{array}
\Big)\Big\rangle\Big|
\end{array}%
\right.
\end{equation}
and then
\begin{equation}\label{4.15}
\left\{%
\begin{array}{ll}
    |d_{k+1,j}|
\leq C e^{-\sigma\mu}\ds\sum_{j=1}^me^{-\nu\gamma|P_{m+1}-P_j|}
+C e^{-\sigma\mu}\|\Phi\|_{H^1(B_{\frac{\mu}{2\gamma}}(P_{m+1}))},\\[5mm]
     |d_{i,j}|
\leq C e^{-\sigma\mu}e^{-\nu\gamma|P_{m+1}-P_j|}
+C e^{-\sigma\mu}\|\Phi\|_{H^1(B_{\frac{\mu}{2\gamma}}(P_j))}
\end{array}%
\right.
\end{equation}
for $j=1,\cdots,m,k=1,\cdots,N,l=1,\cdots,K$.

Finally we need to estimate $\Phi$. Multiplying \eqref{4.11} by $\Phi$ and integrating over $\R^N$, we have
\begin{equation}\label{4.16}
\left \langle \bar{L}\Phi,\Phi\right\rangle
=-\left \langle \bar{G},\Phi\right\rangle
-\ds\sum_{j=1}^{m+1}\ds\sum_{k=1}^{N}d_{jk}\left \langle \bar{L}D_{jk},\Phi\right\rangle
-\ds\sum_{j=1}^{m+1}\ds\sum_{l=1}^{K}r_{jl}\left \langle  \bar{L}\omega_{jl},\Phi\right\rangle.
\end{equation}

We claim that
\begin{equation}\label{4.17}
-\left \langle \bar{L}\Phi,\Phi\right\rangle
\geq c_0\|\Phi\|_{H^1(\R^N)}^2
\end{equation}
for some constant $c_0>0$ (independent of $\textbf{P}_{m+1}$).

Indeed, since the approximate solution is exponentially decaying away from the point $P_j$,
we have
\begin{equation}\label{4.18}
\ds\int_{\R^N\setminus\cup_{j}B_{\frac{\mu}{2\gamma}}(P_j)}(\bar{L}\Phi)\Phi
\geq \frac{1}{2}\ds\int_{\R^N\setminus\cup_{j}B_{\frac{\mu}{2\gamma}}(P_j)}\left(|\nabla \Phi_1|^2+\Phi_1^2+|\nabla \Phi_2|^2+\Phi_2^2\right).
\end{equation}
So we only need to prove \eqref{4.18} in the domain $\cup_{j}B_{\frac{\mu}{2\gamma}}(P_j)$.
Here we prove it by contradiction. Assume that there exist a sequence $\mu_n\rightarrow+\infty$,
and $P_j^n$ such that as $n\rightarrow\infty$,
\begin{equation}\label{4.19}
\ds\int_{B_{\frac{\mu_n}{2\gamma}}(P_j^n)}\left(|\nabla \Phi_1^n|^2+|\Phi_1^n|^2+|\nabla \Phi_2^n|^2+|\Phi_2^n|^2\right)=1
\end{equation}
and
\begin{equation}\label{4.20}
\ds\int_{B_{\frac{\mu_n}{2\gamma}}(P_j^n)}(\bar{L}\Phi^n)\Phi^n
\rightarrow0.
\end{equation}
Then we can extract from the sequence $\Phi^n(x-P_j^n)$ a subsequence which will converge weakly in $\textbf{H}$
to $\Phi^\infty$ satisfying
\begin{align*}
&\,\,\,\,\,\,\,\,\,\,\,\,\,\ds\int_{\R^N}\left(|\nabla \Phi_{\infty,1}|^2+\Phi_{\infty,1}^2-3U^2\Phi_{\infty,1}^2-\beta\Phi_{\infty,2}\Phi_{\infty,1}\right)\\
&\,\,\,\,\,\,\,\,\,\,\,\,\,\,\,\,\,+\ds\int_{\R^N}\left(|\nabla \Phi_{\infty,2}|^2+\Phi_{\infty,2}^2-3V^2\Phi_{\infty,2}^2-\beta\Phi_{\infty,1}\Phi_{\infty,2}\right)=0
\end{align*}
and from \eqref{4.9}, we can find that
$$
\Big \langle \Phi_\infty,\Big(
\begin{array}{ccccc}
\eta_{l,1} \\[1mm]
\eta_{l,2}\\
\end{array}
\Big)\Big\rangle=\Big \langle \Phi_\infty,\Big(
\begin{array}{ccccc}
\frac{\partial U}{\partial x_k} \\[1mm]
\frac{\partial V}{\partial x_k}
\end{array}
\Big)\Big\rangle=0
$$
for $l=1,\cdots,K,k=1,\cdots,N$. So we infer that
$\Phi_\infty=0$. Hence
$$
\Phi^n\rightharpoonup0\,\,\text{weakly in}\,\,\textbf{H}.
$$
As a result, as $n\rightarrow\infty$,
$$\ds\int_{B_{\frac{\mu_n}{2\gamma}}(P_j^n)}\big[3\widetilde{U}^2(\Phi_1^n)^2
+\beta\Phi_2^n\Phi_1^n+3\widetilde{V}^2(\Phi_2^n)^2
+\beta\Phi_1^n\Phi_2^n\big]\rightarrow0$$
and then by \eqref{4.20}, one has as $n\rightarrow+\infty$,
$$
\|\Phi^n\|_{H^1\big(B_{\frac{\mu_n}{2\gamma}}(P_j^n)\big)}
\rightarrow0,
$$
which contradicts to \eqref{4.19}. Thus \eqref{4.17} holds.

It follows from \eqref{4.16} and \eqref{4.17} that
\begin{equation}\label{4.21}
\arraycolsep=1.5pt
\begin{array}{rl}\displaystyle
\|\Phi\|_{H^1(\R^N)}^2
&\leq C\Big(\ds\sum_{jk}|d_{jk}||\left \langle \bar{L}D_{jk},\Phi\right\rangle|
+\ds\sum_{jl}|r_{jl}||\left \langle \bar{L}\omega_{jl},\Phi\right\rangle|+|\left \langle \bar{G},\Phi\right\rangle|\Big)\\
&\leq C\Big(\ds\sum_{jk}|d_{jk}|\|\Phi\|_{H^1(B_{\frac{\mu}{2\gamma}}(P_j))}
+\ds\sum_{jl}|r_{jl}|\|\Phi\|_{H^1(B_{\frac{\mu}{2\gamma}}(P_j))}+\|\bar{G}\|_{L^2(\R^N)}\|\Phi\|_{H^1(\R^N)}\Big).
\end{array}
\end{equation}
Using \eqref{4.21}, \eqref{4.8}, \eqref{4.14} and \eqref{4.15}, we get that
\begin{equation}\label{4.22}
\arraycolsep=1.5pt
\begin{array}{rl}\displaystyle
\|\varphi_{m+1}\|_{H^1(\R^N)}
&\leq C\Big\{e^{-\sigma\mu}\ds\sum_{j=1}^me^{-\nu\gamma|P_j-P_{m+1}|}+\ds\int_{\R^N}\epsilon\Big(|P(x)|U_{P_{m+1}}+|Q(x)|V_{P_{m+1}}\Big)\\
&\,\,\,\,\,\,\,\,\,\,\,\,\,+\Big(\ds\int_{\R^N}\epsilon^2\Big(|P(x)|^2U_{P_{m+1}}^2+|Q(x)|^2V_{P_{m+1}}^2\Big)\Big)^{\frac{1}{2}}\\
&\,\,\,\,\,\,\,\,\,\,\,\,\,+\Big(e^{-\sigma\mu}\ds\sum_{j=1}^mw(\gamma|P_j-P_{m+1}|)\Big)^{\frac{1}{2}}\Big\}.
\end{array}
\end{equation}
Since we choose $\nu>1/2$, we have
\begin{equation}\label{4.23}
\Big(\ds\sum_{j=1}^me^{-\nu\gamma|P_j-P_{m+1}|}\Big)^2\leq C\ds\sum_{j=1}^mw(\gamma|P_j-P_{m+1}|).
\end{equation}

From \eqref{4.22} and \eqref{4.23}, we infer that
\begin{equation}\label{4.24}
\arraycolsep=1.5pt
\begin{array}{rl}\displaystyle
\|\varphi_{m+1}\|_{H^1(\R^N)}
&\leq C\Big\{\Big(e^{-\sigma\mu}\ds\sum_{j=1}^mw(\gamma|P_j-P_{m+1}|)\Big)^{\frac{1}{2}}\\
&\,\,\,\,\,\,\,\,\,\,\,\,\,+\Big(\ds\int_{\R^N}\epsilon^2(|P(x)|^2U_{P_{m+1}}^2+|Q(x)|^2V_{P_{m+1}}^2)\Big)^{\frac{1}{2}}\\[5mm]
&\,\,\,\,\,\,\,\,\,\,\,\,\,+\ds\int_{\R^N}\epsilon\big(|P(x)|U_{P_{m+1}}+|Q(x)|V_{P_{m+1}}\big)\Big\}.
\end{array}
\end{equation}
Furthermore, from the estimates \eqref{4.10} and \eqref{4.14}, and taking into consideration that $\zeta_j$
is supposed in $B_{\frac{\mu}{2\gamma}}(P_{j})$, using H\"{o}lder inequality, we can get an accurate estimate on $\varphi_{m+1}$,
\begin{equation}\label{4.25}
\arraycolsep=1.5pt
\begin{array}{rl}\displaystyle
\|\varphi_{m+1}\|_{H^1(\R^N)}
&\leq C\Big\{\Big(e^{-\sigma\mu}\ds\sum_{j=1}^mw(\gamma|P_j-P_{m+1}|)\Big)^{\frac{1}{2}}\\
&\,\,\,\,\,\,\,\,\,\,\,\,\,\,+\Big(\ds\int_{\R^N}\epsilon^2(|P(x)|^2U_{P_{m+1}}^2+|Q(x)|^2V_{P_{m+1}}^2)\Big)^{\frac{1}{2}}\\[5mm]
&\,\,\,\,\,\,\,\,\,\,\,\,\,\,+\ds\sum_{j=1}^m\Big(\ds\int_{B_{\frac{\mu}{2\gamma}}(P_j)}\epsilon^2|P(x)|^2U_{P_{m+1}}^2\Big)
^{\frac{1}{2}}+\ds\sum_{j=1}^m\Big(\ds\int_{B_{\frac{\mu}{2\gamma}}(P_j)}\epsilon^2|Q(x)|^2V_{P_{m+1}}^2\Big)^{\frac{1}{2}}\Big\}.
\end{array}
\end{equation}
This concludes the proof of Lemma \ref{lem4.2}.

\end{proof}

\section{Proof of our main result}\label{s6}

In this section, first we study a maximization problem, then we prove our main result.

 Fixing
$\textbf{P}_m\in \Omega_m$, we define a new functional
\begin{equation}\label{5.1}
\mathcal{N}(\textbf{P}_m)=J(u_{\textbf{p}_m},v_{\textbf{P}_m}):\,\Omega_m\rightarrow\R
\end{equation}
and
\begin{equation}\label{5.2}
\mathcal{R}_m:=\ds\sup_{\textbf{P}_m\in \Omega_m}{\mathcal{N}(\textbf{P}_m)}.
\end{equation}
Observe that $\mathcal{N}(\textbf{P}_m)$ is continuous in $\textbf{P}_m$. We will prove below that the maximization problem has a solution.
Let $\mathcal{N}(\bar{\textbf{P}}_m)$ be the maximum where
$\bar{\textbf{P}}_m=(\bar{P}_1,\cdots,\bar{P}_m)\in \bar{\Omega}_m$, that is
$$
\mathcal{N}(\bar{\textbf{P}}_m)=\ds\max_{\textbf{P}_m\in \Omega_m}{\mathcal{N}(\textbf{P}_m)}$$
and we denote the corresponding solution by $u_{\bar{\textbf{P}}_m}$.

First we give a lemma which will be used later.
\begin{lem}\label{lem5.1}(see Lemma 2.4, \cite{aw})
For $|P_j-P_k|\geq\frac{\mu}{\gamma}$ large, we have
$$
\ds\int_{\R^N}w^3(\gamma(x-P_j))w(\gamma(x-P_k))=(\gamma_1+o(1))w(\gamma|P_j-P_k|)
$$
for some $\sigma>0$ independent of large $\mu$ and
$$
\gamma_1=\ds\int_{\R^N}w^3(x)e^{-y_1}>0.
$$
\end{lem}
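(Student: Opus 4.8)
The plan is to reduce the integral to a known convolution-type asymptotic by exploiting the decay of $w$ and its equation. First I would recall that $w$ solves $\Delta w - w + w^3 = 0$, so $w^3 = w - \Delta w$, but the cleaner route is the classical asymptotic $w(r) = A_N r^{-(N-1)/2}e^{-r}(1 + O(1/r))$ quoted in the introduction. Write $r_j = \gamma|x - P_j|$ and $R = \gamma|P_j - P_k|$. I would split $\R^N$ according to whether $x$ is near $P_j$, near $P_k$, or far from both. In the region near $P_j$ (say $r_j \le R/2$), the factor $w^3(\gamma(x-P_j))$ is concentrated and integrable, while $w(\gamma(x-P_k))$ is slowly varying there; using $|x - P_k| = |P_j - P_k| - \langle \widehat{P_j - P_k}, x - P_j\rangle + O(|x-P_j|^2/R)$ one gets $w(\gamma(x-P_k)) = w(R)e^{\langle \xi, \gamma(x-P_j)\rangle}(1 + o(1))$ for a unit vector $\xi$, uniformly on compact sets in $x - P_j$. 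Integrating against $w^3$ and using the fact that $\int_{\R^N} w^3(y) e^{\langle \xi, y\rangle}\,dy$ is independent of the direction $\xi$ by radial symmetry of $w$, I would identify the constant. A small subtlety: $\int_{\R^N} w^3(y)e^{\langle \xi,y\rangle}\,dy$ is not literally $\int w^3(y)e^{-y_1}\,dy$ unless $|\xi|=1$ and one argues that only the radial profile matters; in fact by rotating so $\xi = e_1$ one gets exactly $\gamma_1 = \int_{\R^N} w^3(y)e^{-y_1}\,dy$, which is finite since $w^3$ decays like $e^{-3|y|}$ and $e^{-y_1} \le e^{|y|}$.

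The second region, $r_k \le R/2$ (near $P_k$), I would show contributes only $o(w(R))$: there $w^3(\gamma(x-P_j))$ is exponentially small of order $e^{-3(R - r_k)}$ while $w(\gamma(x-P_k))$ is merely integrable-with-a-singularity-free profile, and the product integrates to something like $e^{-3R}$ times a convergent integral, hence $o(e^{-R}) = o(w(R))$ since $3 > 1$. The third region, where both $r_j, r_k \ge R/2$, is handled by the crude bound $w^3(\gamma(x-P_j)) w(\gamma(x-P_k)) \le C e^{-3r_j} e^{-r_k} \le C e^{-r_j - r_k} e^{-2(r_j+r_k)/?}$; more simply, on that set $r_j + r_k \ge \max(r_j + r_k, R)$ and one uses $3r_j + r_k \ge r_j + r_k + 2\cdot R/2 \cdot \mathbf{1} \ge \ldots$ — the point is that $3r_j + r_k - (r_j + r_k)$ is large, and since $r_j + r_k \ge R$ always (triangle inequality), $\int e^{-3r_j - r_k} \le e^{-R}\int e^{-2 r_j} \cdot(\text{poly})$, which is again $o(w(R))$ once one is slightly careful with the polynomial prefactors; these are absorbed because $3r_j \ge r_j + R$ gives an extra genuine exponential gain over the $e^{-R}$ rate.

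Assembling the three pieces gives $\int_{\R^N} w^3(\gamma(x-P_j)) w(\gamma(x-P_k)) = (\gamma_1 + o(1)) w(R)$ as $R \to \infty$, i.e. as $\mu \to \infty$, which is the claim. I expect the main obstacle to be the bookkeeping in the near-$P_j$ region: making the expansion $w(\gamma(x-P_k)) = w(R)e^{\langle \xi,\gamma(x-P_j)\rangle}(1+o(1))$ uniform enough to pass the limit through the integral $\int w^3$, and confirming that the polynomial correction $r^{-(N-1)/2}$ from the two factors combines correctly to leave exactly the $w(R)$ profile with no extra polynomial factor. This is a standard ``interaction estimate'' of Bahri–Li / Lin–Ni–Takagi type, and indeed the statement is quoted from \cite{aw} (Lemma 2.4 there), so I would either cite that directly or reproduce the split above with the constant $\gamma_1 = \int_{\R^N} w^3(x) e^{-y_1}\,dx$ pinned down by the rotation argument.
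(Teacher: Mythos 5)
Your sketch is essentially correct, and it is worth noting that the paper itself offers no proof of this lemma: it is quoted verbatim (typos included) from Lemma 2.4 of \cite{aw}, so the ``paper's proof'' is a citation, and what you have written is precisely the standard interaction estimate that underlies that cited result. Your decomposition into the three regions (near $P_j$, near $P_k$, both far) is the right one, and the key points are all present: in the near-$P_j$ region only the factor $w(\gamma(x-P_k))$ needs the asymptotic expansion, so the polynomial prefactor $R^{-(N-1)/2}$ appears exactly once and is absorbed into $w(R)$; the limit is passed through the integral by dominating the integrand by $Cw(R)e^{|z|}w^3(z)\le Cw(R)e^{-2|z|}$ on $\{|z|\le R/2\}$ and applying dominated convergence; and the rotation argument identifies the constant as $\gamma_1=\int_{\R^N}w^3(y)e^{-y_1}\,dy$, finite because $3>1$. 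Two small repairs: the region-3 bound as you wrote it contains a placeholder, but the clean version is $3r_j+r_k=(r_j+r_k)+2r_j\ge R+\tfrac{R}{2}+r_j$ on that set, giving a contribution $O(e^{-3R/2})=o(w(R))$, and similarly in region 2 the bound is $Ce^{-3R}e^{2r_k}$, whose integral is $O(R^{N-1}e^{-2R})=o(w(R))$. Also, if one is literal about the change of variables $z=\gamma(x-P_j)$, the leading constant is $\gamma^{-N}\gamma_1$ rather than $\gamma_1$; the paper's statement (copied from \cite{aw}, where the bump is $w$ itself rather than $\gamma w(\gamma\cdot)$) glosses over this scaling factor, and it is harmless for the application since only the positivity of the constant and the order $w(\gamma|P_j-P_k|)$ are used, but you should record it if you reproduce the proof rather than cite it.
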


Now we prove that the maximum can be attained at finite points for each $\mathcal{R}_m$.
\begin{lem}\label{lem5.2}
 Assume that $(K_1)$, $(K_2)$ and the assumptions in Proposition \ref{prop3.3} hold. Then for all $m$:\\
 (i)\,\,There exists $\textbf{P}_m\in \Omega_m$ such that
 $$
\mathcal{R}_m={\mathcal{N}(\textbf{P}_m)};
$$
(ii)\,\,There holds$$
\mathcal{R}_{m+1}>\mathcal{R}_{m}+I(U,V),
$$
where $I(U,V)$ is the energy of $(U,V)$,
$$
I(U,V)=\frac{1}{2}\ds\int_{\R^N}\left(|\nabla U|^2+|\nabla V|^2\right)+\frac{1-\beta}{2}\ds\int_{\R^N}\left(U^2+V^2\right)
-\frac{1}{4}\ds\int_{\R^N}\left(U^4+V^4\right).
$$
\end{lem}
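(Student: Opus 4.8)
The plan is to argue in two parts, following the localized-energy-method scheme used in \cite{aw}. For part (i), the key point is that the slow-decay hypothesis $(K_2)$ prevents a maximizing configuration from escaping to the boundary of $\Omega_m$, which is either $\partial\Omega_m=\{\mathbf{P}_m:\min_{j\neq k}|P_j-P_k|=\mu/\gamma\}$ or "infinity'' (some $|P_j|\to\infty$). First I would establish an expansion of $\mathcal{N}(\mathbf{P}_m)=J(u_{\mathbf{P}_m},v_{\mathbf{P}_m})$ of the form
\begin{equation}\label{pp1}
\mathcal{N}(\mathbf{P}_m)=mI(U,V)-(\gamma_1+o(1))\sum_{j\neq k}w(\gamma|P_j-P_k|)-\frac{\epsilon}{2}\sum_{j=1}^m\int_{\R^N}\big(P(x)U_{P_j}^2+Q(x)V_{P_j}^2\big)+\text{(smaller terms)},
\end{equation}
where the interaction term comes from Lemma \ref{lem5.1} and the smaller terms are controlled using $\|(\phi_{\mathbf{P}_m},\psi_{\mathbf{P}_m})\|_*\le Ce^{-\sigma\mu}$ from Proposition \ref{prop3.3} together with the key error bound of Lemma \ref{lem4.2}. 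Then I would show: if a maximizing sequence $\mathbf{P}_m^n$ has some $|P_j^n|\to\infty$, the potential term $\epsilon\int P(x)U_{P_j}^2$ is, by $(K_1)$, $o(\epsilon)$, so dropping that bump and comparing with $\mathcal{R}_{m-1}$ gives a contradiction with part (ii) applied at level $m-1$; if instead $\min_{j\neq k}|P_j^n-P_k^n|\to\mu/\gamma$, the repulsive interaction term $w(\gamma|P_j-P_k|)\sim w(\mu)$ is of order $e^{-\mu}$, which by the assumption $\epsilon<e^{-2\mu}$ is much larger than the potential gain, so again the energy is strictly below the supremum. Hence the supremum is attained at an interior point $\bar{\mathbf{P}}_m$.

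For part (ii), I would use the test configuration $(\bar{\mathbf{P}}_m,Q)$ where $\bar{\mathbf{P}}_m$ is the maximizer at level $m$ and $Q$ is a new point placed far from all $\bar P_j$ but at a location where $(K_2)$ forces $\gamma^2 P(Q)+\gamma^2 Q(Q)$ to be relatively large compared with $e^{-\alpha\gamma|Q|}$. Using \eqref{pp1} at level $m+1$ and the expansion at level $m$,
\begin{align*}
\mathcal{R}_{m+1}&\ge \mathcal{N}(\bar{\mathbf{P}}_m,Q)\\
&\ge \mathcal{R}_m+I(U,V)-C e^{-\sigma\mu}\sum_{j=1}^m w(\gamma|Q-\bar P_j|)-\frac{\epsilon}{2}\int_{\R^N}\big(P(x)U_{Q}^2+Q(x)V_{Q}^2\big)-\text{(error)}.
\end{align*}
The strategy is to choose $|Q|$ large enough that the interaction sum over $\bar P_j$ is negligible (each term is $O(e^{-\gamma|Q-\bar P_j|})$ and there are $m$ of them, all beatable since $Q$ can be pushed to infinity), while simultaneously, by $(K_2)$, the term $-\frac{\epsilon}{2}\int(P U_Q^2+Q V_Q^2)$ is \emph{negative} and of strictly larger order than the accumulated error controlled by Lemma \ref{lem4.2}: concretely, $\int P(x)U_Q^2\sim \gamma^2 P(Q)\int w^2$ up to the decay of $w^2$, and $(K_2)$ guarantees one can locate $Q$ so that this dominates $e^{-\sigma\mu}$-type remainders. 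This yields the strict inequality $\mathcal{R}_{m+1}>\mathcal{R}_m+I(U,V)$.

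The main obstacle is the energy expansion \eqref{pp1} with explicit control of all remainder terms, and in particular verifying that the "error from the reduction'' — i.e.\ the contribution of $\varphi_{m+1}$ and of the Lagrange multipliers $c_{jk}$ to $J$ — is genuinely of smaller order than \emph{both} the interaction term $w(\gamma|P_j-P_k|)$ and the potential term $\epsilon\int P U_Q^2$ in the relevant regime. This is precisely where Lemma \ref{lem4.2} is used: it shows the jump $\|\varphi_{m+1}\|_{H^1}^2$ is bounded by $e^{-\sigma\mu}\sum_j w(\gamma|P_{m+1}-P_j|)$ plus the potential-squared terms, so when inserted into the quadratic functional it contributes at an order strictly below the linear-in-$w$ interaction and below $\epsilon\int P U_{P_{m+1}}^2$ (using $\epsilon<e^{-2\mu}$). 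One then needs to track signs carefully: the repulsive interaction ($\beta\in(0,1)$) versus attractive ($\beta<0$) cases may change the sign of the $w$-term, but in both cases the decisive gain at each step is the negative potential contribution forced by $(K_2)$, and the argument only requires that this gain exceeds the controlled errors — which is exactly the content of the slow-decay assumption. Finally, continuity of $\mathcal{N}$ in $\mathbf{P}_m$ (noted in the text) together with the interior-attainment from part (i) closes the proof of (i), and the chain $\mathcal{R}_{m+1}>\mathcal{R}_m+I(U,V)$ then feeds back into the non-escape argument, so the two parts are proved together by induction on $m$.
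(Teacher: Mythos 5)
Your overall scheme (expansion of $\mathcal{N}$, a test configuration $(\bar{\mathbf{P}}_m,Q)$ with $Q$ far away, non-escape by comparing with $\mathcal{R}_{m-1}+I(U,V)$, induction on $m$) is the same as the paper's, but there is a sign error that sits exactly at the crux of the argument. With the paper's conventions the functional contains $+\tfrac{\epsilon}{2}\int_{\R^N}\big(P(x)u^2+Q(x)v^2\big)$, so the expansion reads
$$\mathcal{N}(\bar{\mathbf{P}}_m,Q)=\mathcal{R}_m+I(U,V)+\frac{\epsilon}{2}\int_{\R^N}\big(P(x)U_{Q}^2+Q(x)V_{Q}^2\big)-2\gamma^4\gamma_1\sum_{j=1}^m w(\gamma|Q-\bar P_j|)+\text{errors},$$
with the potential term entering with a \emph{plus} sign, while your expansion \eqref{pp1} and your lower bound carry it with a minus sign. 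Moreover $(K_2)$ forces $\gamma^2P+\gamma^2Q$ to be eventually \emph{positive} (its product with $e^{\alpha\gamma|x|}$ tends to $+\infty$), so for $|Q|$ large the term $\frac{\epsilon}{2}\int(PU_Q^2+QV_Q^2)$ is positive and bounded below by a quantity of order $e^{-\alpha\gamma|Q|}$; this positive gain, not a "negative potential contribution", is what beats the interaction loss $-2\gamma^4\gamma_1\sum_j w(\gamma|Q-\bar P_j|)\sim-\sum_j e^{-\nu\gamma|Q-\bar P_j|}$ (one needs $\alpha<\nu$, available since $\alpha<1$) and the Lemma \ref{lem4.2} errors, yielding $\mathcal{R}_{m+1}>\mathcal{R}_m+I(U,V)$. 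As you have written it, every extra term in your lower bound beyond $\mathcal{R}_m+I(U,V)$ is negative, so no strict inequality in the required direction can be extracted; worse, with your signs a far-away bump would \emph{lower} the energy toward $I(U,V)$ from below, so the supremum would be approached only at infinity and the attainment claim in (i) would fail by the very mechanism you invoke. The non-escape argument works only because interior values exceed the limit value at infinity, i.e.\ because the potential gain is positive.

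Two secondary points. First, your remark that the sign of the $w$-interaction may flip between $\beta\in(0,1)$ and $\beta<0$ is not correct: $\gamma_1=\int w^3e^{-y_1}>0$ and the interaction enters with a fixed negative sign for all admissible $\beta$; the coupling only affects the building block through $\gamma=\sqrt{1-\beta}$. Second, the degeneration $\min_{j\neq k}|P_j-P_k|\to\mu/\gamma$ is not an obstruction to attainment in Lemma \ref{lem5.2} (the set $\Omega_m$ is closed there); ruling out maximizers on that boundary is the content of Proposition \ref{prop5.3}, proved afterwards using part (ii). Apart from these, the structure you propose (including the role of Lemma \ref{lem4.2} in controlling the step-$(m+1)$ error and the base case $\mathcal{R}_1>I(U,V)$) is the paper's; fixing the sign of the potential term and rewriting (ii) as a gain of $+\frac{\epsilon}{2}\int(PU_Q^2+QV_Q^2)$ over the exponentially small losses would repair the proof.
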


\begin{proof}
Here we follow the proofs in \cite{aw,cps} and we need to use the estimate in Lemma \ref{lem5.1}.
To prove this lemma, we divide the proof into the following two steps.

Step 1: We first show that $\mathcal{R}_1>I(U,V)$ and $\mathcal{R}_1$ can be attained at finite points.
Similar to the proof of Lemma \ref{lem4.2}, we have
\begin{equation}\label{5.3}
\|(\phi_P,\psi_P)\|_{H^1(\R^N)}
\leq
C\epsilon\Big(\ds\int_{\R^N}\Big(|P(x)|^2U_{P}^2+|Q(x)|^2V_{P}^2\Big)\Big)^{\frac{1}{2}}
\end{equation}
for some $C>0$ independent of $P$.

Assuming that $|P|\rightarrow+\infty$, by $U_P=V_P$, we have
\begin{align*}
J(u_P,v_P)&=\frac{1}{2}\ds\int_{\R^N}(|\nabla (U_P+\phi_P)|^2+(1+\epsilon P(x))(U_P+\phi_P)^2)\\
&\,\,\,\,\,\,+\frac{1}{2}\ds\int_{\R^N}(|\nabla (V_P+\psi_P)|^2+(1+\epsilon Q(x))(V_P+\psi_P)^2)\\
&\,\,\,\,\,\,-\frac{1}{4}\ds\int_{\R^N}((U_P+\phi_P)^4+(V_P+\psi_P)^4)-\beta\ds\int_{\R^N}(U_P+\phi_P)(V_P+\psi_P)\\
&\geq I(U_P,V_P)+\frac{\epsilon}{2}\ds\int_{\R^N}(P(x)U_P^2+Q(x)V_P^2)
-C\|((\epsilon P(x)U_P,\epsilon Q(x)V_P)\|_{L^{2}(\R^{N})}^{2}\\
&\geq I(U_P,V_P)+\frac{\epsilon}{4}\ds\int_{\R^N}(P(x)U_P^2+Q(x)V_P^2)\\
&\geq I(U_P,V_P)+\frac{\epsilon}{4}\ds\int_{B_{\frac{\mu}{2}}(P)}(P(x)U_P^2+Q(x)V_P^2)\\
&\,\,\,\,\,\,-
\frac{1}{4}\ds\sup_{B_{\frac{|P|}{4}}(0)}(U_P^\frac{3}{2}+V_P^\frac{3}{2})
\ds\int_{supp(\gamma^2(P(x)+Q(x)))^-}(|P(x)|U_P^\frac{1}{2}+|Q(x)|V_P^\frac{1}{2})\\
&\geq I(U_P,V_P)+\frac{\epsilon}{4}\ds\int_{B_{\frac{\mu}{2}}(P)}(P(x)U_P^2+Q(x)V_P^2)-O\big(\epsilon e^{-\frac{9\gamma}{8}|P|}\big).
\end{align*}

By the slow decay assumption $(K_{2}),$ we have
\begin{align*}
\frac{\epsilon}{4}\ds\int_{B_{\frac{\mu}{2}}(P)}(P(x)U_P^2+Q(x)V_P^2)-O(\epsilon e^{-\frac{9\gamma}{8}|P|})>0.
\end{align*}

So we can deduce that
\begin{equation}\label{5.4}
\mathcal{R}_1\geq J(U_P,V_P)>I(U,V).
\end{equation}
Let us prove now that $\mathcal{R}_1$ can be attained at finite points. Let $\{P_j\}$ be a sequence such that
$\ds\lim_{j\rightarrow\infty}\mathcal{N}(P_j)=\mathcal{R}_1$, and assume that $|P_j|\rightarrow\infty$ as $j\rightarrow\infty$.
Then from the system satisfied by $(U_{P_j},V_{P_j})$, we have
\begin{align*}
J(u_{P_j},v_{P_j})&=\frac{1}{2}\ds\int_{\R^N}(|\nabla (U_{P_j}+\phi_{P_j})|^2+(1+\epsilon P(x))(U_{P_j}+\phi_{P_j})^2)\\
&\,\,\,\,\,\,+\frac{1}{2}\ds\int_{\R^N}(|\nabla (V_{P_j}+\psi_{P_j})|^2+(1+\epsilon Q(x))(V_{P_j}+\psi_{P_j})^2)\\
&\,\,\,\,\,\,-\frac{1}{4}\ds\int_{\R^N}((U_{P_j}+\phi_{P_j})^4+(V_{P_j}+\psi_{P_j})^4)-\beta\ds\int_{\R^N}(U_{P_j}+\phi_{P_j})(V_{P_j}+\psi_{P_j})\\
&= I(U_{P_j},V_{P_j})+\frac{\epsilon}{2}\ds\int_{\R^N}(P(x)U_{P_j}^2+Q(x)V_{P_j}^2)
+\frac{1}{2}\ds\int_{\R^N}(|\nabla \phi_{P_j}|^2+\phi_{P_j}^2)\\
&\,\,\,\,\,\,+\frac{1}{2}\ds\int_{\R^N}(|\nabla \psi_{P_j}|^2+\psi_{P_j}^2)+\epsilon\ds\int_{\R^N}(P(x)U_{P_j}\phi_{P_j}+Q(x)V_{P_j}\psi_{P_j})\\
&\,\,\,\,\,\,+\frac{\epsilon}{2}\ds\int_{\R^N}(P(x)\phi_{P_j}^2+Q(x)\psi_{P_j}^2)
-\frac{3}{2}\ds\int_{\R^N}(U_{P_j}^2\phi_{P_j}^2+V_{P_j}^2\psi_{P_j}^2)\\
&\,\,\,\,\,\,-\ds\int_{\R^N}(U_{P_j}^2\phi_{P_j}^2+V_{P_j}^2\psi_{P_j}^2)-\frac{1}{4}\ds\int_{\R^N}(\phi_{P_j}^4+\psi_{P_j}^4)
-\beta\ds\int_{\R^N}\phi_{P_j}\psi_{P_j}\\
&\leq I(U_{P_j},V_{P_j})+\frac{\epsilon}{2}\ds\int_{\R^N}(P(x)U_{P_j}^2+Q(x)V_{P_j}^2)+O(\|(\phi_{P_j},\psi_{P_j})\|_{H^1(\R^N)}^2).\\
\end{align*}
By the decay assumptions on $P(x)$, $Q(x)$ and \eqref{5.3}, we have as $|P_j|\rightarrow\infty$,
$$\frac{\epsilon}{2}\ds\int_{\R^N}(P(x)U_{P_j}^2+Q(x)V_{P_j}^2)+O(\|(\phi_{P_j},\psi_{P_j})\|_{H^1(\R^N)}^2)\rightarrow0.$$
So it follows that
\begin{equation}\label{5.5}
\mathcal{R}_1=\ds\lim_{j\rightarrow\infty}\mathcal{N}(P_j) \leq I(U,V),
\end{equation}
which yields a contradiction to \eqref{5.4}. Then $\mathcal{R}_1$ can be attained at finite points.

Step 2: Assume that there exists $\bar{\textbf{P}}_m=(\bar{P}_1,\cdots,\bar{P}_m)\in \Omega_m$ such that
$\mathcal{R}_m=\mathcal{N}(\bar{\textbf{P}}_m)$. Next, we prove that there exists $\textbf{P}_{m+1}\in \Omega_{m+1}$
such that $\mathcal{R}_{m+1}$ can be attained. Let $(P_1^{n},\cdots,P_{m+1}^n)$ be a sequence such that
\begin{equation}\label{5.6}
\mathcal{R}_{m+1}=\ds\lim_{n\rightarrow\infty}\mathcal{N}(P_1^{n},\cdots,P_{m+1}^n).
\end{equation}
We claim that $(P_1^{n},\cdots,P_{m+1}^n)$ is bounded.
 Here we prove it by an indirect method. Without loss of generality, we assume
that $|P_{m+1}^n|\rightarrow \infty$ as $n\rightarrow\infty$. In the following, we omit the index $n$ for simplicity.
Noting that from $U_{P_{m+1}}=V_{P_{m+1}}$ and Lemma \ref{lem4.2}, we have
\begin{equation}\label{5.7}
\arraycolsep=1.5pt
\begin{array}{rl}\displaystyle
&\,\,\,\,\,\,\,\,\,\,\,\,\,J(u_{\textbf{P}_{m+1}},v_{\textbf{P}_{m+1}})
=J\Big(\Big(
\begin{array}{ccccc}
u_{\textbf{P}_{m}} \\[1mm]
v_{\textbf{P}_{m}}\\
\end{array}
\Big)+\Big(
\begin{array}{ccccc}
U_{P_{m+1}} \\[1mm]
V_{P_{m+1}}\\
\end{array}
\Big)+\Big(
\begin{array}{ccccc}
\varphi_{m+1,1} \\[1mm]
\varphi_{m+1,2}\\
\end{array}
\Big)\Big)\\
&=J\Big(\Big(
\begin{array}{ccccc}
u_{\textbf{P}_{m}} \\[1mm]
v_{\textbf{P}_{m}}\\
\end{array}
\Big)+\Big(
\begin{array}{ccccc}
U_{P_{m+1}} \\[1mm]
V_{P_{m+1}}\\
\end{array}
\Big)\Big)+\ds\sum_{j=1}^{m}\ds\sum_{k=1}^{N}c_{jk}\ds\int_{\R^N}D_{jk}\varphi_{m+1}
-\left \langle \bar{G},\varphi_{m+1}\right\rangle
\\
&\,\,\,\,\,\,\,\,
+O(\|\varphi_{m+1}\|_{H^1(\R^N)}^2)\\
&=J\Big(\Big(
\begin{array}{ccccc}
u_{\textbf{P}_{m}} \\[1mm]
v_{\textbf{P}_{m}}\\
\end{array}
\Big)+\Big(
\begin{array}{ccccc}
U_{P_{m+1}} \\[1mm]
V_{P_{m+1}}\\
\end{array}
\Big)\Big)+O\Big\{e^{-\sigma\mu}\ds\sum_{j=1}^mw(\gamma|P_{m+1}-P_j|)
\\
&\,\,\,\,\,\,\,\,+\epsilon^2\ds\int_{\R^N}(|P(x)|^2U_{m+1}^2+|Q(x)|^2V_{m+1}^2)
+\Big(\epsilon\ds\int_{\R^N}\Big(|P(x)|U_{m+1}+|Q(x)|V_{m+1}\Big)\Big)^2\Big\}.
\end{array}
\end{equation}

Next we estimate $J\Big(\Big(
\begin{array}{ccccc}
u_{\textbf{P}_{m}} \\[1mm]
v_{\textbf{P}_{m}}\\
\end{array}
\Big)+\Big(
\begin{array}{ccccc}
U_{P_{m+1}} \\[1mm]
V_{P_{m+1}}\\
\end{array}
\Big)\Big)$.
By direct computation, we can find
\begin{equation}\label{5.8}
\arraycolsep=1.5pt
\begin{array}{rl}\displaystyle
&\,\,\,\,\,\,\,\,\,\,\,\,\,J\Big(\Big(
\begin{array}{ccccc}
u_{\textbf{P}_{m}} \\[1mm]
v_{\textbf{P}_{m}}\\
\end{array}
\Big)+\Big(
\begin{array}{ccccc}
U_{P_{m+1}} \\[1mm]
V_{P_{m+1}}\\
\end{array}
\Big)\Big)\\[5mm]
&=J\Big(
\begin{array}{ccccc}
u_{\textbf{P}_{m}} \\[1mm]
u_{\textbf{P}_{m}}\\
\end{array}
\Big)+I(U_{P_{m+1}},V_{P_{m+1}})+\ds\int_{\R^N}(\nabla u_{\textbf{P}_{m}}\nabla U_{P_{m+1}}
+\nabla v_{\textbf{P}_{m}}\nabla V_{P_{m+1}})\\[5mm]
&\,\,\,\,\,\,\,\,+\ds\int_{\R^N}((1+\epsilon P(x)) u_{\textbf{P}_{m}} U_{P_{m+1}}+(1+\epsilon Q(x)) v_{\textbf{P}_{m}} V_{P_{m+1}})
-\ds\int_{\R^N} u_{\textbf{P}_{m}}^3 U_{P_{m+1}}\\[5mm]
&\,\,\,\,\,\,\,\,-\ds\int_{\R^N} v_{\textbf{P}_{m}}^3 V_{P_{m+1}}-\beta\ds\int_{\R^N} (u_{\textbf{P}_{m}} V_{P_{m+1}}+v_{\textbf{P}_{m}} U_{P_{m+1}})\\[5mm]
&\,\,\,\,\,\,\,\,-\ds\frac{1}{4}\ds\int_{\R^N} (6u_{\textbf{P}_{m}}^2 U_{P_{m+1}}^2+4u_{\textbf{P}_{m}} U_{P_{m+1}}^3)
-\frac{1}{4}\ds\int_{\R^N} (6v_{\textbf{P}_{m}}^2 V_{P_{m+1}}^2+4v_{\textbf{P}_{m}} V_{P_{m+1}}^3)\\[5mm]
&\,\,\,\,\,\,\,\,+\ds\frac{\epsilon}{2}\ds\int_{\R^N}(P(x) U_{P_{m+1}}^2+Q(x) V_{P_{m+1}}^2)\\[5mm]
&=J\Big(
\begin{array}{ccccc}
u_{\textbf{P}_{m}} \\[1mm]
u_{\textbf{P}_{m}}\\
\end{array}
\Big)+I(U_{P_{m+1}},V_{P_{m+1}})
+\ds\sum_{j=1}^{m}\ds\sum_{k=1}^{N}c_{jk}\Big \langle D_{jk},\Big(
\begin{array}{ccccc}
U_{P_{m+1}} \\[1mm]
V_{P_{m+1}}\\
\end{array}
\Big)\Big\rangle\\[5mm]
&\,\,\,\,\,\,\,\,+\ds\frac{\epsilon}{2}\ds\int_{\R^N}(P(x) U_{P_{m+1}}^2+Q(x) V_{P_{m+1}}^2)-\frac{1}{4}\ds\int_{\R^N} (6u_{\textbf{P}_{m}}^2 U_{P_{m+1}}^2+ 6v_{\textbf{P}_{m}}^2 V_{P_{m+1}}^2)\\[5mm]
&\,\,\,\,\,\,\,\,
-\ds\int_{\R^N} (u_{\textbf{P}_{m}} U_{P_{m+1}}^3+v_{\textbf{P}_{m}} V_{P_{m+1}}^3).
\end{array}
\end{equation}

By \eqref{4.13} and the definition of $D_{jk}$, we have
\begin{equation}\label{5.9}
\arraycolsep=1.5pt
\begin{array}{rl}\displaystyle
\,\,\,\,\,\ds\sum_{j=1}^{m}\ds\sum_{k=1}^{N}c_{jk}\Big \langle D_{jk},\Big(
\begin{array}{ccccc}
U_{P_{m+1}} \\[1mm]
V_{P_{m+1}}\\
\end{array}
\Big)\Big\rangle
\leq Ce^{-\sigma\mu}\ds\sum_{j=1}^mw(\gamma|P_{m+1}-P_j|).
\end{array}
\end{equation}
Similar to \eqref{4.7}, we can also get
$$
\int_{\R^N}( u_{\textbf{P}_{m}}^2 U_{P_{m+1}}^2+v_{\textbf{P}_{m}}^2 V_{P_{m+1}}^2)
\leq Ce^{-\sigma\mu}\ds\sum_{j=1}^mw(\gamma|P_{m+1}-P_j|).
$$
Moreover, by Lemma \ref{lem5.1}, we find
\begin{align*}
&\,\,\,\,\,\,\,\,\ds\int_{\R^N} (u_{\textbf{P}_{m}} U_{P_{m+1}}^3+v_{\textbf{P}_{m}} V_{P_{m+1}}^3)=2\ds\int_{\R^N}\ds\sum_{j=1}^mU_{P_{m+1}}^3U_{P_j}+\ds\int_{\R^N}(U_{P_{m+1}}^3\phi_{\textbf{P}_{m}}+V_{P_{m+1}}^3\psi_{\textbf{P}_{m}})\\
&=\ds\int_{\R^N}(U_{P_{m+1}}^3\phi_{\textbf{P}_{m}}+V_{P_{m+1}}^3\psi_{\textbf{P}_{m}})
+2\gamma^4\gamma_1\ds\sum_{j=1}^mw(\gamma|P_{m+1}-P_j|)+O(e^{-\sigma\mu})\ds\sum_{j=1}^mw(\gamma|P_{m+1}-P_j|).
\end{align*}

Therefore it follows from \eqref{5.8} that
\begin{equation}\label{5.10}
\arraycolsep=1.5pt
\begin{array}{rl}\displaystyle
&\,\,\,\,\,\,\,\,J\Big(\Big(
\begin{array}{ccccc}
u_{\textbf{P}_{m}} \\[1mm]
v_{\textbf{P}_{m}}\\
\end{array}
\Big)+\Big(
\begin{array}{ccccc}
U_{P_{m+1}} \\[1mm]
V_{P_{m+1}}\\
\end{array}
\Big)\Big)\\[5mm]
&\leq J\Big(
\begin{array}{ccccc}
u_{\textbf{P}_{m}} \\[1mm]
v_{\textbf{P}_{m}}\\
\end{array}
\Big)+I(U_{P_{m+1}},V_{P_{m+1}})
+\ds\frac{\epsilon}{2}\ds\int_{\R^N}(P(x) U_{P_{m+1}}^2+Q(x) V_{P_{m+1}}^2)\\
&\,\,\,\,\,\,\,-\ds\int_{\R^N}(U_{P_{m+1}}^3\phi_{\textbf{P}_{m}}+V_{P_{m+1}}^3\psi_{\textbf{P}_{m}})
-2\gamma^4\gamma_1\ds\sum_{i=1}^kw(\gamma|P_{m+1}-P_j|)\\
&\,\,\,\,\,\,\,+O(e^{-\sigma\mu}\ds\sum_{i=1}^kw(\gamma|P_{m+1}-P_j|)).
\end{array}
\end{equation}

Since $U_{\textbf{P}_{m}}=V_{\textbf{P}_{m}}$, by systems  \eqref{1.9} and \eqref{3.1}, we see that
\begin{align*}
&\,\,\,\,\,\,\,\,\,\ds\int_{\R^N}(U_{P_{m+1}}^3\phi_{\textbf{P}_{m}}+V_{P_{m+1}}^3\psi_{\textbf{P}_{m}})
=\ds\int_{\R^N}\Big\{\big(-\Delta+(1-\beta)\big)\Big(
\begin{array}{ccccc}
U_{P_{m+1}} \\[1mm]
V_{P_{m+1}}\\
\end{array}
\Big)\Big\}\Big(
\begin{array}{ccccc}
\phi_{\textbf{P}_{m}} \\[1mm]
\psi_{\textbf{P}_{m}}\\
\end{array}
\Big)\\
&=\ds\int_{\R^N}\Big\{(-\Delta+(1-\beta))\Big(
\begin{array}{ccccc}
\phi_{\textbf{P}_{m}} \\[1mm]
\psi_{\textbf{P}_{m}}\\
\end{array}
\Big)\Big\}\Big(
\begin{array}{ccccc}
U_{P_{m+1}} \\[1mm]
V_{P_{m+1}}\\
\end{array}
\Big)\\
&=\ds\int_{\R^N}\left(
\begin{array}{ccccc}
\ds\sum_{j=1}^{m}\ds\sum_{k=1}^{N}c_{jk}D_{jk,1}-\Big(\ds\sum_{j=1}^mU_{P_j}^3-u_{\textbf{P}_{m}}^3\Big)\\[3mm]
\ds\sum_{j=1}^{m}\ds\sum_{k=1}^{N}c_{jk}D_{jk,2}-\Big(\ds\sum_{j=1}^mV_{P_j}^3-v_{\textbf{P}_{m}}^3\Big)\\
\end{array}
\right)\Big(
\begin{array}{ccccc}
U_{P_{m+1}} \\[1mm]
V_{P_{m+1}}\\
\end{array}
\Big)\\[3mm]
&\,\,\,\,\,\,\,\,\,\,\,\,\,-\ds\int_{\R^N}\Big(
\begin{array}{ccccc}
\epsilon P(x)u_{\textbf{P}_{m}}\\[1mm]
\epsilon Q(x)v_{\textbf{P}_{m}}\\
\end{array}
\Big)\Big(
\begin{array}{ccccc}
U_{P_{m+1}} \\[1mm]
V_{P_{m+1}}\\
\end{array}
\Big).
\end{align*}

Using \eqref{3.6}, we have
\begin{align*}
&\,\,\,\,\,\,\,\Big|\ds\int_{\R^N}\epsilon P(x)u_{\textbf{P}_{m}}U_{P_{m+1}}\Big|=\Big|\ds\int_{\R^N}\epsilon P(x)\Big(\ds\sum_{j=1}^mU_{P_j}+\phi_{\textbf{P}_{m}}\Big)U_{P_{m+1}}\Big|\\
&\leq Ce^{-\sigma\mu}\ds\int_{\R^N}|P(x)|\ds\sum_{j=1}^me^{-\gamma|x-P_j|}U_{P_{m+1}}+C\|\phi_{\textbf{P}_{m}}\|_*\ds\int_{\R^N}|P(x)|\ds\sum_{j=1}^me^{-\nu\gamma|x-P_j|}U_{P_{m+1}}\\
&\leq Ce^{-\sigma\mu}\ds\sum_{j=1}^m\ds\int_{\R^N}|P(x)|e^{-\nu\gamma|x-P_j|}U_{P_{m+1}}
\end{align*}
and
$$
\Big|\ds\int_{\R^N}\Big(\ds\sum_{j=1}^mU_{P_j}^3-u_{\textbf{P}_{m}}^3\Big)U_{P_{m+1}}\Big|
\leq Ce^{-\sigma\mu}\ds\sum_{j=1}^mw(\gamma|P_{m+1}-P_j|).
$$
So from \eqref{5.9}, \eqref{5.10} and the above estimates, one has
\begin{equation}\label{5.11}
\arraycolsep=1.5pt
\begin{array}{rl}\displaystyle
&\,\,\,\,\,\,\,\,J\Big(\Big(
\begin{array}{ccccc}
u_{\textbf{P}_{m}} \\[1mm]
v_{\textbf{P}_{m}}\\
\end{array}
\Big)+\Big(
\begin{array}{ccccc}
U_{P_{m+1}} \\[1mm]
V_{P_{m+1}}\\
\end{array}
\Big)\Big)\\[5mm]
&\leq \mathcal{R}_m+I(U,V)+\ds\frac{\epsilon}{2}\ds\int_{\R^N}(P(x) U_{P_{m+1}}^2+Q(x) V_{P_{m+1}}^2)-2\gamma^4\gamma_1\ds\sum_{j=1}^mw(\gamma|P_{m+1}-P_j|)\\
&\,\,\,\,\,\,\,\,+
C\Big\{e^{-\sigma\mu}\ds\sum_{j=1}^m\ds\int_{\R^N}e^{-\nu\gamma|x-P_j|}\big(|P(x)U_{P_{m+1}}
+|Q(x)|V_{P_{m+1}}\big)
+e^{-\sigma\mu}\ds\sum_{j=1}^mw(\gamma|P_{m+1}-P_j|)\Big\}.
\end{array}
\end{equation}
As a result,
\begin{equation}\label{5.12}
\arraycolsep=1.5pt
\begin{array}{rl}\displaystyle
&J\Big(
\begin{array}{ccccc}
u_{\textbf{P}_{m+1}} \\[1mm]
v_{\textbf{P}_{m+1}}\\
\end{array}
\Big)\\&\leq \mathcal{R}_m+I(U,V)+\ds\frac{\epsilon}{2}\ds\int_{\R^N}(P(x) U_{P_{m+1}}^2+Q(x) V_{P_{m+1}}^2)-2\gamma^4\gamma_1\ds\sum_{j=1}^mw(\gamma|P_{m+1}-P_j|)\\
&\,\,\,\,\,\,\,\,+
C\Big\{e^{-\sigma\mu}\ds\sum_{j=1}^m\ds\int_{\R^N}e^{-\nu\gamma|x-P_j|}\big(|P(x)|U_{P_{m+1}}+|Q(x)|V_{P_{m+1}}\big)
+e^{-\sigma\mu}\ds\sum_{j=1}^mw(\gamma|P_{m+1}-P_j|)\\
&\,\,\,\,\,\,\,\,
+\Big(\ds\int_{\R^N}\epsilon(|P(x)|U_{P_{m+1}}+|Q(x)|V_{P_{m+1}})\Big)^2+\epsilon^2\ds\int_{\R^N}\big(|P(x)|^2U_{P_{m+1}}^2+|Q(x)|^2V_{P_{m+1}}^2\big)\Big\}.
\end{array}
\end{equation}
Since we assume that $|P_{m+1}^n|\rightarrow\infty$, we deduce that
\begin{align*}
&\,\,\,\,\,\,\,\,\epsilon\ds\int_{\R^N}(P(x) U_{P_{m+1}}^2+Q(x) V_{P_{m+1}}^2)+C\Big\{\epsilon^2\ds\int_{\R^N}(|P(x)|^2U_{P_{m+1}}^2+|Q(x)|^2V_{P_{m+1}}^2)\\
&\,\,\,\,\,\,\,\,+e^{-\sigma\mu}\ds\sum_{j=1}^m\ds\int_{\R^N}(|P(x)|e^{-\nu\gamma|x-P_j|}U_{P_{m+1}}+|Q(x)|e^{-\nu\gamma|x-P_j|}V_{P_{m+1}})\\
&\,\,\,\,\,\,\,\,+\Big(\epsilon\ds\int_{\R^N}(|P(x)|U_{P_{m+1}}+|Q(x)|V_{P_{m+1}})\Big)^2\Big\}\rightarrow0,\,\,\text{as}\,n\rightarrow\infty
\end{align*}
and
$$-2\gamma^4\gamma_1\ds\sum_{j=1}^mw(\gamma|P_{m+1}-P_j|)+O(e^{-\sigma\mu})\ds\sum_{j=1}^mw(\gamma|P_{m+1}-P_j|)<0.$$
Thus we can deduce
\begin{equation}\label{5.13}
\mathcal{R}_{m+1}\leq\mathcal{R}_m+I(U,V).
\end{equation}

On the other hand, by the assumption, $\mathcal{R}_m$ can be attained at $(\bar{P}_1,\cdots,\bar{P}_m)$.
So there exists other point $P_{m+1}$ which is far away from the $m$ points and determined later. Let us consider
the solution concentrating at the point $(\bar{P}_1,\cdots,\bar{P}_m,P_{m+1})$. We denote the solution by
$(u_{\bar{\textbf{P}}_{m},P_{m+1}},v_{\bar{\textbf{P}}_{m},P_{m+1}})$.
By similar argument as above, using the estimate \eqref{4.25} instead of \eqref{4.24}, we have
\begin{equation}\label{5.14}
\arraycolsep=1.5pt
\begin{array}{rl}\displaystyle
&\,\,\,\,\,\,\,\,J\Big(
\begin{array}{ccccc}
u_{\bar{\textbf{P}}_{m},P_{m+1}} \\[1mm]
v_{\bar{\textbf{P}}_{m},P_{m+1}}\\
\end{array}
\Big)\\[5mm]
&= J(u_{\bar{\textbf{P}}_{m}},v_{\bar{\textbf{P}}_{m}})
+I(U,V)+\ds\frac{\epsilon}{2}\ds\int_{\R^N}(P(x) U_{P_{m+1}}^2+Q(x) V_{P_{m+1}}^2)\\[3mm]
&\,\,\,\,\,\,\,\,-2\gamma^4\gamma_1\ds\sum_{j=1}^mw(\gamma|P_{m+1}-\bar{P}_{j}|)+
C\Big\{e^{-\sigma\mu}\ds\sum_{j=1}^m\ds\int_{\R^N}|P(x)|e^{-\nu\gamma|x-\bar{P}_{j}|}U_{P_{m+1}}\\[3mm]
&\,\,\,\,\,\,\,\,+e^{-\sigma\mu}\ds\sum_{j=1}^m\ds\int_{\R^N}|Q(x)|e^{-\nu\gamma|x-\bar{P}_{j}|}
V_{P_{m+1}}+e^{-\sigma\mu}\ds\sum_{j=1}^mw(\gamma|P_{m+1}-\bar{P}_{j}|)\\
&\,\,\,\,\,\,\,\,
+\Big(\epsilon\ds\sum_{j=1}^m\Big(\ds\int_{B_{\frac{\mu}{2\gamma}}(\bar{P}_{j})}|P(x)|^2U_{P_{m+1}}^2\Big)^{\frac{1}{2}}\Big)^2
+\Big(\epsilon\ds\sum_{j=1}^m\Big(\ds\int_{B_{\frac{\mu}{2\gamma}}(\bar{P}_{j})}|Q(x)|^2U_{P_{m+1}}^2\Big)^{\frac{1}{2}}\Big)^2\\[3mm]
&\,\,\,\,\,\,\,\,+\Big(\epsilon\ds\int_{\R^N}(|P(x)|U_{P_{m+1}}+|Q(x)|V_{P_{m+1}})\Big)^2\Big\}.
\end{array}
\end{equation}

By $(K_2)$, choosing $\epsilon>\bar{\epsilon}$ and $P_{m+1}$ such that $
|P_{m+1}|>> \frac{\max_{j=1}^m|\bar{P}_{j}|+\ln \epsilon}{\epsilon-\bar{\epsilon}},
$
 we can get that
\begin{align*}
&\,\,\,\,\,\,\,\,\frac{\epsilon}{2}\ds\int_{\R^N}(P(x) U_{P_{m+1}}^2+Q(x) V_{P_{m+1}}^2)
-2\gamma^4\gamma_1\ds\sum_{j=1}^mw(\gamma|P_{m+1}-\bar{P}_{j}|)\\
&\,\,\,\,\,\,\,+
C\Big\{e^{-\sigma\mu}\ds\sum_{j=1}^m\ds\int_{\R^N}(|P(x)|e^{-\nu\gamma|x-\bar{P}_{j}|}U_{P_{m+1}}+|Q(x)|e^{-\nu\gamma|x-\bar{P}_{j}|}
V_{P_{m+1}})\\
&\,\,\,\,\,\,\,\,+\Big(\epsilon\ds\int_{\R^N}(|P(x)|U_{P_{m+1}}+|Q(x)|V_{P_{m+1}})\Big)^2
+e^{-\sigma\mu}\ds\sum_{j=1}^mw(\gamma|P_{m+1}-\bar{P}_{j}|)\\
&\,\,\,\,\,\,\,\,
+\Big(\epsilon\ds\sum_{j=1}^m\Big(\ds\int_{B_{\frac{\mu}{2\gamma}}(\bar{P}_{j})}|P(x)|^2U_{P_{m+1}}^2\Big)^{\frac{1}{2}}\Big)^2
+\Big(\epsilon\ds\sum_{j=1}^m\Big(\ds\int_{B_{\frac{\mu}{2\gamma}}(\bar{P}_{j})}|Q(x)|^2U_{P_{m+1}}^2\Big)^{\frac{1}{2}}\Big)^2\Big\}\\
&\,\,\,\,\,\,\,\,
>Ce^{-\alpha\gamma|P_{m+1}|}-C\ds\sum_{j=1}^me^{-\nu\gamma|P_{m+1}-\bar{P}_{j}|}
>0.
\end{align*}
As s consequence, $$
\mathcal{R}_{m+1}\geq J\Big(
\begin{array}{ccccc}
u_{\bar{\textbf{P}}_{k},P_{m+1}} \\[1mm]
v_{\bar{\textbf{P}}_{k},P_{m+1}}\\
\end{array}
\Big)
>\mathcal{R}_m
+I(U,V),
$$
which contradicts to \eqref{5.13}. Then $\mathcal{R}_{m+1}$ can be attained at finite points in
$\Omega_{m+1}$.

Moreover, from the proof above, we can infer that
\begin{equation}\label{5.15}
\mathcal{R}_{m+1}\geq\mathcal{R}_m
+I(U,V).
\end{equation}

\end{proof}

Next we have the following proposition:

\begin{prop}\label{prop5.3}
 The maximization problem
 \begin{equation}\label{5.16}
 \ds\max_{\textbf{P}\in\bar{\Omega}_m}\mathcal{N}(\textbf{P})
 \end{equation}
 has a solution $\textbf{P}\in\Omega_m^0$, i.e., the interior of $\Omega_m$.

 \end{prop}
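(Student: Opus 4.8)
The plan is to combine the two facts established in Lemma~\ref{lem5.2}: that $\mathcal R_m$ is attained at some finite $\bar{\mathbf P}_m=(\bar P_1,\dots,\bar P_m)\in\Omega_m$, and that $\mathcal R_{m+1}>\mathcal R_m+I(U,V)$ for every $m\ge1$. Since $\mathcal N$ is continuous and $\Omega_m$ is closed (so $\bar\Omega_m=\Omega_m$), $\bar{\mathbf P}_m$ already maximizes $\mathcal N$ over $\bar\Omega_m$, and the only thing left is to rule out $\bar{\mathbf P}_m\in\partial\Omega_m=\{\mathbf P_m\in\Omega_m:\min_{j\neq k}|P_j-P_k|=\mu/\gamma\}$. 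For $m=1$ this is vacuous because $\Omega_1=\R^N$ is open. For $m\ge2$ I would argue by contradiction: assume $\bar{\mathbf P}_m\in\partial\Omega_m$, and after relabeling take $|\bar P_m-\bar P_{m-1}|=\mu/\gamma$. Then $\bar{\mathbf P}_{m-1}:=(\bar P_1,\dots,\bar P_{m-1})\in\Omega_{m-1}$ — the separation constraint for $\bar{\mathbf P}_{m-1}$ is inherited from the one for $\bar{\mathbf P}_m$ — so $\mathcal N(\bar{\mathbf P}_{m-1})\le\mathcal R_{m-1}$.

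Next I would feed the configuration $\bar{\mathbf P}_m=(\bar{\mathbf P}_{m-1},\bar P_m)$ into the energy estimate \eqref{5.12}, which holds for an arbitrary element of $\Omega_{m+1}$ (in the proof of Lemma~\ref{lem5.2} the assumption $|P_{m+1}|\to\infty$ is invoked only \emph{after} \eqref{5.12} has been derived). Replacing $m$ by $m-1$ and $P_{m+1}$ by $\bar P_m$, this gives
\begin{align*}
\mathcal R_m=\mathcal N(\bar{\mathbf P}_m)
&\le \mathcal R_{m-1}+I(U,V)+\tfrac{\epsilon}{2}\int_{\R^N}\big(P(x)U_{\bar P_m}^2+Q(x)V_{\bar P_m}^2\big)\\
&\qquad-2\gamma^4\gamma_1\sum_{j=1}^{m-1}w(\gamma|\bar P_m-\bar P_j|)+C\,\mathcal E,
\end{align*}
where $\mathcal E$ abbreviates the remaining $e^{-\sigma\mu}$-weighted interaction sums and the $O(\epsilon)$, $O(\epsilon^2)$ potential integrals on the right of \eqref{5.12}. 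The crucial comparison is now: because $w$ is radially strictly decreasing and $|\bar P_m-\bar P_{m-1}|=\mu/\gamma$, one has $\sum_{j=1}^{m-1}w(\gamma|\bar P_m-\bar P_j|)\ge w(\mu)$, whereas every term contained in $\tfrac{\epsilon}{2}\int(\cdots)+C\mathcal E$ is $o(w(\mu))$ as $\mu\to\infty$, uniformly in $m$ — this uses $\epsilon<e^{-2\mu}$, the boundedness of $P$ and $Q$, the separation $|\bar P_j-\bar P_k|\ge\mu/\gamma$, the choice $\nu>\tfrac12$, and the asymptotics $w(r)=A_N r^{-(N-1)/2}e^{-r}(1+O(1/r))$, exactly as in the derivation of \eqref{4.24}--\eqref{4.25}. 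Hence, for $\mu$ large, $\mathcal R_m\le\mathcal R_{m-1}+I(U,V)-\gamma^4\gamma_1 w(\mu)<\mathcal R_{m-1}+I(U,V)$, contradicting Lemma~\ref{lem5.2}(ii) in the form $\mathcal R_m>\mathcal R_{m-1}+I(U,V)$. Therefore $\bar{\mathbf P}_m\in\Omega_m^0$, which is the assertion.

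The step I expect to be the main obstacle is this last magnitude comparison: one must check term by term that each contribution collected in $\mathcal E$, and also the potential integral $\tfrac{\epsilon}{2}\int(PU_{\bar P_m}^2+QV_{\bar P_m}^2)$, is strictly smaller than the single leading interaction $w(\mu)$. This is exactly where the smallness $\epsilon<e^{-2\mu}$ and the condition $\nu>\tfrac12$ (which also underlies \eqref{4.23}) are needed, together with the exponential-times-polynomial profile of $w$; but the relevant bounds have already been produced inside the proofs of Lemma~\ref{lem4.2} and Lemma~\ref{lem5.2}, so once \eqref{5.12} is granted the remaining work is bookkeeping. The only other point needing a little care is the reduction itself — relabeling so that a minimal-distance pair involves $\bar P_m$, and checking $\bar{\mathbf P}_{m-1}\in\Omega_{m-1}$ — which is immediate.
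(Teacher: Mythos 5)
Your proposal is correct and follows essentially the same route as the paper: argue by contradiction from a boundary maximizer with one pair at distance exactly $\mu/\gamma$, apply the expansion \eqref{5.12} (the paper's \eqref{5.17}) with that point treated as the added spike, and let the single interaction term $-2\gamma^4\gamma_1 w(\mu)$ dominate the potential and error terms to contradict Lemma \ref{lem5.2}(ii). The only difference is one of detail: where you assert the error terms are $o(w(\mu))$ uniformly in $m$ by the separation condition, the paper makes this explicit with the packing count (at most $C_N=6^N$ neighboring balls), giving $\sum_{i\neq j}w(\gamma|\bar P_m-\bar P_i|)\leq Ce^{-\mu}$.
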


 \begin{proof}
We prove it by contradiction. If $\bar{\textbf{P}}_m=(\bar{P}_1,\cdots,\bar{P}_m)\in\partial \Omega_m$,
then there exists $(j,k)$ such that $|\bar{P}_j-\bar{P}_k|=\mu/\gamma$. Without loss of generality, we assume that
$(j,k)=(j,m)$. It follows from \eqref{5.12} that
\begin{equation}\label{5.17}
\arraycolsep=1.5pt
\begin{array}{rl}\displaystyle
&\,\,\,\,\,\,\,\mathcal{R}_m=J\Big(
\begin{array}{ccccc}
u_{\bar{\textbf{P}}_{m}} \\[1mm]
v_{\bar{\textbf{P}}_{m}}\\
\end{array}
\Big)\\[5mm]
&\leq \mathcal{R}_{m-1}+I(U,V)+\ds\frac{\epsilon}{2}\ds\int_{\R^N}(P(x) U_{P_{m+1}}^2+Q(x) V_{P_{m+1}}^2)-2\gamma^4\gamma_1\ds\sum_{i=1}^{m-1}w(\gamma|\bar{P}_{m}-\bar{P}_{i}|)\\
&\quad+
C\Big(e^{-\sigma\mu}\ds\sum_{i=1}^{m-1}\ds\int_{\R^N}\big(|P(x)|e^{-\nu\gamma|x-\bar{P}_{i}|}U_{\bar{P}_{m}}
+|Q(x)|e^{-\nu\gamma|x-\bar{P}_{i}|}V_{\bar{P}_{m}}\big)
+e^{-\sigma\mu}\ds\sum_{i=1}^{m-1}w(\gamma|\bar{P}_{m}-\bar{P}_{i}|)\\
&\,\,\,\,\,\,\,+\epsilon^2\ds\int_{\R^N}\big(|P(x)|^2U_{\bar{P}_{m}}^2+|Q(x)|^2V_{\bar{P}_{m}}^2\big)+\Big(\ds\int_{\R^N}\epsilon(|P(x)|U_{\bar{P}_{m}}
+|Q(x)|V_{\bar{P}_{m}})\Big)^2\Big).
\end{array}
\end{equation}

By the definition of the configuration set, we observe that given a ball of size $\mu$,
there are at most $C_N:=6^N$ number of non-overlapping balls of size $\mu$ surrounding this ball.
Using $|\bar{P}_j-\bar{P}_m|=\mu/\gamma$, we have
\begin{align*}
\ds\sum_{i=1}^{m-1}w(\gamma|\bar{P}_{m}-\bar{P}_{i}|)=w(\gamma|\bar{P}_{m}-\bar{P}_{j}|)
+\ds\sum_{i\neq j}w(\gamma|\bar{P}_{m}-\bar{P}_{i}|)\leq w(\mu)+Ce^{-\mu},
\end{align*}
since
\begin{align*}
\ds\sum_{i\neq j}w(\gamma|\bar{P}_{m}-\bar{P}_{i}|)
&\leq Ce^{-\mu}+C_Ne^{-\mu-\frac{1}{2}\mu}+\cdots+C_N^ie^{-\mu-\frac{i}{2}\mu}+\cdots\\
&\leq Ce^{-\mu}\ds\sum_{i=0}^\infty e^{i\ln C_N-\frac{1}{2}\mu}\leq Ce^{-\mu},
\end{align*}
if $C_N<e^{\frac{\mu}{2}}$ and $\mu$ large enough.

So,
\begin{align*}
\mathcal{R}_m
&\leq \mathcal{R}_{m-1}+I(U,V)+C\epsilon-2\gamma^4\gamma_1w(\mu)-2\gamma^4\gamma_1e^{-\mu}+O(e^{-\sigma\mu})w(\mu)+O(e^{-(1+\sigma)\mu})\\
&<\mathcal{R}_{m-1}+I(U,V),
\end{align*}
which yields a contradiction to Lemma \ref{lem5.2}. Then the proof is complete.

\end{proof}

Now we apply all the results obtained before to prove Theorem \ref{thm1.1}.

\begin{proof}[\textbf{Proof of Theorem \ref{thm1.1}}]
By Proposition \ref{prop3.3}, there exists $\mu_0$ such that for $\mu>\mu_0$, we have
a $C^1\times C^1$ map $(\phi_{\textbf{P}^0},\psi_{\textbf{P}^0})$ for any $\textbf{P}^0\in \Omega_m$ such that

\begin{equation}\label{6.1}
G\Big(
\begin{array}{ccccc}
U_{\textbf{P}^0}+\phi_{\textbf{P}^0} \\[1mm]
V_{\textbf{P}^0}+\psi_{\textbf{P}^0}\\
\end{array}
\Big)=\ds\sum_{s=1}^{m}\ds\sum_{l=1}^{N}c_{sl}D_{sl},\,\,\Big \langle \Big(
\begin{array}{ccccc}
\phi_{\textbf{P}^0} \\[1mm]
\psi_{\textbf{P}^0}\\
\end{array}
\Big),D_{sl}\Big\rangle=0,
\end{equation}
for some constants $\{c_{sl}\}\in\R^{m\times N}$.

From Proposition \ref{prop5.3}, there is $\textbf{P}^0\in \Omega_m^0$
that achieves the maximum of $\mathcal{N}(\textbf{P})$ given by Lemma \ref{lem5.2}.
Letting $\Big(
\begin{array}{ccccc}
u_{\textbf{P}^0} \\[1mm]
v_{\textbf{P}^0}\\
\end{array}
\Big)=\Big(
\begin{array}{ccccc}
U_{\textbf{P}^0}+\phi_{\textbf{P}^0} \\[1mm]
V_{\textbf{P}^0}+\psi_{\textbf{P}^0}\\
\end{array}
\Big)$, we have
$$
D_{P_{jk}}|_{P_j=P_j^0}\mathcal{N}(\textbf{P}^0)=0,\,\,j=1,\cdots,m,k=1,\cdots,N.
$$
Hence we have
\begin{align*}
&\,\,\,\,\,\,\,\,\,\,\,\,\,\ds\int_{\R^N}\Big(\nabla u_{\textbf{P}}\nabla\frac{\partial(U_{\textbf{P}}+\phi_{\textbf{P}})}{\partial P_{jk}}\Big{|}_{P_j=P_j^0}
+(1+\epsilon P(x))u_{\textbf{P}}\frac{\partial(U_{\textbf{P}}+\phi_{\textbf{P}})}{\partial P_{jk}}\Big{|}_{P_j=P_j^0}\Big)\\
&\,\,\,\,\,\,\,\,\,\,\,\,\,+\ds\int_{\R^N}\Big(\nabla v_{\textbf{P}}\nabla\frac{\partial(V_{\textbf{P}}+\psi_{\textbf{P}})}{\partial P_{jk}}\Big{|}_{P_j=P_j^0}
+(1+\epsilon Q(x))v_{\textbf{P}}\frac{\partial(V_{\textbf{P}}+\psi_{\textbf{P}})}{\partial P_{jk}}\Big{|}_{P_j=P_j^0}\Big)\\
&\,\,\,\,\,\,\,\,\,\,\,\,\,-\ds\int_{\R^N}\Big(u_{\textbf{P}}^3\frac{\partial(U_{\textbf{P}}+\phi_{\textbf{P}})}{\partial P_{jk}}\Big{|}_{P_j=P_j^0}
+v_{\textbf{P}}^3\frac{\partial(V_{\textbf{P}}+\psi_{\textbf{P}})}{\partial P_{jk}}\Big{|}_{P_j=P_j^0}\Big)\\
&\,\,\,\,\,\,\,\,\,\,\,\,\,-\beta\ds\int_{\R^N}\Big(v_{\textbf{P}}\frac{\partial(U_{\textbf{P}}+\phi_{\textbf{P}})}{\partial P_{jk}}\Big{|}_{P_j=P_j^0}
+u_{\textbf{P}}\frac{\partial(V_{\textbf{P}}+\psi_{\textbf{P}})}{\partial P_{jk}}\Big{|}_{P_j=P_j^0}\Big)=0,
\end{align*}
which yields that
\begin{equation}\label{6.2}
\ds\sum_{s=1}^{m}\ds\sum_{l=1}^{N}c_{sl}\ds\int_{\R^N}\Big(D_{sl,1}\frac{\partial(U_{\textbf{P}}+\phi_{\textbf{P}})}{\partial P_{jk}}\Big{|}_{P_j=P_j^0}
+D_{sl,2}\frac{\partial(V_{\textbf{P}}+\psi_{\textbf{P}})}{\partial P_{jk}}\Big{|}_{P_j=P_j^0}\Big)=0.
\end{equation}

We claim that \eqref{6.2} is a diagonally dominant system. Indeed, since
$$\ds\int_{\R^N}\big(\phi_{\textbf{P}}D_{sl,1}+\psi_{\textbf{P}}D_{sl,2}\big)|_{P_j=P_j^0}=0,$$
we have
$$
\ds\int_{\R^N}\Big(D_{sl,1}\frac{\partial\phi_{\textbf{P}}}{\partial P_{jk}}\Big{|}_{P_j=P_j^0}
+D_{sl,2}\frac{\partial\psi_{\textbf{P}}}{\partial P_{jk}}\Big{|}_{P_j=P_j^0}\Big)=0,\,\,\text{if}\,\,s\neq j.
$$
For $s=j$, we can get that
\begin{align*}
&\,\,\,\,\,\,\Big|\ds\int_{\R^N}\Big(D_{sl,1}\frac{\partial\phi_{\textbf{P}}}{\partial P_{jk}}\Big{|}_{P_j=P_j^0}
+D_{sl,2}\frac{\partial\psi_{\textbf{P}}}{\partial P_{jk}}\Big{|}_{P_j=P_j^0}\Big)\Big|\\
&=\Big|-\ds\int_{\R^N}\Big(\phi_{\textbf{P}}\frac{\partial D_{sl,1}}{\partial P_{jk}}\Big{|}_{P_j=P_j^0}
+\psi_{\textbf{P}}\frac{\partial D_{sl,2}}{\partial P_{jk}}\Big{|}_{P_j=P_j^0}\Big)\Big|\leq C\|(\phi_{\textbf{P}},\psi_{\textbf{P}})\|_*\leq Ce^{-\sigma\mu}.
\end{align*}
For $s\neq j$, we have
\begin{align*}
&\,\,\,\,\,\,\,\,\,\,\,\,\,\Big|\ds\int_{\R^N}\Big(D_{sl,1}\frac{\partial U_{\textbf{P}}}{\partial P_{jk}}
+D_{sl,2}\frac{\partial V_{\textbf{P}}}{\partial P_{jk}}\Big)\Big|\\
&\leq C\Big|\ds\int_{\R^N}\Big(\frac{\partial U(x-P_s)}{\partial x_l}\frac{\partial U(x-P_j)}{\partial x_k}
+\frac{\partial V(x-P_s)}{\partial x_l}\frac{\partial V(x-P_j)}{\partial x_k}\Big)\Big|\\
&\leq C\ds\int_{\R^N}e^{-\gamma|x-P_s|}e^{-\gamma|x-P_j|}\leq Ce^{-\frac{\gamma|P_j-P_s|}{2}}\leq Ce^{-\frac{\mu}{2}}.
\end{align*}
For $s=j$, letting $y=x-P_j$, we have
\begin{equation}\label{6.3}
\arraycolsep=1.5pt
\begin{array}{rl}\displaystyle
&\,\,\,\,\,\,\,\,\,\,\,\,\,\ds\int_{\R^N}\Big(D_{sl,1}\frac{\partial U_{\textbf{P}}}{\partial P_{jk}}
+D_{sl,2}\frac{\partial V_{\textbf{P}}}{\partial P_{jk}}\Big)\\[5mm]
&=\ds\int_{\R^N}\Big(\zeta_j\frac{\partial U(x-P_j)}{\partial x_l}\frac{\partial U(x-P_j)}{\partial P_{jk}}
+\zeta_j\frac{\partial V(x-P_j)}{\partial x_l}\frac{\partial V(x-P_j)}{\partial P_{jk}}\Big)\\[5mm]
&=-\ds\int_{B_{\frac{\mu^2}{2(\mu+1)\gamma}}(0)}\Big(\zeta_j(y+P_j)\frac{\partial U(y)}{\partial y_l}\frac{\partial U(y)}{\partial y_k}
+\zeta_j(y+P_j)\frac{\partial V(y)}{\partial y_l}\frac{\partial V(y)}{\partial y_k}\Big)\\[8mm]
&=-\delta_{lk}\ds\int_{\R^N}\Big((\frac{\partial U}{\partial y_l})^2+(\frac{\partial V}{\partial y_k})^2\Big)+O(e^{-\sigma\mu}).
\end{array}
\end{equation}

Thus, from each $(s,l)$, the off-diagonal term gives
\begin{equation}\label{6.4}
\arraycolsep=1.5pt
\begin{array}{rl}\displaystyle
&\,\,\,\,\,\,\,\ds\sum_{s\neq j}\ds\int_{\R^N}\Big(D_{sl,1}\frac{\partial(U_{\textbf{P}}+\phi_{\textbf{P}})}{\partial P_{jk}}\Big{|}_{P_j=P_j^0}
+D_{sl,2}\frac{\partial(V_{\textbf{P}}+\psi_{\textbf{P}})}{\partial P_{jk}}\Big{|}_{P_j=P_j^0}\Big)\\[5mm]
&\,\,\,\,\,\,\,+\ds\sum_{s=j,l\neq k}\ds\int_{\R^N}\Big(D_{sl,1}\frac{\partial(U_{\textbf{P}}+\phi_{\textbf{P}})}{\partial P_{jk}}\Big{|}_{P_j=P_j^0}
+D_{sl,2}\frac{\partial(V_{\textbf{P}}+\psi_{\textbf{P}})}{\partial P_{jk}}\Big{|}_{P_j=P_j^0}\Big)\\[5mm]
&\,\,\,\,\,\,\,=O(e^{-\frac{\mu}{2}})+O(e^{-\sigma\mu})=O(e^{-\sigma\mu})
\end{array}
\end{equation}
for some $\sigma>0$. So from \eqref{6.3} and \eqref{6.4}, we see that $c_{sl}=0$ for $s=1,\cdots,m,l=1,\cdots,N$. Hence
$(
u_{\textbf{P}^0},
v_{\textbf{P}^0})$ is a solution of \eqref{1.1}. By the maximum principle, it is easy to see that $u_{\textbf{P}^0}>0$
and $v_{\textbf{P}^0}>0$. This concludes the proof of Theorem \ref{thm1.1}.

\end{proof}

\textbf{Acknowledgements}
The authors are very grateful to the referee's thoughtful
reading of details of the paper and nice suggestions to
improve the result.
This work was partially supported by NSFC (No.11301204; No.11371159),
the phD specialized grant
of the
 Ministry of Education of
 China(20110144110001),
 self-determined research funds of CCNU from the colleges' basic research and operation of MOE (CCNU14A05036) and the excellent doctorial dissertation cultivation grant from Central
China Normal University (2013YBZD15).

\end{document}